\documentclass[aps,pra,reprint,nofootinbib,showpacs,superscriptaddress]{revtex4-1}

\usepackage{amsmath, amssymb, amsthm}
\usepackage{bbm}
\usepackage{dsfont}
\usepackage{mathrsfs}
\usepackage{mathbbol}
\usepackage{stmaryrd}
\usepackage{physics}
\usepackage{enumitem}
\DeclareSymbolFont{largesymbolsA}{U}{txexa}{m}{n}
\DeclareMathSymbol{\varprod}{\mathop}{largesymbolsA}{16}

\DeclareFontFamily{U}{mathx}{\hyphenchar\font45}
\DeclareFontShape{U}{mathx}{m}{n}{
  <5> <6> <7> <8> <9> <10>
  <10.95> <12> <14.4> <17.28> <20.74> <24.88>
  mathx10
}{}
\DeclareSymbolFont{mathx}{U}{mathx}{m}{n}
\DeclareMathSymbol{\bigtimes}{1}{mathx}{"91}

\usepackage{setspace}
\allowdisplaybreaks

\usepackage{graphicx}
\graphicspath{{images/}}
\usepackage{tabularx} 

\usepackage{tikz}
\usepackage{tikz-cd}
\usetikzlibrary{knots}

\usepackage[all]{xy}
\entrymodifiers={!!<0pt,0.7ex>+}

\usepackage{xcolor}
\usepackage{enumitem}
\usepackage{booktabs}
\usepackage{afterpage}
\usepackage{shuffle}

\newtheorem{theorem}{Theorem}[section]

\newtheorem{corollary}[theorem]{Corollary}
\newtheorem{definition}[theorem]{Definition}
\newtheorem{axiom}{Axiom}[section]
\newtheorem{example}[theorem]{Example}
\newtheorem{remark}[theorem]{Remark}

\newcommand{\Hilbert}{\mathcal{H}}
\newcommand{\Proj}{\mathcal{P}}

\newcommand{\eq}[1][r]{%
  \ar@<-3pt>@{-}[#1]
  \ar@<-1pt>@{}[#1]|<{}="gauche"
  \ar@<+0pt>@{}[#1]|-{}="milieu"
  \ar@<+1pt>@{}[#1]|>{}="droite"
  \ar@/^2pt/@{-}"gauche";"milieu"
  \ar@/_2pt/@{-}"milieu";"droite"
}

\newcommand{\bigon}[4][r]{%
  \ar@/^1pc/[#1]^{#2}_*=<0.3pt>{}="HAUT"
  \ar@/_1pc/[#1]_{#3}^*=<0.3pt>{}="BAS"
  \ar@{=>} "HAUT";"BAS" ^{#4}
}

\newcommand{\bigons}[6][r]{%
  \ar@/^2pc/[#1]^{#2}_*=<0.3pt>{}="HAUT"
  \ar@{}[#1]^*=<0.3pt>{}="MIDH"_*=<0.3pt>{}="MIDB"
  \ar[#1]_(0.3){#3}
  \ar@/_2pc/[#1]_{#4}^*=<0.3pt>{}="BAS"
  \ar@{=>} "HAUT";"MIDH" ^{#5}
  \ar@{=>} "MIDB";"BAS" ^{#6}
}
\usepackage{natbib}
\usepackage{url}
\usepackage{hyperref}

\begin{document}
\title{Axiomatic Foundation of Quantum-Inspired Distance Metrics}
\author{Maryam Bagherian}
\email{maryambagherian@isu.edu} 
\affiliation{Department of Mathematics and Statistics, \\Idaho State University}

\begin{abstract}
We develop a comprehensive axiomatic framework for quantum-inspired distance metrics on projective Hilbert spaces, providing a unified foundation that organizes and generalizes existing measures in quantum information theory. Starting from five fundamental axioms—projective invariance, unitary covariance, superposition sensitivity, entanglement awareness, and measurement contextuality—we show that any admissible distance depends solely on state overlap and establish the uniqueness of the Fubini–Study metric as the canonical geodesic distance. Our framework further yields a hierarchy of comparison results relating the Fubini–Study metric, Bures distance, Euclidean distance, measurement-based pseudometrics, and entanglement-sensitive distances. Key contributions include an entanglement-geometry complementarity principle, high-dimensional concentration bounds, and operational interpretations connecting distances to state discrimination and quantum metrology. This work places the geometry of quantum state spaces on a rigorous axiomatic footing, bridging abstract metric theory, information geometry, and operational quantum principles.
\end{abstract}
\maketitle

\section{Introduction}

The mathematical foundation of quantum mechanics is built upon the postulate that the state space of any isolated physical system is a Hilbert space $\Hilbert$. A system at a given instant is fully described by a {state vector} $|\psi\rangle \in \Hilbert$, which is a unit vector defined up to a global phase $e^{i\theta}$, for some $\theta \in \mathbb{R}$ \cite{nielsen2010quantum}.

For an $n$-partite system, the Hilbert space is a tensor product $\Hilbert_1 \otimes \cdots \otimes \Hilbert_n$, and any pure state can be expanded in an orthonormal basis $\{|i\rangle\}$ as
\begin{equation}\notag
|\psi\rangle = \sum_{i=1}^{d} \alpha_i |i\rangle, \quad \sum_i |\alpha_i|^2 = 1,
\end{equation}
with complex coefficients $\alpha_i \in \mathbb{C}$ \cite{nielsen2010quantum, hayashi2017quantum}. From a data-science perspective, this is equivalent to embedding classical data points into a high-dimensional complex vector space, where quantum amplitudes encode richer structural information than classical feature vectors \cite{schuld2018supervised, wittek2014quantum}.

A central feature of multipartite systems is the distinction between separable and entangled states:

\begin{definition}[Separable and Entangled States]
A state $|\psi\rangle \in \Hilbert_1 \otimes \cdots \otimes \Hilbert_n$ is \emph{separable} if it can be written as $|\psi\rangle = |v_1\rangle \otimes \cdots \otimes |v_n\rangle$ for some $|v_i\rangle \in \Hilbert_i$. Otherwise, it is \emph{entangled}.
\end{definition}

Understanding distances between quantum states is crucial for tasks ranging from state discrimination and quantum metrology to quantum machine learning. Existing metrics, such as trace distance, fidelity, Bures distance, and the Fubini–Study metric, capture aspects of distinguishability and geometry \cite{wilde2013, bengtsson2017geometry}, but they have largely been studied in isolation, without a unifying mathematical framework.

In this work, we introduce an axiomatic framework for quantum-inspired distance metric functions. These axioms are grounded in fundamental quantum principles:
\begin{enumerate}[label=(\roman*)]
    \item {Superposition:} Distances must account for both amplitudes and relative phases in coherent superpositions.
    \item {Entanglement:} Measures should reflect non-classical correlations in composite systems.
    \item {Measurement Contextuality:} Distances may depend on the basis or measurement context, consistent with probabilistic collapse.
    \item {Phase Sensitivity:} While global phases are unobservable, relative phases are physically significant and must influence the distance.
\end{enumerate}
By formalizing these principles, our framework allows us to (i) characterize existing quantum distances within a unified axiomatic system, (ii) prove the uniqueness of the Fubini–Study metric under natural geometric constraints, and (iii) establish new comparison inequalities, entanglement-sensitive bounds, and operational interpretations relevant to quantum information tasks.

This approach provides both conceptual clarity and practical guidance for constructing distance measures that respect the intrinsic geometry and physics of quantum state spaces, with applications spanning quantum information theory, quantum metrology, and quantum machine learning. 

The rest of this manuscript is organized as the following: Section~\ref{sec:found} established the necessary mathematical framework for quantum state spaces and Section~\ref{sec:axiom} outlines the proposed axiomatic framework for characterization of quantum-inspired distances, followed by Section~\ref{sec:thm} where we present the central mathematical results of this manuscript. In Section~\ref{sec:rel}, we highlight the connections of the proposed framework to classical axiomatic distance theory, and we conclude in Section~\ref{sec:open} to outline the limitations of the proposed framework and pose a few open problems for future directions.

\section{Foundations of Quantum-Inspired Distance Metric Functions}
\label{sec:found}

This section provides the mathematical framework underlying distance
functions on quantum state spaces. We first recall the structure of
pure-state quantum mechanics in finite-dimensional Hilbert spaces.
We then outline structural constraints that any quantum-inspired
distance function must satisfy, and finally organize the principal
families of quantum distances into a coherent hierarchy.

\subsection{The Geometry of Pure Quantum States}

Throughout this section, let $\mathcal{H}$ be a finite-dimensional
complex Hilbert space of dimension $d$.
The mathematical formulation of quantum mechanics is based on
Hilbert spaces equipped with an inner product
\cite{vonNeumann1932,nielsen2010quantum,bengtsson2017geometry}.

\begin{definition}[Hilbert Space]
A \emph{Hilbert space} $\mathcal{H}$ is a complex vector space endowed
with an inner product
$\langle \cdot,\cdot \rangle : \mathcal{H}\times\mathcal{H}\to\mathbb{C}$
that is linear in its second argument, conjugate symmetric,
and positive definite. The induced norm is
\[
\|\psi\| = \sqrt{\langle\psi,\psi\rangle}.
\]
\end{definition}

Physical pure states are not vectors themselves but rays in
$\mathcal{H}$, reflecting global phase invariance
\cite{vonNeumann1932,peres1997quantum}.

\begin{definition}[Projective Hilbert Space]
Two nonzero vectors $\ket{\psi},\ket{\varphi}\in\mathcal{H}$
are equivalent if
\(
\ket{\varphi}=e^{i\theta}\ket{\psi},\) for some \(\theta\in\mathbb{R}.\)
The space of equivalence classes is the
\emph{projective Hilbert space}
\begin{equation}\notag
\Proj(\mathcal{H})
=
(\mathcal{H}\setminus\{0\})/\sim .
\end{equation}
Elements $[\psi]\in\Proj(\mathcal{H})$ represent pure quantum states.
\end{definition}

In finite dimension,
$\Proj(\mathcal{H})\cong\mathbb{C}P^{d-1}$,
a compact Kähler manifold
\cite{bengtsson2017geometry}. For composite systems,
$\mathcal{H}=\mathcal{H}_A\otimes\mathcal{H}_B$,
and pure states may be separable or entangled
\cite{nielsen2010quantum}.

\begin{definition}[Separable and Entangled States]
A pure state $\ket{\psi}\in\mathcal{H}_A\otimes\mathcal{H}_B$
is \emph{separable} if
\(\ket{\psi}=\ket{\psi_A}\otimes\ket{\psi_B}.
\)
Otherwise, it is \emph{entangled}.
\end{definition}

\subsection{Structural Constraints on Quantum Distance Functions}

Let
\(
d:\Proj(\mathcal{H})\times\Proj(\mathcal{H})
\to\mathbb{R}_{\ge 0}
\)
be a candidate quantum distance function.
Fundamental principles of quantum mechanics impose structural
constraints on admissible distances.

\paragraph{(i) Projective and Unitary Invariance.} Physical predictions are invariant under global phase and
unitary evolution \cite{peres1997quantum,nielsen2010quantum}.
Thus any admissible distance must satisfy
\begin{align}\notag
d([e^{i\theta}\psi],[\varphi])
&=
d([\psi],[\varphi]),\\\notag
d([U\psi],[U\varphi])
&=
d([\psi],[\varphi]),
\qquad
\forall U\in\mathcal{U}(\mathcal{H}).
\end{align}

\paragraph{(ii) Operational Consistency.}

Quantum measurements are described by positive operator-valued measures (POVMs)
$\{M_m\}$ satisfying
\(\sum_m M_m^\dagger M_m = \mathbb{I}\) \cite{Holevo2011,nielsen2010quantum}.

The \emph{Born rule} prescribes the probability of outcome \(m\) when measuring
state $\rho_\psi$
\[
p_\psi(m) = \operatorname{Tr}(\rho_\psi M_m^\dagger M_m).
\]
Two states are said to be \emph{operationally indistinguishable} if they
induce identical probability distributions for all POVMs, i.e., \(p_\psi(m) = p_\varphi(m) \quad \forall m, \forall \text{POVMs}.
\)
Accordingly, a quantum distance function must satisfy
\(
d([\psi],[\varphi]) = 0\) if and only if the states are operationally indistinguishable.

\paragraph{(iii) Contractivity Under Quantum Channels.}

Let $d$ be a distance function on the set of quantum states
$\mathcal{D}(\mathcal{H})$ (density matrices). Physical evolution
of mixed states is described by completely positive trace-preserving
(CPTP) maps $\Phi$. The principle that information-processing cannot increase
distinguishability requires that any admissible distance satisfy
\cite{petz1996monotone}
\[
d(\Phi(\rho),\Phi(\sigma)) \le d(\rho,\sigma), \quad
\forall \rho,\sigma \in \mathcal{D}(\mathcal{H}).
\]
Distances satisfying this property are called \emph{monotone quantum metrics}.

\subsection{A Hierarchy of Quantum Distance Functions}

Distance measures on $\Proj(\mathcal{H})$ can be organized
according to the structural principles they encode
\cite{bengtsson2017geometry,nielsen2010quantum}.

\paragraph{(i) Geometric Distances}
These distances arise from the differential geometry of projective space.

\begin{definition}[Hilbert-Space Distance]
For normalized vectors,
\[
d_{\mathrm{HS}}(\psi,\varphi)
=
\|\ket{\psi}-\ket{\varphi}\|.
\]
Although this is a metric on the unit sphere, it is not phase-invariant
and therefore does not descend to $\Proj(\mathcal{H})$.
\end{definition}

\begin{definition}[Transition Probability]
\[
P(\psi,\varphi)=|\langle\psi|\varphi\rangle|^2
\]
is phase-invariant and equals the Born probability of obtaining $\ket{\varphi}$ when measuring $\ket{\psi}$
\cite{peres1997quantum}.
\end{definition}

\begin{definition}[Fubini--Study metric]\label{def:fubini_study}
The \emph{Fubini--Study metric} is the unique (up to an overall scale)
unitarily invariant K\"ahler Riemannian metric on $\mathcal P(\mathcal H)$. For a normalized representative $\psi \in \mathcal H$ 
($\langle \psi|\psi\rangle = 1$), the metric tensor is given by
\[
g_{\mathrm{FS},\psi}(\delta\psi,\delta\psi)
=
\langle \delta\psi | \delta\psi \rangle
-
|\langle \psi | \delta\psi \rangle|^2,
\]
where $\delta\psi$ is taken orthogonal to $\psi$
(i.e.\ $\langle \psi|\delta\psi\rangle = 0$). The induced geodesic distance between rays
$[\psi],[\varphi] \in \mathcal P(\mathcal H)$ is
\[
d_{\mathrm{FS}}([\psi],[\varphi])
=
\arccos\!\left(
\frac{|\langle\psi|\varphi\rangle|}
{\|\psi\|\,\|\varphi\|}
\right),
\qquad
0 \le d_{\mathrm{FS}} \le \frac{\pi}{2}.
\]
In particular, for normalized vectors,
\(
d_{\mathrm{FS}}([\psi],[\varphi])
=
\arccos\big(|\langle\psi|\varphi\rangle|\big)
\) \cite{bengtsson2017geometry}.
\end{definition}

\paragraph{(ii) Operational Distances}

\begin{definition}[Trace Distance]
For quantum states $\rho,\sigma \in \mathcal{D}(\mathcal{H})$, the \emph{trace distance} is
\cite{nielsen2010quantum,Holevo2011}:
\[
D_{\mathrm{tr}}(\rho,\sigma) = \frac12 \|\rho-\sigma\|_1.
\]
It quantifies the operational distinguishability of two states via measurement outcomes.
\end{definition}

\begin{remark}[Trace Distance for Pure States and Helstrom Bound]
For pure states $\rho_\psi=\ket{\psi}\!\bra{\psi}$ and $\rho_\varphi=\ket{\varphi}\!\bra{\varphi}$:
\[
D_{\mathrm{tr}}(\rho_\psi,\rho_\varphi) = \sqrt{1 - |\langle\psi|\varphi\rangle|^2}.
\]
Helstrom’s theorem \cite{helstrom1969quantum} states that, for equal prior probabilities, the optimal success probability of distinguishing these states is
\[
P_{\mathrm{succ}} = \frac12 \left(1 + D_{\mathrm{tr}}(\rho_\psi,\rho_\varphi)\right)
= \frac12 \left(1 + \sin d_{\mathrm{FS}}([\psi],[\varphi])\right),
\]
showing that operational distinguishability is a monotone function of the Fubini–Study distance.
\end{remark}

\paragraph{(iii) Fidelity and Bures Distance}

\begin{definition}[Fidelity]
For pure states, the \emph{fidelity} is defined as \cite{Uhlmann1976}:
\[
F(\psi,\varphi) = |\langle\psi|\varphi\rangle|^2.
\]
Fidelity measures the “closeness” of two quantum states, taking values in $[0,1]$.
\end{definition}

\begin{definition}[Bures Distance]\label{def:bures}
The \emph{Bures distance} between pure states is defined as \cite{Bures1969}:
\[
d_{\mathrm{B}}(\psi,\varphi) = \sqrt{2\left(1-|\langle\psi|\varphi\rangle|\right)}
= 2\sin\frac{d_{\mathrm{FS}}([\psi],[\varphi])}{2}.
\]
It is a proper metric on the projective Hilbert space and is unitarily invariant.
\end{definition}

\begin{remark}[Relation to Trace Distance]
Fidelity and trace distance satisfy the Fuchs--van de Graaf inequalities \cite{fuchs2002cryptographic}:
\[
1-\sqrt{F(\rho,\sigma)} \le D_{\mathrm{tr}}(\rho,\sigma) \le \sqrt{1-F(\rho,\sigma)}.
\]
\end{remark}

\paragraph{(iv) Entanglement-Sensitive Distances}

\begin{definition}[Entanglement of Bipartite Pure States]
For a bipartite pure state $\ket{\psi} \in \mathcal{H}_A \otimes \mathcal{H}_B$,
the \emph{entanglement} is quantified by the von Neumann entropy of the reduced state \cite{nielsen2010quantum}:
\begin{align}
E(\psi) &= S(\rho_A) = -\operatorname{Tr}(\rho_A \log \rho_A),
\nonumber \\
\rho_A &= \operatorname{Tr}_B \ket{\psi}\!\bra{\psi}.\nonumber 
\end{align}
\end{definition}

\begin{definition}[Entanglement-Sensitive Distance]
Given two bipartite pure states $[\psi],[\varphi] \in \Proj(\mathcal{H}_A \otimes \mathcal{H}_B)$,
the \emph{entanglement-sensitive distance} is
\[
d_E([\psi],[\varphi]) =
\sqrt{ d_{\mathrm{FS}}([\psi],[\varphi])^2 + |E(\psi)-E(\varphi)|^2 }.
\]
\end{definition}

\begin{remark}[Metric Properties]
Since the von Neumann entropy is Lipschitz-continuous with respect to the trace distance
(Fannes--Audenaert inequality \cite{fannes1973continuity,audenaert2007sharp}) and
trace distance is equivalent to Bures distance in finite dimensions,
$d_E$ defines a proper metric on the space of finite-dimensional bipartite pure states.
\end{remark}


\paragraph{(v) Computational Considerations}

For general $n$-qubit states,
inner-product computation scales as $O(2^n)$.
For matrix product states with bond dimension $\chi$,
tensor contraction scales as $O(n\chi^3)$
\cite{Orus2014}.
Estimating fidelity to precision $\epsilon$
requires $O(1/\epsilon^2)$ samples by standard
concentration bounds.


This hierarchy reflects progressively stronger incorporation of
quantum-mechanical structure: from projective symmetry,
to operational distinguishability, to full
information-theoretic consistency.

\subsection{Illustrative Examples}
\begin{example}[Quantum State Discrimination]

The operational significance of distances is illustrated by
state discrimination. For two pure states $\rho_\psi$ and $\rho_\phi$
prepared with equal prior probabilities, Helstrom's theorem
\cite{helstrom1969quantum} implies that the optimal success probability is
\[
P_{\mathrm{succ}}
=
\frac12 \bigl(1 + D_{\mathrm{tr}}(\rho_\psi, \rho_\varphi)\bigr).
\]
For pure states,
\[
D_{\mathrm{tr}}(\rho_\psi, \rho_\varphi)
=
\sqrt{1-F(\psi,\varphi)}
=
\sin d_{\mathrm{FS}}([\psi],[\varphi]),
\]
showing that distinguishability is a monotone function of
the Fubini--Study distance.
\end{example}
\begin{example}[Quantum Machine Learning]
In quantum kernel methods, a classical input $x \in \mathcal{X}$ is
mapped into a quantum state $\ket{\phi(x)}\in\mathcal{H}$, inducing
the kernel
\[
K(x,y)=|\langle\varphi(x)|\varphi(y)\rangle|^2,
\]
which coincides with quantum fidelity and provides a geometric measure
of similarity \cite{Havlicek2019,Schuld2019}. Distance measures of this
type also play a role in quantum‐inspired compression and
dimensionality‐reduction schemes for hybrid classical–quantum data
processing in edge computing settings \cite{bagherian2023classical}.
\end{example}

\section{Axiomatic Framework for Quantum-Inspired Distance Metric  Functions}\label{sec:axiom}

Having surveyed the mathematical structures of quantum state spaces and existing quantum-inspired distance measures, we now present our main contribution: an axiomatic framework that characterizes genuinely quantum-inspired distances. This framework has three objectives: (i) provide necessary and sufficient conditions for a distance to be quantum-inspired, (ii) reveal the consequences of each quantum principle, and (iii) classify distances according to the quantum phenomena they capture.

\subsection{Preliminary Definitions and Notation}

Let $\Hilbert$ be a complex Hilbert space with $\dim(\Hilbert) = d < \infty$, and let $\Proj(\Hilbert)$ denote the projective Hilbert space of pure states, i.e., equivalence classes of unit vectors modulo global phase. For $\ket{\psi} \in \Hilbert$ with $\norm{\psi} = 1$, its equivalence class is denoted $[\psi] \in \Proj(\Hilbert)$. If $\ket{\psi} \sim e^{i\theta}\ket{\psi}$ and 
$\ket{\varphi} \sim e^{i\phi}\ket{\varphi}$, then
\(
\braket{\psi}{\varphi}
\;\longmapsto\;
e^{-i\theta} e^{i\phi}\,\braket{\psi}{\varphi},
\)
so the inner product transforms by a phase factor. However, its modulus
\(
\left|\braket{\psi}{\varphi}\right|
\)
remains unchanged. Hence $\left|\braket{\psi}{\varphi}\right|$ is well-defined on $\Proj(\Hilbert)$.

\begin{definition}[Quantum-Inspired Distance Function]
A \emph{quantum-inspired distance function} is a map
\[
d: \Proj(\Hilbert) \times \Proj(\Hilbert) \to \mathbb{R}_{\geq 0},
\]
quantifying the dissimilarity between two states. Metric properties are not assumed a priori but are derived or imposed as appropriate. For composite systems, we consider $\Hilbert = \bigotimes_{k=1}^n \Hilbert_k$, $\dim(\Hilbert_k) = d_k$.
\end{definition}

\begin{definition}[Basic Metric Properties]
A distance $d$ on $\Proj(\Hilbert)$ is:
\begin{enumerate}[label=(\roman*)]
    \item \emph{Non-negative:} $d([\psi],[\varphi]) \ge 0$.
    \item \emph{Symmetric:} $d([\psi],[\varphi]) = d([\varphi],[\psi])$.
    \item \emph{Positive-definite:} $d([\psi],[\varphi]) = 0 \implies [\psi] = [\varphi]$.
    \item \emph{Metric:} satisfies (i)--(iii) and the triangle inequality.
\end{enumerate}
\end{definition}

\subsection{Fundamental Axioms}\label{subsec:ax}

We present a streamlined set of axioms for quantum-inspired distances. 
They are divided into three layers: (i) intrinsic geometric properties, 
(ii) composite-system refinements, and (iii) operational (measurement-based) distances.

\subsubsection{Intrinsic Geometric Axioms}

\begin{axiom}[Ray Well-Definedness]
\label{ax:ray}
The distance is well-defined on rays: for any $\ket{\psi} \sim e^{i\theta}\ket{\psi}$ and $\ket{\varphi} \sim e^{i\phi}\ket{\varphi}$,
\(
d([\psi],[\varphi])\) is independent of the choice of representatives.
\end{axiom}

\begin{axiom}[Unitary Invariance]
\label{ax:unitary_invariance}
For all rays $[\psi],[\varphi] \in \Proj(\Hilbert)$ and any unitary $U$ on $\Hilbert$,
\[
d(U[\psi],U[\varphi]) = d([\psi],[\varphi]).
\]
\end{axiom}

\begin{remark}[Unitary Invariance vs. Covariance]
Axiom~\ref{ax:unitary_invariance} asserts \emph{unitary invariance}: the distance between two rays is unchanged under the same unitary transformation. This differs from \emph{unitary covariance}, where the distance transforms consistently under different unitaries applied to each argument. Invariance is natural because quantum-inspired distances should respect global basis changes.
\end{remark}

\begin{axiom}[Superposition Sensitivity]
\label{ax:superposition}
Let $[\psi_1]$ and $[\psi_2]$ be orthogonal rays, and let $\alpha, \beta \in \mathbb{C}$ with $|\alpha|^2 + |\beta|^2 = 1$. Then coherent superpositions are distinguishable:
\[
d([\alpha \psi_1 + \beta \psi_2], [\alpha' \psi_1 + \beta' \psi_2]) > 0
\]
for $\alpha \neq \alpha'$ or $\beta \neq \beta'$, up to global phase, even if the classical probabilities $|\alpha|^2$ and $|\beta|^2$ coincide.
\end{axiom}

\begin{axiom}[Non-Degeneracy]
\label{ax:nondegeneracy}
\[
d([\psi],[\varphi]) = 0 \iff [\psi] = [\varphi].
\]
\end{axiom}

\begin{axiom}[Metric Compatibility]
\label{ax:triangle}
If $d$ is a metric, it satisfies the triangle inequality:
\[
d([\psi],[\varphi]) \le d([\psi],[\chi]) + d([\chi],[\varphi]), \quad \forall [\psi],[\varphi],[\chi] \in \Proj(\Hilbert).
\]
\end{axiom}

\begin{axiom}[Two-Level Geodesic Additivity]
\label{ax:geodesic}
For any two-dimensional subspace of $\Hilbert$, the distance depends only on the transition probability $|\langle \psi|\varphi\rangle|$.  
Moreover, if three rays $[\psi_1],[\psi_2],[\psi_3]$ lie along a geodesic in this subspace with
\[
|\langle \psi_1|\psi_3\rangle| = |\langle \psi_1|\psi_2\rangle| \, |\langle \psi_2|\psi_3\rangle|,
\]
then
\[
d([\psi_1],[\psi_3]) = d([\psi_1],[\psi_2]) + d([\psi_2],[\psi_3]).
\]
\end{axiom}

\begin{remark}[Uniqueness of the Fubini--Study Metric]
\label{rem:fs_unique}
Axiom~\ref{ax:geodesic} is optional but has a remarkable consequence: when combined with 
Ray Well-Definedness (Axiom~\ref{ax:ray}), Unitary Invariance (Axiom~\ref{ax:unitary_invariance}), 
Non-Degeneracy (Axiom~\ref{ax:nondegeneracy}), and Metric Compatibility (Axiom~\ref{ax:triangle}), 
it uniquely fixes the distance, up to a positive multiplicative constant, to be the \emph{Fubini--Study metric}:
\[
d_{\mathrm{FS}}([\psi],[\varphi]) = \arccos|\langle \psi|\varphi\rangle|, \quad 0 \le d_{\mathrm{FS}} \le \frac{\pi}{2}.
\]
Without Axiom~\ref{ax:geodesic}, other unitarily invariant distances on projective Hilbert space remain admissible.
\end{remark}

\subsubsection{Composite-System Refinement}

\begin{axiom}[Entanglement Awareness]
\label{ax:entanglement_awareness}
Let $\Hilbert = \Hilbert_A \otimes \Hilbert_B$ with $\dim(\Hilbert_A), \dim(\Hilbert_B) \ge 2$. 
There exist states $[\psi],[\varphi]$ such that
\begin{eqnarray}\notag
  \tr_B(\ketbra{\psi}{\psi}) &=& \tr_B(\ketbra{\varphi}{\varphi}), \\ \notag
  \tr_A(\ketbra{\psi}{\psi}) &=& \tr_A(\ketbra{\varphi}{\varphi}),
\end{eqnarray}
yet 
\[
d([\psi],[\varphi]) > 0.
\]
\end{axiom}
\begin{remark}
Axiom~\ref{ax:entanglement_awareness} asserts that the global pure state of a composite system is not determined by its local reduced states. 
Even if two states $[\psi]$ and $[\varphi]$ have identical marginals on both subsystems,
\(
\tr_B(\ketbra{\psi}{\psi}) = \tr_B(\ketbra{\varphi}{\varphi}),\) and \(\tr_A(\ketbra{\psi}{\psi}) = \tr_A(\ketbra{\varphi}{\varphi}),
\)
they may nevertheless represent distinct global states. 
Thus the map
\[
[\psi] \longmapsto 
\left(
\tr_B(\ketbra{\psi}{\psi}),
\tr_A(\ketbra{\psi}{\psi})
\right),
\]
is not injective. Conceptually, this expresses that entanglement contains relational information that is invisible to the individual subsystems. 
All local measurement statistics may coincide, while global correlations still differ. 
In particular, relative phases between Schmidt components cannot, in general, be reconstructed from the reduced states alone \cite{Holevo2011}.
\end{remark}

\subsubsection{Operational / Measurement Axioms}

 \begin{axiom}[Measurement Contextuality]
\label{ax:measurement_contextuality}
Let $\mathcal{M} = \{M_m\}$ be a POVM on $\Hilbert$. 
The measurement-induced distance $d_\mathcal{M}$ depends only on the induced probability distributions
\[
p_\mathcal{M}(\psi) = \{\bra{\psi} M_m^\dagger M_m \ket{\psi}\}_m.
\]
In particular,
\[
d_\mathcal{M}([\psi],[\varphi]) = 0 
\quad \Longleftrightarrow \quad 
p_\mathcal{M}(\psi) = p_\mathcal{M}(\varphi).
\]
Distinct POVMs may induce different distances; in particular, if $\mathcal{M}$ is not informationally complete, there exist $[\psi] \neq [\varphi]$ with $d_\mathcal{M}([\psi],[\varphi]) = 0$.
\end{axiom}

\begin{remark}[Informationally Complete POVMs]
A POVM $\mathcal{M} = \{M_m\}$ on a finite-dimensional Hilbert space $\Hilbert$ is called \emph{informationally complete} if the measurement statistics it produces uniquely determine any quantum state. That is, the map
\(
\rho \;\mapsto\; \{ \mathrm{Tr}(M_m \rho) \}_m
\)
is injective on the set of density operators. Equivalently, no two distinct quantum states yield the same probability distribution under $\mathcal{M}$. This property implies that, given sufficiently many measurement outcomes, one can \emph{reconstruct} the unknown state $\rho$ from the observed frequencies \cite{nielsen2010quantum}.
\end{remark}


\noindent In short, Axioms \ref{ax:ray} and \ref{ax:unitary_invariance} encode fundamental quantum structure, i.e., phase irrelevance and unitary invariance. 
Axiom \ref{ax:superposition} distinguishes quantum from classical probability by ensuring distances recognize coherent superpositions. 
Axiom \ref{ax:entanglement_awareness} applies to multipartite systems and captures global correlations that are invisible to local reductions, thereby incorporating entanglement as a genuinely nonclassical structural feature. 
Axiom \ref{ax:measurement_contextuality} addresses operational, measurement-defined distances, typically yielding pseudometrics when the measurement is not informationally complete. 

Together with non-degeneracy and metric compatibility (Axioms \ref{ax:nondegeneracy} and \ref{ax:triangle}, respectively), these conditions delineate a hierarchy of quantum-inspired distance functions. 
Intrinsic geometric axioms characterize distances on projective Hilbert space, composite-system refinements incorporate entanglement structure, and operational axioms relate distances to experimentally accessible statistics. 
Different quantum distances satisfy different subsets of these axioms, thereby situating them within this structural hierarchy.
\begin{definition}[Quantum-Inspired Distance Function]\label{def:qidf}
A function
\[
d : \Proj(\Hilbert) \times \Proj(\Hilbert)
\to \mathbb{R}_{\ge 0},
\]
is called a \emph{quantum-inspired distance function}
if it satisfies Axioms \ref{ax:ray}, 
\ref{ax:unitary_invariance}, and \ref{ax:superposition}.
\end{definition}

\begin{definition}
A quantum-inspired distance function given in Definition~\ref{def:qidf} satisfying additionally:
\begin{enumerate}[label=(\roman*)]
    \item Axioms \ref{ax:nondegeneracy} and \ref{ax:triangle}
    is called a \emph{quantum-inspired metric};
    
    \item Axiom \ref{ax:entanglement_awareness}, in the case
    $\Hilbert = \Hilbert_A \otimes \Hilbert_B$,
    is called \emph{entanglement-aware};
    
    \item Axiom \ref{ax:measurement_contextuality}, when defined relative to a measurement context,
    is called \emph{measurement-contextual}.
\end{enumerate}
\end{definition}

\subsection{Immediate Consequences}
The following results are immediate consequences of Axioms~\ref{ax:ray}--\ref{ax:measurement_contextuality}.
\begin{theorem}[Overlap Dependence]
\label{thm:overlap}
Let $d$ be a mapping on $\Proj(\Hilbert)$ satisfying
Axiom~\ref{ax:ray}
and Axiom~\ref{ax:unitary_invariance}.
Then there exists a function
\[
f : [0,1] \to \mathbb{R}_{\ge 0},
\]
such that for all nonzero $\psi,\varphi \in \Hilbert$,
\[
d([\psi],[\varphi])
=
f\!\left(
\frac{|\langle \psi|\varphi\rangle|}
{\|\psi\|\,\|\varphi\|}
\right).
\]
\end{theorem}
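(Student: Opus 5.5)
The plan is to show that the overlap $t=|\langle\psi|\varphi\rangle|/(\|\psi\|\|\varphi\|)$ is a complete invariant for pairs of rays under the diagonal action of $\mathcal U(\Hilbert)$. Once this holds, $d$ is constant on each level set of $t$, so $f$ can be defined as the common value.

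First I reduce to normalized representatives. The rays $[\psi]$ and $[\psi/\|\psi\|]$ coincide, and Axiom~\ref{ax:ray} says $d$ does not depend on the representative, so I may assume $\|\psi\|=\|\varphi\|=1$. I may also multiply $\varphi$ by a phase $e^{i\theta}$ so that $\langle\psi|\varphi\rangle=t\in[0,1]$ is real and nonnegative. The overlap is unchanged by both operations, as noted in the preliminaries.

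Next I construct a canonical pair. Fix an orthonormal pair $e_1,e_2$ in $\Hilbert$, and for $t\in[0,1]$ set
\[
\chi_t=t\,e_1+\sqrt{1-t^2}\,e_2 .
\]
The goal is a unitary $U$ with $U\psi=e_1$ and $U\varphi=\chi_t$. I split into two cases.

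\emph{Case $t<1$.} Gram--Schmidt gives $\varphi=t\psi+\sqrt{1-t^2}\,\eta$, where $\eta$ is a unit vector orthogonal to $\psi$. Extend $\{\psi,\eta\}$ to an orthonormal basis of $\Hilbert$, and extend $\{e_1,e_2\}$ likewise. Let $U$ map the first basis to the second. By linearity, $U\psi=e_1$ and $U\varphi=t e_1+\sqrt{1-t^2}\,e_2=\chi_t$.

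\emph{Case $t=1$.} Equality in Cauchy--Schwarz gives $\varphi=\psi$ after the phase choice. Any unitary with $U\psi=e_1$ then works, and $\chi_1=e_1$.

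If $d=\dim\Hilbert=1$, only $t=1$ occurs and the statement is trivial. Otherwise $e_2$ exists. Axiom~\ref{ax:unitary_invariance} now gives
\[
d([\psi],[\varphi])=d([e_1],[\chi_t]).
\]
So I define $f(t):=d([e_1],[\chi_t])$ for those $t$ that arise. Since $d\ge0$, $f$ takes values in $\mathbb R_{\ge0}$.

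The main obstacle is well-definedness: the value must depend only on $t$, and not on the phase choices or the auxiliary vector $\eta$. This is exactly what the construction handles. Every pair with overlap $t$ is carried to the same fixed pair $([e_1],[\chi_t])$, so any two pairs with equal overlap are unitarily related and receive the same $d$-value.

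The only subtle point is that unitary invariance is stated on rays. I read $U[\psi]$ as $[U\psi]$, which is well defined because $U$ is linear and commutes with phases. With that reading, the argument is complete.
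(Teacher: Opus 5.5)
Your proof is correct and takes the same route as the paper's: you show that any two pairs of rays with equal overlap are related by one unitary, built from orthonormal bases extending each pair, and then apply Axioms~\ref{ax:ray} and~\ref{ax:unitary_invariance}. Your version is more careful than the paper's in three places: you first rotate the phase of $\varphi$ so that $\langle\psi|\varphi\rangle=t\ge 0$, you carry out the Gram--Schmidt step explicitly, and you treat the degenerate cases $t=1$ and $\dim\Hilbert=1$; the phase step matters because a unitary preserves the inner product itself, not just its modulus, so the paper's claim that equal moduli alone give $U\psi=\psi'$, $U\varphi=\varphi'$ needs exactly that adjustment.
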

\begin{proof}
Let $\psi,\varphi$ be normalized vectors. 
By unitary invariance (Axiom~\ref{ax:unitary_invariance}), the value of $d([\psi],[\varphi])$
is unchanged under any unitary transformation applied to both states.
Given two pairs of normalized vectors
$(\psi,\varphi)$ and $(\psi',\varphi')$
satisfying
\[
|\langle\psi|\varphi\rangle|
=
|\langle\psi'|\varphi'\rangle|,
\]
there exists a unitary $U$ such that
\[
U\psi=\psi',
\qquad
U\varphi=\varphi'.
\]
Indeed, by choosing orthonormal bases extending each pair,
one may construct $U$ mapping one basis to the other.
Hence,
\[
d([\psi],[\varphi])
=
d(U[\psi],U[\varphi])
=
d([\psi'],[\varphi']).
\]
Therefore, the distance depends only on the quantity
$|\langle\psi|\varphi\rangle|$.
Ray well-definedness (Axiom~\ref{ax:ray}) allows removal of normalization,
yielding the stated form.
\end{proof}

\begin{theorem}[Monotonicity]
\label{thm:monotonicity}
Let $d$ be a quantum-inspired distance function satisfying two-level geodesic additivity (Axiom~\ref{ax:geodesic}), and let 
\(
d([\psi],[\varphi]) = f(|\braket{\psi}{\varphi}|).
\)
Then the function $f:[0,1]\to\mathbb{R}_{\ge 0}$ is monotonically decreasing.
\end{theorem}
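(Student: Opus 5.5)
Restrict to a two-dimensional subspace, where Axiom~\ref{ax:geodesic} applies, and use the multiplicative overlap condition to obtain a functional equation for $f$. Fix an orthonormal pair $e_1,e_2$ spanning a two-dimensional subspace $V\subseteq\Hilbert$. Work with the real great circle $\psi(\theta)=\cos\theta\, e_1+\sin\theta\, e_2$, for which
\[
|\langle\psi(\theta_1)|\psi(\theta_2)\rangle|=|\cos(\theta_2-\theta_1)|.
\]
By Theorem~\ref{thm:overlap}, every overlap value $t\in[0,1]$ is realized on this circle: take $\theta=\arccos t\in[0,\pi/2]$, so $d([\psi(0)],[\psi(\theta)])=f(\cos\theta)$.

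Given $1\ge s> t\ge 0$, I will exhibit three rays along this geodesic satisfying the multiplicative hypothesis of Axiom~\ref{ax:geodesic}. Set $a=\arccos s$ and $b=\arccos t$, so $0\le a<b\le\pi/2$. Take
\[
\psi_1=\psi(0),\qquad \psi_2=\psi(a),\qquad \psi_3=\psi(b),
\]
and write $u=|\langle\psi_2|\psi_3\rangle|$.

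The required identity is $\cos b=\cos a\cdot u$, but in general $\cos b\neq\cos a\cos(b-a)$. This is the main obstacle: the axiom's multiplicative condition does not literally hold for collinear points on the FS geodesic. I plan to resolve it by choosing $\psi_3$ inside $V$ with \emph{prescribed} overlaps rather than insisting it sit at angle $b$. Choose $u\in(0,1]$ with $s\,u=t$, which is possible since $t<s$, with $u=0$ if $t=0$. Then select $\psi_2,\psi_3\in V$ with $|\langle\psi_1|\psi_2\rangle|=s$, $|\langle\psi_2|\psi_3\rangle|=u$ and $|\langle\psi_1|\psi_3\rangle|=su=t$.

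Such a configuration exists in $\mathbb{C}P^1$, the Bloch sphere, by a continuity argument. Fix $\psi_1$ and $\psi_2$, and move $\psi_3$ along the circle of rays at overlap $u$ from $\psi_2$. The overlap with $\psi_1$ then varies continuously between $|\cos(a\pm\arccos u)|$. Since $\cos a\cos(\arccos u)$ lies between $\cos(a+\arccos u)$ and $\cos(a-\arccos u)$, the value $su$ is attained by the intermediate value theorem.

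Read the phrase ``lie along a geodesic in this subspace'' generously, as the hypothesis that the overlap product condition holds in $V$. Then Axiom~\ref{ax:geodesic} gives
\[
f(t)=f(su)=f(s)+f(u)\ge f(s),
\]
because $f\ge 0$. Hence $f$ is monotonically (weakly) decreasing.

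For strict decrease, if desired, invoke non-degeneracy via Axiom~\ref{ax:superposition}: since $u<1$, the rays $\psi_2$ and $\psi_3$ are distinct, so $f(u)=d([\psi_2],[\psi_3])>0$ and therefore $f(t)>f(s)$.

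The delicate point remains justifying that the chosen triple counts as lying ``along a geodesic''. If a strict geometric reading is required, I would instead note that the axiom forces $g(x)=f(e^{-x})$ to be additive on $[0,\infty)$, that is, $g(x+y)=g(x)+g(y)$. A nonnegative additive function satisfies $g(y)\ge 0$, so $g(x+y)\ge g(x)$ and $g$ is nondecreasing. Since $x\mapsto e^{-x}$ is decreasing, $f$ is monotonically decreasing.
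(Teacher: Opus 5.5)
\textbf{There is a genuine gap, and it cannot be repaired within your reading of the axiom.}

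The step that fails is the existence of $\psi_2,\psi_3$ in a two-dimensional $V$ with overlaps $s$, $u$ and $su$. You correctly note that as $\psi_3$ runs over the rays at overlap $u$ from $\psi_2$, its overlap with $\psi_1$ sweeps $[\,|\cos(a+\beta)|,\cos(a-\beta)\,]$, where $\beta=\arccos u$. You then compare $su$ with $\cos(a+\beta)$ instead of $|\cos(a+\beta)|$. When $a+\beta>\pi/2$ the lower endpoint is $-\cos(a+\beta)$, and $\cos a\cos\beta\ge-\cos(a+\beta)$ holds iff $\tan a\tan\beta\le 2$. For example, with $s=1/2$ and $t=1/10$ (so $u=1/5$), the attainable overlaps form roughly $[0.75,0.95]$, which misses $0.1$. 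Geometrically, in $\mathbb{C}^2$ the orthocomplement of $\psi_2$ is a single ray. So if $\psi_1$ and $\psi_3$ are both nearly orthogonal to $\psi_2$, they are nearly equal and cannot have small overlap. In particular $t=0<s<1$ is never reachable: $\psi_3\perp\psi_1$ and $\psi_3\perp\psi_2$ in $\mathbb{C}^2$ force $[\psi_1]=[\psi_2]$.

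If you keep the product condition and confine the triple to a two-dimensional subspace, the statement is actually false. Take $f(x)=-\log x$ on $(0,1]$ and $f(0)=\varepsilon$ with $0<\varepsilon<\log 2$.
\begin{itemize}
\item Any triple in $V$ satisfying the product condition either has all overlaps positive, where $-\log$ is exactly additive, or forces $[\psi_2]$ to coincide with $[\psi_1]$ or $[\psi_3]$.
\item In the second case the required identity is just $f(0)=f(0)+f(1)$, which holds.
\item Axioms~\ref{ax:ray}--\ref{ax:superposition} also hold, yet $f(0)<f(1/2)$.
\end{itemize}
Your fallback fails for the same reason. Additivity of $x\mapsto f(e^{-x})$ on all of $[0,\infty)$ would require triples realizing every pair $(s,u)$, which do not exist. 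A stricter reading (collinear \emph{and} product condition) only imposes fewer constraints, so the same counterexample survives. It would also say nothing about $f(0)$.

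The missing idea is the reading the paper actually uses, here and in Theorem~\ref{thm:additivity_rigidity}: additivity of $d$ along the Fubini--Study geodesic $\psi(\theta)=\cos\theta\,|0\rangle+\sin\theta\,|1\rangle$. In other words, $g(\theta_1+\theta_2)=g(\theta_1)+g(\theta_2)$ for $g(\theta)=f(\cos\theta)$. Then $g(\theta_2)=g(\theta_1)+g(\theta_2-\theta_1)\ge g(\theta_1)$ for $\theta_1\le\theta_2$, which gives monotonicity on all of $[0,1]$, including $0$. Your observation that the literal product condition clashes with collinearity is correct and worth recording. Indeed, the conclusion $d=c\,d_{\mathrm{FS}}$ of Theorem~\ref{thm:additivity_rigidity} is itself incompatible with the product reading, since $\arccos(su)\neq\arccos s+\arccos u$. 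But the consistent repair is to drop the product condition, not collinearity.
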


\begin{proof}
Restrict to a two-dimensional subspace spanned by an orthonormal pair $\{\ket{0},\ket{1}\}$, 
and parametrize states along a geodesic by
\[
\ket{\psi(\theta)} = \cos\theta\,\ket{0} + \sin\theta\,\ket{1}, 
\quad \theta \in [0,\pi/2].
\]
Then 
\(
|\braket{\psi(0)}{\psi(\theta)}| = \cos\theta,
\)
which decreases as \(\theta\) increases. Geodesic additivity implies that $d([\psi(0)],[\psi(\theta)])$ increases with $\theta$.  
Hence $f(r)$ must decrease as $r = |\braket{\psi(0)}{\psi(\theta)}|$ increases.  
Therefore, $f$ is monotonically decreasing on $[0,1]$.
\end{proof}

\begin{theorem}[Classification of Quantum-Inspired Metrics]  
\label{thm:classification}  
Let $d$ be a distance function on the projective Hilbert space $\Proj(\Hilbert)$ satisfying the following axioms and property:  
\begin{enumerate}[label=(\roman*)]  
    \item Axiom~\ref{ax:ray}: Ray well-definedness,  
    \item Axiom~\ref{ax:unitary_invariance}: Unitary invariance,  
    \item Axiom~\ref{ax:geodesic}: Two-level geodesic additivity,  
    \item The triangle inequality.  
\end{enumerate}  
Then there exists a function $f:[0,1] \to \mathbb{R}_{\ge 0}$ such that  
\[
d([\psi],[\varphi]) = f\big(|\braket{\psi}{\varphi}|\big), \quad \forall [\psi],[\varphi] \in \Proj(\Hilbert),
\]  
with the following properties:  
\begin{enumerate}[label=(\roman*)]  
    \item $f(1) = 0$ and $f(x) > 0$ for all $x < 1$,  
    \item $f$ is strictly decreasing on $[0,1]$,  
    \item Defining $g(\theta) := f(\cos\theta)$ for $\theta \in [0,\pi/2]$, the function $g$ is subadditive, i.e.,  
    \[
    g(\theta_1 + \theta_2) \le g(\theta_1) + g(\theta_2), 
    \]  
\end{enumerate}  
$\forall \theta_1, \theta_2 \ge 0 \text{ with } \theta_1 + \theta_2 \le \pi/2.$ Conversely, any function $f$ satisfying these conditions defines a quantum-inspired metric via 
\[
d([\psi],[\varphi]) := f\big(|\braket{\psi}{\varphi}|\big),
\] 
that satisfies the above axioms.  
\end{theorem}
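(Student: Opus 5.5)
The plan is to reduce everything to the angle function $g(\theta)=f(\cos\theta)$ on $[0,\pi/2]$ and to work inside a single two-dimensional subspace, where Axiom~\ref{ax:geodesic} becomes a functional equation for $g$. The proposal also flags two gaps in the statement as written: the positivity in property (i) needs a non-degeneracy assumption, and the converse cannot give back Axiom~\ref{ax:geodesic} for general $f$.

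\textbf{Step 1: existence of $f$.} Axioms~\ref{ax:ray} and~\ref{ax:unitary_invariance} give, by Theorem~\ref{thm:overlap}, a function $f$ with $d([\psi],[\varphi])=f(|\langle\psi|\varphi\rangle|)$.

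\textbf{Step 2: additivity of $g$.} I would take the geodesic
\[
\psi(\theta)=\cos\theta\,\ket{0}+\sin\theta\,\ket{1}
\]
in the span of an orthonormal pair. For $0\le\theta_1\le\theta_1+\theta_2\le\pi/2$, set $\psi_1=\psi(0)$, $\psi_2=\psi(\theta_1)$ and $\psi_3=\psi(\theta_1+\theta_2)$. Then $|\langle\psi_1|\psi_2\rangle|=\cos\theta_1$ and $|\langle\psi_1|\psi_3\rangle|=\cos(\theta_1+\theta_2)$. The remaining overlap is $|\langle\psi_2|\psi_3\rangle|=\cos\theta_2$, because $\langle\psi(a)|\psi(b)\rangle=\cos(b-a)$.

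The product condition in Axiom~\ref{ax:geodesic} is meant as the statement that $\psi_2$ lies between $\psi_1$ and $\psi_3$ on the geodesic. Read that way, the axiom gives the Cauchy equation
\[
g(\theta_1+\theta_2)=g(\theta_1)+g(\theta_2).
\]
In particular $g(0)=2g(0)$, so $f(1)=g(0)=0$. Subadditivity, property (iii), then holds trivially with equality. It also follows directly from the triangle inequality applied to these three collinear rays.

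\textbf{Step 3: positivity and strict monotonicity.} For $f(x)>0$ when $x<1$, additivity alone is not enough, since $g\equiv 0$ satisfies it. Some form of non-degeneracy is therefore needed.

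I would read ``distance function satisfying the triangle inequality'' as including Axiom~\ref{ax:nondegeneracy}, which is the setting of a quantum-inspired metric. Alternatively, one can invoke Axiom~\ref{ax:superposition} on the pair $\psi(0)$ and $\psi(\theta)$ with $\theta>0$. Either way $g(\theta)>0$ for $\theta\in(0,\pi/2]$. Strict decrease of $f$, property (ii), follows from $g(\theta_1+\theta_2)=g(\theta_1)+g(\theta_2)>g(\theta_1)$ when $\theta_2>0$. This refines Theorem~\ref{thm:monotonicity}. As a side remark, a monotone additive $g$ must satisfy $g(\theta)=c\,\theta$, which recovers Remark~\ref{rem:fs_unique}.

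\textbf{Step 4: the converse.} Given $f$ with properties (i)--(iii), define $d:=f(|\langle\cdot|\cdot\rangle|)$.
\begin{enumerate}[label=(\alph*)]
\item Ray well-definedness and unitary invariance are automatic, because $|\langle\psi|\varphi\rangle|$ is phase- and unitary-invariant.
\item Non-degeneracy and symmetry follow from (i) and from $|\langle\psi|\varphi\rangle|=|\langle\varphi|\psi\rangle|$.
\item For the triangle inequality, write $a=d_{\mathrm{FS}}(\psi,\chi)$, $b=d_{\mathrm{FS}}(\chi,\varphi)$ and $c=d_{\mathrm{FS}}(\psi,\varphi)$. The Fubini--Study triangle inequality gives $c\le a+b$, and $g$ is increasing. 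If $a+b\le\pi/2$, then $g(c)\le g(a+b)\le g(a)+g(b)$. If $a+b>\pi/2$, then $g(c)\le g(\pi/2)=g\bigl(a+(\pi/2-a)\bigr)\le g(a)+g(\pi/2-a)\le g(a)+g(b)$, using $\pi/2-a<b$.
\end{enumerate}

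\textbf{Main obstacle.} This is the converse's claim that $d$ ``satisfies the above axioms''. Axiom~\ref{ax:geodesic} requires equality in (iii), so it holds only when $g$ is additive, that is, $g(\theta)=c\,\theta$. The honest version of the converse is therefore:
\begin{itemize}
\item conditions (i)--(iii) yield a unitarily invariant metric on $\Proj(\Hilbert)$;
\item Axiom~\ref{ax:geodesic} holds exactly in the Fubini--Study case.
\end{itemize}
I would state and prove the converse in that corrected form.
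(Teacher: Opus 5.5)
Your proposal is correct, including both defects you flag. Its forward direction has the same skeleton as the paper's proof: Theorem~\ref{thm:overlap} gives $f$, non-degeneracy gives positivity, and Axiom~\ref{ax:geodesic} gives monotonicity.

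\textbf{Forward direction.} Where you depart from the paper, you improve on it. The paper uses non-degeneracy without listing it as a hypothesis, cites Theorem~\ref{thm:monotonicity} for strict monotonicity, and gets (iii) only from the triangle inequality. You instead extract the full Cauchy equation for $g$. That one equation gives $f(1)=0$ and (iii) with equality. It also gives strict monotonicity in one line that shows exactly where positivity enters. Finally, it shows the hypotheses already force $d=c\,d_{\mathrm{FS}}$, as in Theorem~\ref{thm:additivity_rigidity}. Indeed, nonnegativity plus additivity makes $g$ nondecreasing, hence $g(\theta)=c\,\theta$ with $c\ge 0$. So non-degeneracy is needed only to exclude $c=0$, and (i)--(iii) is a strict weakening of what actually follows.

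\textbf{Converse.} The paper gives no argument here; it only asserts that $f(|\langle\psi|\varphi\rangle|)$ satisfies the triangle inequality and geodesic additivity. Your case split on $a+b\le\pi/2$ versus $a+b>\pi/2$ is a genuine proof of the first claim. Your rejection of the second claim is also correct. The Bures choice $g(\theta)=2\sin(\theta/2)$ satisfies (i)--(iii), yet $g(\pi/2)=\sqrt{2}<4\sin(\pi/8)=2g(\pi/4)$. So the converse as stated is false, and it is incompatible with the paper's own Theorem~\ref{thm:additivity_rigidity}. The paper's quantum-inspired metrics must also satisfy Axiom~\ref{ax:superposition}, so add that to your corrected converse. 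It follows immediately from non-degeneracy (Theorem~\ref{thm:superposition-char}).

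\textbf{Reading of Axiom~\ref{ax:geodesic}.} State more bluntly that the betweenness reading is a repair, not a paraphrase. On your triples the product condition reads $\cos(\theta_1+\theta_2)=\cos\theta_1\cos\theta_2$. This holds only if $\theta_1\theta_2=0$, so the axiom as written gives you nothing beyond $g(0)=0$.

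Suppose \emph{along a geodesic} may refer to the whole closed geodesic $t\mapsto\cos t\,\ket{0}+\sin t\,\ket{1}$. Then the literal axiom fails even for $d_{\mathrm{FS}}$. Take $\cos a=1/\sqrt{3}$ and the rays $\psi_1,\psi_2,\psi_3$ at $t=0,a,2a$. They satisfy $|\langle\psi_1|\psi_2\rangle|=|\langle\psi_2|\psi_3\rangle|=1/\sqrt{3}$ and $|\langle\psi_1|\psi_3\rangle|=1/3$, so the product condition holds. But $\arccos(1/3)\neq 2\arccos(1/\sqrt{3})$.

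The paper tacitly uses your reading in Theorems~\ref{thm:monotonicity} and~\ref{thm:additivity_rigidity}, so adopting it explicitly is the right call.
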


\begin{proof}
By Theorem~\ref{thm:overlap}, unitary invariance and ray well-definedness imply
\[
d([\psi],[\varphi]) = f(|\braket{\psi}{\varphi}|), \quad f:[0,1]\to\mathbb{R}_{\ge 0}.
\]
Positivity and normalization follow from the definition of a distance: $d([\psi],[\psi]) = 0$ gives $f(1) = 0$, and non-degeneracy implies $f(x) > 0$ for $x < 1$. \\
Monotonicity is a consequence of geodesic additivity (Theorem~\ref{thm:monotonicity}): restricting to a two-dimensional subspace spanned by $\ket{\psi}$ and $\ket{\varphi}$, the distance along a geodesic increases with the Fubini--Study angle $\theta$, while $|\braket{\psi(0)}{\psi(\theta)}|$ decreases with $\theta$. Hence $f$ is strictly decreasing in $|\braket{\psi}{\varphi}|$.\\
Subadditivity of $g(\theta) := f(\cos\theta)$ follows from the triangle inequality applied to three states $[\psi_0],[\psi_1],[\psi_2]$ lying along a geodesic with angles $\theta_1$ and $\theta_2$
\begin{multline}\nonumber 
    g(\theta_1+\theta_2) = d([\psi_0],[\psi_2]) \\
    \le d([\psi_0],[\psi_1]) + d([\psi_1],[\psi_2])
    = g(\theta_1) + g(\theta_2).
\end{multline}
Conversely, any function $f$ satisfying these properties defines a distance
\[
d([\psi],[\varphi]) = f(|\braket{\psi}{\varphi}|),
\]
which is ray well-defined, unitarily invariant, and satisfies both two-level geodesic additivity and the triangle inequality.
\end{proof}

\begin{corollary}[Canonical Examples]
The Fubini--Study and Bures metrics arise as specific choices of admissible functions $f$ in Theorem~\ref{thm:classification}, when restricted to pure states.
\end{corollary}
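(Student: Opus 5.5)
The plan is to exhibit the two explicit functions and verify the three conditions of Theorem~\ref{thm:classification} for each. Set
\[
f_{\mathrm{FS}}(x)=\arccos x,\qquad f_{\mathrm{B}}(x)=\sqrt{2(1-x)},\qquad x\in[0,1].
\]
By Definition~\ref{def:fubini_study} and Definition~\ref{def:bures}, these reproduce $d_{\mathrm{FS}}$ and $d_{\mathrm{B}}$ on pure states. The natural move is to pass at once to the angular variable $g(\theta)=f(\cos\theta)$ on $[0,\pi/2]$. This gives
\[
g_{\mathrm{FS}}(\theta)=\theta,\qquad g_{\mathrm{B}}(\theta)=\sqrt{2(1-\cos\theta)}=2\sin(\theta/2),
\]
which matches the identity already recorded in Definition~\ref{def:bures}.

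Conditions (i) and (ii) are then elementary. Both $g$'s vanish exactly at $\theta=0$, so $f(1)=0$ and $f(x)>0$ for $x<1$. Both are strictly increasing on $[0,\pi/2]$, because $\theta\mapsto\theta$ is increasing and $\sin(\theta/2)$ is increasing for $\theta/2\in[0,\pi/4]$. Since $\cos$ is a strictly decreasing bijection $[0,\pi/2]\to[0,1]$, each $f$ is strictly decreasing on $[0,1]$.

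For (iii), $g_{\mathrm{FS}}$ is additive, hence subadditive with equality. For $g_{\mathrm{B}}$ I will use concavity. On $[0,\pi/2]$ we have $g_{\mathrm{B}}''(\theta)=-\tfrac12\sin(\theta/2)\le 0$ and $g_{\mathrm{B}}(0)=0$. A concave function vanishing at the origin is subadditive on its domain: writing $\theta_i=\tfrac{\theta_i}{\theta_1+\theta_2}(\theta_1+\theta_2)$ and using $g(t\,s)\ge t\,g(s)$ for $t\in[0,1]$, one sums to get $g(\theta_1)+g(\theta_2)\ge g(\theta_1+\theta_2)$. Alternatively, one can check directly that $\sin(a+b)\le\sin a+\sin b$ for $a,b\ge 0$ with $a+b\le\pi/4$, since $\sin(a+b)=\sin a\cos b+\cos a\sin b$.

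The main obstacle is to be honest about what the corollary can claim: Theorem~\ref{thm:classification} includes Axiom~\ref{ax:geodesic}, and the Bures function does not satisfy it. Along a geodesic, $2\sin((\theta_1+\theta_2)/2)<2\sin(\theta_1/2)+2\sin(\theta_2/2)$ for $\theta_1,\theta_2>0$, so additivity fails. My plan is therefore to state the verification as membership in the class described by conditions (i)--(iii) on $f$, which is what ``admissible functions $f$'' refers to, and not as satisfaction of geodesic additivity. In the remark, I will note that only $f_{\mathrm{FS}}$ (up to scale) also satisfies Axiom~\ref{ax:geodesic}, consistent with Remark~\ref{rem:fs_unique}.

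As a separate check that the converse direction really yields metrics, I will note that the triangle inequality for $d_{\mathrm{FS}}$ is the standard one for the geodesic distance on $\mathbb{C}P^{d-1}$. For $d_{\mathrm{B}}$, the triangle inequality follows from $d_{\mathrm{B}}([\psi],[\varphi])=\min_\theta\|\psi-e^{i\theta}\varphi\|$, a quotient of the Euclidean norm. This avoids relying on the theorem's converse for general triples that do not lie on a common geodesic.
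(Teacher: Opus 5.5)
Your proposal is correct and follows the same route as the paper: exhibit $f_{\mathrm{FS}}(x)=\arccos x$ and $f_{\mathrm{B}}(x)=\sqrt{2(1-x)}$ and check conditions (i)--(iii) of Theorem~\ref{thm:classification}, with your concavity argument for the subadditivity of $2\sin(\theta/2)$ and your quotient-norm proof of the Bures triangle inequality supplying steps the paper only asserts. Your caveat is also sound: under the reading of Axiom~\ref{ax:geodesic} used in Theorem~\ref{thm:additivity_rigidity}, $g_{\mathrm{B}}$ is strictly subadditive, so the Bures metric satisfies (i)--(iii) but not geodesic additivity, and the corollary holds only in that sense, because the converse of Theorem~\ref{thm:classification} overclaims on this point.
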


\begin{proof}
On pure states, the Fubini--Study metric is (see Definitions~\ref{def:fubini_study})  
\[
d_\mathrm{FS}([\psi],[\varphi]) = \arccos|\braket{\psi}{\varphi}|, \quad f_\mathrm{FS}(r) = \arccos r,
\]
and the Bures metric is (see Definitions~\ref{def:bures}) 
\[
d_\mathrm{Bures}([\psi],[\varphi]) = \sqrt{2 - 2|\braket{\psi}{\varphi}|}, \quad f_\mathrm{Bures}(r) = \sqrt{2-2r}.
\]
Both $f_\mathrm{FS}$ and $f_\mathrm{Bures}$ are functions of $|\braket{\psi}{\varphi}|$, satisfy ray well-definedness and unitary invariance, are nonnegative and strictly decreasing for $r \in [0,1]$, and their corresponding functions
\[
g_\mathrm{FS}(\theta) := f_\mathrm{FS}(\cos\theta), \quad g_\mathrm{Bures}(\theta) := f_\mathrm{Bures}(\cos\theta),
\]
are subadditive, ensuring the triangle inequality is satisfied. Hence, both are admissible functions $f$ in Theorem~\ref{thm:classification}.
\end{proof}

\begin{theorem}[Failure of Projective Invariance for Euclidean Distance]
The Euclidean distance on Hilbert space,
\[
d_E(\ket{\psi},\ket{\varphi}) = \|\ket{\psi} - \ket{\varphi}\|,
\]
is not invariant under global phase changes and therefore not quantum-inspired.
\end{theorem}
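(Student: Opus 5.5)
The plan is to exhibit an explicit counterexample to Axiom~\ref{ax:ray}. Since Definition~\ref{def:qidf} requires Ray Well-Definedness, a single violation shows that $d_E$ is not quantum-inspired.

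First I will compute $d_E$ under a global phase change of one argument. Fix a normalized $\ket{\psi}$, set $\ket{\varphi}=\ket{\psi}$, and replace $\ket{\varphi}$ by the equivalent representative $e^{i\theta}\ket{\psi}$. Then
\[
d_E(\ket{\psi}, e^{i\theta}\ket{\psi}) = |1-e^{i\theta}|\,\|\psi\| = 2\left|\sin\tfrac{\theta}{2}\right|,
\]
which equals $2$ for $\theta=\pi$. On the other hand, $d_E(\ket{\psi},\ket{\psi})=0$. Both pairs represent the same pair of rays $([\psi],[\psi])$, so $d_E$ takes two different values on the same pair of rays. Hence $d_E$ does not descend to $\Proj(\Hilbert)\times\Proj(\Hilbert)$, and Axiom~\ref{ax:ray} fails.

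For completeness I will also record the general formula. For normalized $\psi,\varphi$,
\[
\|\psi-e^{i\theta}\varphi\|^2 = 2-2\,\mathrm{Re}\!\left(e^{i\theta}\braket{\psi}{\varphi}\right),
\]
and this varies with $\theta$ whenever $\braket{\psi}{\varphi}\neq 0$. Its minimum over $\theta$ is $2-2|\braket{\psi}{\varphi}|$, which recovers the Bures distance of Definition~\ref{def:bures}. This shows that phase optimization is exactly what repairs the defect.

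As a cross-check, I can argue through Theorem~\ref{thm:overlap}: any ray-invariant and unitarily invariant map is a function of $|\braket{\psi}{\varphi}|$. The example above has overlap modulus $1$ in both cases but yields the values $0$ and $2$, which is again a contradiction.

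There is no real obstacle here. The only point needing care is to state precisely that "not well defined on rays" means that failure of Axiom~\ref{ax:ray} already excludes $d_E$ from Definition~\ref{def:qidf}, independently of Axioms~\ref{ax:unitary_invariance} and~\ref{ax:superposition}. Indeed, $d_E$ is in fact unitarily invariant on vectors, so Axiom~\ref{ax:ray} is the only failing axiom.
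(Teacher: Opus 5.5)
Your proof is correct and is essentially the paper's own argument: both set $\ket{\varphi}=\ket{\psi}$ and replace one representative by $-\ket{\psi}$ (your $\theta=\pi$), obtaining $0$ versus $2$ on the same pair of rays, which violates Axiom~\ref{ax:ray}. Your extra remarks (phase minimization recovering $d_{\mathrm{B}}$, the cross-check via Theorem~\ref{thm:overlap}) are correct but not needed. Your closing claim that Axiom~\ref{ax:ray} is the ``only'' failing axiom makes sense only at the level of vectors, since the other axioms are stated for functions on rays.
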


\begin{proof}
Let $\ket{\psi},\ket{\varphi} \in \Hilbert$ with $\ket{\psi} = \ket{\varphi}$. Then
\[
d_E(\ket{\psi},\ket{\varphi}) = \|\ket{\psi}-\ket{\varphi}\| = 0.
\]

Now consider a global phase change $\ket{\psi} \mapsto -\ket{\psi}$:
\[
d_E(-\ket{\psi},\ket{\varphi}) = \| -\ket{\psi}-\ket{\varphi}\| = \|-2\ket{\psi}\| = 2 \|\ket{\psi}\| \neq 0.
\]

Hence $d_E$ depends on the choice of representative of the ray $[\psi]$ and violates ray well-definedness (Axiom~\ref{ax:ray}). Therefore, it is not quantum-inspired.
\end{proof}

\begin{theorem}[Entanglement-Aware Distances]
Let $\Hilbert = \Hilbert_A \otimes \Hilbert_B$ with 
$\dim(\Hilbert_A),\dim(\Hilbert_B)\ge 2$. Define
\[
d_E([\psi],[\varphi]) 
= \sqrt{d_{\mathrm{FS}}([\psi],[\varphi])^2 
+ |E(\psi)-E(\varphi)|^2},
\]
where $d_{\mathrm{FS}}$ is the Fubini--Study distance and 
$E(\psi)$ is the von Neumann entropy of the reduced state 
$\tr_B(\ketbra{\psi}{\psi})$. 
Then $d_E$ satisfies Axiom~\ref{ax:entanglement_awareness}.
\end{theorem}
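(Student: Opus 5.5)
Axiom~\ref{ax:entanglement_awareness} is purely existential, so I will prove it by exhibiting one explicit pair of states. I need $[\psi]\neq[\varphi]$ with identical reduced states on both subsystems, and then I show that $d_E([\psi],[\varphi])>0$. Since $d_E^2 = d_{\mathrm{FS}}^2 + |E(\psi)-E(\varphi)|^2$ is a sum of nonnegative terms, it suffices to show $d_{\mathrm{FS}}([\psi],[\varphi])>0$. The entropy term will in fact vanish for such a pair, because $E$ depends only on the marginal.

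\textbf{Step 1: construction.} Pick orthonormal vectors $\ket{0},\ket{1}$ in $\Hilbert_A$ and in $\Hilbert_B$; this is possible because both dimensions are at least $2$. Set
\[
\ket{\psi}=\tfrac{1}{\sqrt2}\bigl(\ket{0}\ket{0}+\ket{1}\ket{1}\bigr),\qquad
\ket{\varphi}=\tfrac{1}{\sqrt2}\bigl(\ket{0}\ket{0}-\ket{1}\ket{1}\bigr).
\]
These states differ only in the relative phase between the Schmidt components, which is exactly the mechanism described in the remark after the axiom.

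\textbf{Step 2: equal marginals.} Compute the partial traces directly:
\[
\tr_B\ketbra{\psi}{\psi}=\tr_B\ketbra{\varphi}{\varphi}=\tfrac12\bigl(\ketbra{0}{0}+\ketbra{1}{1}\bigr),
\]
and symmetrically for $\tr_A$. The cross terms $\ketbra{0}{1}\otimes\ketbra{0}{1}$ and its adjoint drop out under either partial trace, whatever their sign. As a byproduct $E(\psi)=E(\varphi)=\log 2$, so the entropy contribution to $d_E$ is zero.

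\textbf{Step 3: positivity.} The overlap is $\braket{\psi}{\varphi}=\tfrac12(1-1)=0$. Hence $d_{\mathrm{FS}}([\psi],[\varphi])=\arccos 0=\pi/2$ and $d_E([\psi],[\varphi])=\pi/2>0$. In particular $[\psi]\neq[\varphi]$, since equal rays would have overlap of modulus $1$.

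There is no real obstacle here. The only point needing care is Step 2: one must check that the cross terms are killed by both partial traces, which holds because $\tr(\ketbra{0}{1})=0$. Optionally, I could note that the same argument works for any pair of states with equal Schmidt coefficients but different Schmidt phases, as long as their overlap has modulus less than $1$.
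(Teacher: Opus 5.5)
Your proof is correct and is essentially the paper's argument: both take states with identical Schmidt coefficients that differ only in a relative Schmidt phase, so the marginals coincide, and then use $d_E \ge d_{\mathrm{FS}} > 0$. You specialize to the Bell pair $\Phi^{\pm}$ (the same example the paper uses in Theorem~\ref{thm:entanglement-bounds}(iii)) and compute the overlap explicitly instead of appealing to non-degeneracy of $d_{\mathrm{FS}}$, which also makes the distinctness of the two rays immediate.
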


\begin{proof}
Let $[\psi],[\varphi]$ be bipartite pure states satisfying
\[
\tr_A(\ketbra{\psi}{\psi}) 
= 
\tr_A(\ketbra{\varphi}{\varphi}),
\qquad
\tr_B(\ketbra{\psi}{\psi}) 
= 
\tr_B(\ketbra{\varphi}{\varphi}),
\]
but assume $[\psi] \neq [\varphi]$. Such states exist; one may fix a Schmidt basis 
$\{\ket{i}_A \otimes \ket{i}_B\}$ and Schmidt coefficients 
$\{\lambda_i\}$, and consider
\[
\ket{\psi} = \sum_i \sqrt{\lambda_i}\,\ket{i}_A\ket{i}_B,
\qquad
\ket{\varphi} = \sum_i \sqrt{\lambda_i}\,e^{i\theta_i}
\ket{i}_A\ket{i}_B,
\]
where the phases are not all equal modulo a global phase.
These states have identical reduced density matrices but 
represent distinct rays whenever the phase vector 
$(\theta_i)$ is nontrivial.\\
Since $[\psi] \neq [\varphi]$, non-degeneracy of the 
Fubini--Study metric implies
\(
d_{\mathrm{FS}}([\psi],[\varphi]) > 0.
\)
Therefore
\[
d_E([\psi],[\varphi]) 
\ge d_{\mathrm{FS}}([\psi],[\varphi]) 
> 0.
\]
Thus there exist states with identical marginals but strictly positive distance, and Axiom~\ref{ax:entanglement_awareness} is satisfied.
\end{proof}

\begin{theorem}[Measurement Pseudometrics]
Let $\mathcal{M} = \{M_m\}$ be a POVM on $\Hilbert$. Define
\[
d_\mathcal{M}([\psi],[\varphi])
=
\sqrt{\sum_m 
\big(
\bra{\psi} M_m^\dagger M_m \ket{\psi}
-
\bra{\varphi} M_m^\dagger M_m \ket{\varphi}
\big)^2 }.
\]
Then $d_\mathcal{M}$ satisfies Axiom~\ref{ax:measurement_contextuality} and is a pseudometric.
\end{theorem}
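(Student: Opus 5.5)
The idea is to realize $d_\mathcal{M}$ as a pullback of the Euclidean metric. Define the probability map
\[
P_\mathcal{M} : \Proj(\Hilbert) \to \mathbb{R}^{|\mathcal{M}|}, \qquad [\psi] \mapsto p_\mathcal{M}(\psi) = \bigl(\bra{\psi} M_m^\dagger M_m \ket{\psi}\bigr)_m ,
\]
so that
\[
d_\mathcal{M}([\psi],[\varphi]) = \bigl\| P_\mathcal{M}([\psi]) - P_\mathcal{M}([\varphi]) \bigr\|_2 .
\]
All of the claims then reduce to properties of the Euclidean norm together with one well-definedness check.

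\textbf{Step 1: well-definedness on rays.} Replacing $\ket{\psi}$ by $e^{i\theta}\ket{\psi}$ multiplies each quantity $\bra{\psi} M_m^\dagger M_m\ket{\psi}$ by $e^{-i\theta}e^{i\theta}=1$. Hence $P_\mathcal{M}$ descends to $\Proj(\Hilbert)$, taking normalized representatives as in the paper's conventions. Each entry is also real, since $M_m^\dagger M_m$ is Hermitian, so the squares in the definition make sense.

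\textbf{Step 2: pseudometric axioms.} We pull these back from $\|\cdot\|_2$ on $\mathbb{R}^{|\mathcal{M}|}$.
\begin{itemize}
\item Nonnegativity holds because $d_\mathcal{M}$ is a norm.
\item $d_\mathcal{M}([\psi],[\psi])=0$ trivially.
\item Symmetry holds because $\|x-y\|_2=\|y-x\|_2$.
\item The triangle inequality follows from Minkowski's inequality: with $x=P_\mathcal{M}([\psi])$, $y=P_\mathcal{M}([\chi])$, $z=P_\mathcal{M}([\varphi])$, we have $\|x-z\|_2\le\|x-y\|_2+\|y-z\|_2$.
\end{itemize}
No positive-definiteness is claimed, which is exactly why the conclusion is only a pseudometric.

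\textbf{Step 3: Axiom~\ref{ax:measurement_contextuality}.} By construction, $d_\mathcal{M}$ depends only on the pair of distributions $p_\mathcal{M}(\psi), p_\mathcal{M}(\varphi)$. Moreover, $d_\mathcal{M}=0$ iff the sum of squares vanishes, iff every component agrees, iff $p_\mathcal{M}(\psi)=p_\mathcal{M}(\varphi)$.

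For the final clause, suppose $\mathcal{M}$ is not informationally complete. Then the map $\rho\mapsto(\operatorname{Tr}(M_m^\dagger M_m\rho))_m$ is not injective on density operators. What we need, however, is a pair of distinct \emph{pure} states with equal statistics. I expect this to be the only real obstacle, since non-injectivity on mixed states does not immediately give pure witnesses.

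My first approach is a direct perturbation. The kernel of the linear map on Hermitian traceless operators is a nonzero subspace; pick $X\neq 0$ in it. Take a pure state $\rho_0=\ketbra{\psi}{\psi}$ and look for a nearby pure state whose difference from $\rho_0$ lies in that kernel.

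If this proves messy, I will fall back on what the paper's remark actually needs. The paper phrases the clause as an existence statement, and the standard examples suffice: a projective measurement in a fixed basis cannot see relative phases, e.g.\ $\ket{+}$ versus $\ket{-}$ under the computational-basis POVM. Alternatively, I would treat the clause as a dimension-counting consequence. The real dimension of $\Proj(\Hilbert)$ is $2d-2$, and one would argue that generic fibers of $P_\mathcal{M}$ restricted to pure states are nontrivial when the rank of the linear map is below $d^2-1$. This argument would be presented briefly, with the main weight of the proof on Steps 1--3.
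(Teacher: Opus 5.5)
Your Steps 1--3 follow the paper's proof. You write $d_\mathcal{M}$ as $\|p_\mathcal{M}(\psi)-p_\mathcal{M}(\varphi)\|_2$ and get nonnegativity, symmetry, the triangle inequality, and ``$d_\mathcal{M}=0$ iff the distributions agree'' from the Euclidean norm. Your ray check, and your remark that the formula needs normalized representatives, are details the paper leaves implicit. That part is complete.

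The gap is in the last clause of Axiom~\ref{ax:measurement_contextuality}. You are right that non-injectivity on density operators does not give two distinct pure states with equal statistics, and the paper's proof simply asserts it. Neither your perturbation argument nor your dimension count can close this, because for $d=\dim\Hilbert\ge 3$ the clause is false.

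Here is a counterexample. Take $X=\ketbra{0}{0}+\ketbra{1}{1}-2\ketbra{2}{2}$, which is traceless, Hermitian and of rank $3$. Let $B_1,\dots,B_k$ be a basis of the traceless Hermitian operators with $\operatorname{Tr}(B_jX)=0$. For small $\epsilon>0$, take the POVM with effects $M_{\pm j}^\dagger M_{\pm j}=\frac{1}{2k}(\mathbb{I}\pm\epsilon B_j)$.
\begin{itemize}
\item Its effects span exactly the Hermitian operators orthogonal to $X$. So it is not informationally complete: $\mathbb{I}/d$ and $\mathbb{I}/d+tX$ give the same statistics.
\item Now suppose $p_\mathcal{M}(\psi)=p_\mathcal{M}(\varphi)$. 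Then $\ketbra{\psi}{\psi}-\ketbra{\varphi}{\varphi}$ is orthogonal to that span, so it equals $cX$ for some real $c$.
\item That difference has rank at most $2$, while $X$ has rank $3$. Hence $c=0$ and $[\psi]=[\varphi]$.
\end{itemize}
Thus $d_\mathcal{M}$ is a genuine metric on $\Proj(\Hilbert)$ even though $\mathcal{M}$ is not informationally complete.

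This example defeats both of your routes. The kernel contains no nonzero difference of pure states, so the perturbation has nothing to find. Restricted to traceless Hermitian operators, the linear map has rank $d^2-2<d^2-1$, yet it is injective on rays, so the claimed nontrivial fibers do not exist.

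What is true is narrower. For $\dim\Hilbert=2$ the clause holds: pick a direction orthogonal to the Bloch parts of all effects, take a pure state whose Bloch vector is not orthogonal to it, and reflect that Bloch vector along it. The clause also holds for particular non-informationally-complete POVMs, such as your computational-basis example with $\ket{+}$ and $\ket{-}$. You should therefore state the clause as an existential illustration or restrict it to qubits.

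Nothing else in the statement depends on the clause. A pseudometric need not have distinct points at distance zero, so your Steps 1--3 already prove everything in the statement that is actually correct. The paper's phrase ``pseudometric rather than a true metric'' is exactly the part that fails in general.
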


\begin{proof}
Let
\[
p_\mathcal{M}(\psi) := \{\bra{\psi} M_m^\dagger M_m \ket{\psi}\}_m \in \mathbb{R}^n,
\]
be the measurement probability vector.  
By definition,
\(
d_\mathcal{M}([\psi],[\varphi]) = \| p_\mathcal{M}(\psi) - p_\mathcal{M}(\varphi) \|_2.
\)
Since the Euclidean norm is symmetric and satisfies the triangle inequality, $d_\mathcal{M}$ is symmetric and subadditive.\\
However, if the POVM $\mathcal{M}$ is not informationally complete, there exist distinct states $[\psi] \neq [\varphi]$ with identical measurement distributions $p_\mathcal{M}(\psi) = p_\mathcal{M}(\varphi)$. In such cases, $d_\mathcal{M}([\psi],[\varphi]) = 0$, showing that $d_\mathcal{M}$ is a \emph{pseudometric} rather than a true metric.  \\
By construction, $d_\mathcal{M}$ depends only on the measurement-induced probabilities, so it satisfies Axiom~\ref{ax:measurement_contextuality}.
\end{proof}


%
\section{Characterization Theorems and Relationships}\label{sec:thm}

Building upon the axiomatic framework introduced in Section 3, we now present the central mathematical results. We provide characterization theorems identifying which distances satisfy which axioms, uniqueness results for the Fubini–Study metric, rigorous comparison inequalities between different distance measures, and operational interpretations connecting abstract distances to concrete quantum information tasks. Collectively, these results demonstrate the coherence and power of the axiomatic approach.

\subsection{Characterization of Geometric Distances}\label{subsec:char}

The Fubini–Study distance occupies a canonical position among quantum-inspired distances: it is essentially the unique metric (up to monotonic transformation) satisfying the fundamental axioms and geodesically natural.
\begin{theorem}[Unitary Reduction]
\label{thm:unitary_reduction}
Let $d$ be a distance on $\Proj(\Hilbert)$ satisfying
Axiom~\ref{ax:ray} and Axiom~\ref{ax:unitary_invariance}.
Then there exists a function
\[
g : [0,\pi/2] \to \mathbb{R}_{\ge 0},
\]
such that for all rays $[\psi],[\varphi]$,
\[
d([\psi],[\varphi])
=
g\!\left(d_{\mathrm{FS}}([\psi],[\varphi])\right).
\]
If $d$ is a metric, then $g$ is strictly increasing and $g(0)=0$.
\end{theorem}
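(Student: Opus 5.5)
The first claim follows from Theorem~\ref{thm:overlap}, reparametrized through the arccosine. By that theorem, Axioms~\ref{ax:ray} and~\ref{ax:unitary_invariance} give a function $f:[0,1]\to\mathbb{R}_{\ge 0}$ with $d([\psi],[\varphi])=f(r)$, where $r=|\langle\psi|\varphi\rangle|/(\|\psi\|\,\|\varphi\|)$. By Definition~\ref{def:fubini_study}, $d_{\mathrm{FS}}([\psi],[\varphi])=\arccos r$. Since $\cos:[0,\pi/2]\to[0,1]$ is a bijection with inverse $\arccos$, we have $r=\cos d_{\mathrm{FS}}([\psi],[\varphi])$. Setting $g(\theta):=f(\cos\theta)$ for $\theta\in[0,\pi/2]$ then gives $d=g\circ d_{\mathrm{FS}}$. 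The function $g$ is well defined because $f$ is, and every $\theta\in[0,\pi/2]$ is realized by the two-level pair $\ket{0}$, $\cos\theta\ket{0}+\sin\theta\ket{1}$, which needs $\dim\Hilbert\ge 2$.

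For the metric part, $g(0)=0$ follows from $d([\psi],[\psi])=0$ together with $d_{\mathrm{FS}}([\psi],[\psi])=0$. Non-degeneracy (Axiom~\ref{ax:nondegeneracy}) gives $g(\theta)>0$ for $\theta>0$, since $d_{\mathrm{FS}}>0$ exactly for distinct rays. The triangle inequality applied to three collinear rays $\ket{\psi(0)}$, $\ket{\psi(\theta_1)}$, $\ket{\psi(\theta_1+\theta_2)}$ on the real great circle $\ket{\psi(\theta)}=\cos\theta\ket{0}+\sin\theta\ket{1}$ then yields subadditivity, $g(\theta_1+\theta_2)\le g(\theta_1)+g(\theta_2)$, exactly as in the proof of Theorem~\ref{thm:classification}.

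The main obstacle is strict monotonicity. The metric axioms together with invariance do \emph{not} by themselves force $g$ to be strictly increasing. For example, the discrete metric ($d=1$ on distinct rays) is ray-well-defined, unitarily invariant and a metric, but it gives $g(\theta)=1$ for all $\theta>0$. I would therefore take the metric hypothesis to include two-level geodesic additivity (Axiom~\ref{ax:geodesic}), as in Theorem~\ref{thm:classification} and Remark~\ref{rem:fs_unique}. With that axiom, the same three collinear rays satisfy $|\langle\psi(0)|\psi(\theta_1+\theta_2)\rangle|=\cos\theta_1\cos\theta_2\cdot(\text{overlap factor})$, which is the product condition of the axiom along the geodesic, and so $g(\theta_1+\theta_2)=g(\theta_1)+g(\theta_2)$. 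For $\theta'>\theta$ this gives $g(\theta')=g(\theta)+g(\theta'-\theta)>g(\theta)$, using positivity of $g$ on $(0,\pi/2]$. This is the route of Theorem~\ref{thm:monotonicity}, upgraded from weak to strict monotonicity by non-degeneracy. In writing the proof I would state explicitly that the strict-increase conclusion relies on this additional axiom (or on any hypothesis ensuring additivity along geodesics), and record the discrete metric as the counterexample showing that some such hypothesis is needed.
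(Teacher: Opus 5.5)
Your derivation of $g=f\circ\cos$, $g(0)=0$ and $g>0$ on $(0,\pi/2]$ is exactly the paper's argument. You are also right that the strict-increase clause is false under the stated hypotheses. The discrete metric is a valid counterexample. So is $g(\theta)=2-\theta/\pi$ for $\theta>0$, $g(0)=0$, which gives a ray-well-defined, unitarily invariant metric with $g$ decreasing on $(0,\pi/2]$.

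The paper's proof obtains monotonicity by citing Theorem~\ref{thm:monotonicity}. That theorem presupposes Axiom~\ref{ax:geodesic}, which is not among the hypotheses, and it asserts only weak monotonicity. Making that axiom explicit and upgrading to strictness via $g(\theta')=g(\theta)+g(\theta'-\theta)>g(\theta)$ is the paper's route with the hidden assumption stated.

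One step fails as written: your check that the triple $\psi(0),\psi(\theta_1),\psi(\theta_1+\theta_2)$ meets the hypothesis of Axiom~\ref{ax:geodesic}. For $\theta_1,\theta_2>0$ with $\theta_1+\theta_2\le\pi/2$ one has $|\langle\psi(0)|\psi(\theta_1+\theta_2)\rangle|=\cos\theta_1\cos\theta_2-\sin\theta_1\sin\theta_2<\cos\theta_1\cos\theta_2$. So the product condition $|\langle\psi_1|\psi_3\rangle|=|\langle\psi_1|\psi_2\rangle|\,|\langle\psi_2|\psi_3\rangle|$ fails for every triple of distinct rays ordered along a minimizing geodesic, and no ``overlap factor'' repairs this. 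Read literally, the axiom yields no additivity for the triples you need.

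Where the product condition does hold on the same great circle, e.g.\ $\psi(0),\psi(x),\psi(2x)$ with $\tan^2x=2$, the middle ray is not between the outer two. There even $d_{\mathrm{FS}}$ is not additive, since $\arccos\tfrac13\neq2\arccos\tfrac{1}{\sqrt3}$.

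To close the argument, state the extra hypothesis as additivity under Fubini--Study betweenness: $d([\psi_1],[\psi_3])=d([\psi_1],[\psi_2])+d([\psi_2],[\psi_3])$ whenever $d_{\mathrm{FS}}([\psi_1],[\psi_3])=d_{\mathrm{FS}}([\psi_1],[\psi_2])+d_{\mathrm{FS}}([\psi_2],[\psi_3])$. This is how the paper actually uses the axiom in Theorem~\ref{thm:additivity_rigidity}. Your triple then qualifies, and the rest of your proof goes through.
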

\begin{proof}
By Theorem~\ref{thm:overlap}, there exists
$f:[0,1]\to\mathbb{R}_{\ge 0}$ such that
\[
d([\psi],[\varphi])
=
f(|\braket{\psi}{\varphi}|).
\]
For normalized states,
\[
d_{\mathrm{FS}}([\psi],[\varphi])
=
\arccos |\braket{\psi}{\varphi}|.
\]
Define
\[
g(\theta) := f(\cos\theta),
\qquad \theta \in [0,\pi/2].
\]
Then
\[
d([\psi],[\varphi])
=
g(d_{\mathrm{FS}}([\psi],[\varphi])).
\]
If $d$ is a metric, then $d([\psi],[\psi])=0$ implies
$g(0)=0$, and non-degeneracy implies $g(\theta)>0$
for $\theta>0$. Monotonicity follows from
Theorem~\ref{thm:monotonicity}.
\end{proof}
\begin{theorem}[Additivity Rigidity]
\label{thm:additivity_rigidity}
Let $d$ be a metric on $\Proj(\Hilbert)$ satisfying
Axioms~\ref{ax:ray}, \ref{ax:unitary_invariance},
\ref{ax:triangle}, and \ref{ax:geodesic}.
Let $g$ be the function from
Theorem~\ref{thm:unitary_reduction}.
Then
\[
g(\theta_1+\theta_2)
=
g(\theta_1)+g(\theta_2),
\qquad
\forall \theta_1,\theta_2\ge 0,
\;
\theta_1+\theta_2\le\pi/2.
\]
Consequently,
\(
g(\theta)=c\,\theta, 
\)
for some constant $c>0$, and hence
\(
d = c\, d_{\mathrm{FS}}.
\)
\end{theorem}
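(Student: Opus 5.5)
The plan is to reduce everything to the one-variable function $g$ supplied by Theorem~\ref{thm:unitary_reduction}, extract a Cauchy functional equation for $g$ from the geodesic axiom, and then use monotonicity to force linearity. By Theorem~\ref{thm:unitary_reduction}, Axioms~\ref{ax:ray} and \ref{ax:unitary_invariance} give $d([\psi],[\varphi])=g(d_{\mathrm{FS}}([\psi],[\varphi]))$ with $g:[0,\pi/2]\to\mathbb{R}_{\ge0}$. Since $d$ is a metric, the same theorem gives $g(0)=0$, $g(\theta)>0$ for $\theta>0$, and $g$ increasing.

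\textbf{Additivity.} Fix an orthonormal pair $\ket{0},\ket{1}$ and the real great circle $\ket{\psi(\theta)}=\cos\theta\,\ket{0}+\sin\theta\,\ket{1}$. For $\theta_1,\theta_2\ge0$ with $\theta_1+\theta_2\le\pi/2$, take
\[
\psi_1=\psi(0),\qquad \psi_2=\psi(\theta_1),\qquad \psi_3=\psi(\theta_1+\theta_2).
\]
Then $d_{\mathrm{FS}}([\psi_1],[\psi_2])=\theta_1$, $d_{\mathrm{FS}}([\psi_2],[\psi_3])=\theta_2$ and $d_{\mathrm{FS}}([\psi_1],[\psi_3])=\theta_1+\theta_2$. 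The three rays lie in order along a geodesic of the two-dimensional subspace, so Axiom~\ref{ax:geodesic} yields
\[
g(\theta_1+\theta_2)=g(\theta_1)+g(\theta_2).
\]

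\textbf{Main obstacle.} The hypothesis in Axiom~\ref{ax:geodesic} is stated as $|\langle\psi_1|\psi_3\rangle|=|\langle\psi_1|\psi_2\rangle|\,|\langle\psi_2|\psi_3\rangle|$. Read literally this is $\cos(\theta_1+\theta_2)=\cos\theta_1\cos\theta_2$, which fails for interior points of the geodesic. I would therefore invoke the axiom in its geometric form ("lying in order along a geodesic", i.e. $\theta_{13}=\theta_{12}+\theta_{23}$) and note this interpretation explicitly. Under the literal product condition the additivity would only be available in degenerate configurations, and the argument would not go through. The triangle inequality (Axiom~\ref{ax:triangle}) alone only gives subadditivity, as in Theorem~\ref{thm:classification}, so it cannot replace this step.

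\textbf{Linearity.} Iterating the Cauchy equation gives $g(n\theta)=n\,g(\theta)$ whenever $n\theta\le\pi/2$. Hence $g(q\cdot\pi/2)=q\,g(\pi/2)$ for all rationals $q\in[0,1]$. Set $c:=2g(\pi/2)/\pi$, which is positive by non-degeneracy. For arbitrary $\theta\in[0,\pi/2]$, choose rationals $q_k\uparrow 2\theta/\pi$ and $r_k\downarrow 2\theta/\pi$. Monotonicity of $g$ sandwiches
\[
c\,q_k\pi/2\le g(\theta)\le c\,r_k\pi/2,
\]
so $g(\theta)=c\theta$. Therefore $d=c\,d_{\mathrm{FS}}$ with $c>0$, completing the proof.
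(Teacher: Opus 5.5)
Your proof is correct and follows the same route as the paper. You reduce to $g$ via Theorem~\ref{thm:unitary_reduction}, apply Axiom~\ref{ax:geodesic} to three rays on a real great circle to get Cauchy's equation, and use monotonicity to force $g(\theta)=c\,\theta$. Your caveat is well founded: the literal product condition $\cos(\theta_1+\theta_2)=\cos\theta_1\cos\theta_2$ holds only in degenerate configurations, and the paper's proof tacitly uses the same ``in order along a geodesic'' reading. Your rational-sandwich step is also more rigorous than the paper's assertion that monotonicity alone implies continuity.
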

\begin{proof}
By Theorem~\ref{thm:unitary_reduction},
\[
d([\psi],[\varphi])
=
g(d_{\mathrm{FS}}([\psi],[\varphi])).
\]

Consider three rays lying along a common geodesic
in a two-dimensional subspace with
Fubini--Study angles $\theta_1$ and $\theta_2$.
Then
\[
d_{\mathrm{FS}}([\psi_1],[\psi_3])
=
\theta_1+\theta_2.
\]
By Axiom~\ref{ax:geodesic},
\[
d([\psi_1],[\psi_3])
=
d([\psi_1],[\psi_2])
+
d([\psi_2],[\psi_3]).
\]
In terms of $g$, this gives
\[
g(\theta_1+\theta_2)
=
g(\theta_1)+g(\theta_2).
\]
Thus $g$ is additive on $[0,\pi/2]$.
Since $g$ is monotone (Theorem~\ref{thm:monotonicity}),
it is continuous, and the only continuous additive
functions on an interval are linear.
Hence
\[
g(\theta)=c\theta.
\]
\end{proof}

\begin{definition}[Riemannian Metric \cite{lee2018introduction}]\label{def:rm}
A \emph{Riemannian metric} on a smooth manifold $\mathscr{M}$ is a smooth assignment of an inner product 
\[
g_p : T_p \mathscr{M} \times T_p \mathscr{M} \to \mathbb{R},
\]
to each tangent space $T_p \mathscr{M}$ at $p \in \mathscr{M}$, such that for any smooth vector fields $X,Y$ on $\mathscr{M}$, the mapping $p \mapsto g_p(X_p,Y_p)$ is smooth. 
\end{definition}
This inner product allows one to define lengths of tangent vectors, angles between vectors, and hence notions of distance and volume on $\mathscr{M}$.

\begin{theorem}[Riemannian Rigidity]
\label{thm:riemannian_rigidity}
Let $g$ be a Riemannian metric on $\Proj(\Hilbert)$ (as given in Definition~\ref{def:rm})
that is invariant under the unitary group.
Then $g$ is proportional to the Fubini--Study metric.
\end{theorem}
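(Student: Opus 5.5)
The argument is the standard homogeneous-space one: by invariance, the metric is determined by its value at a single base point, and the isotropy group there is large enough to force that value up to scale.

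\textbf{Transitivity.} Fix a base ray $[e_1]$, where $e_1$ is the first vector of an orthonormal basis of $\Hilbert$. The unitary group $\mathcal{U}(\Hilbert)$ acts transitively on $\Proj(\Hilbert)\cong\mathbb{C}P^{d-1}$. Hence any invariant Riemannian metric $g$ is determined by its value $g_{[e_1]}$ on the single tangent space $T_{[e_1]}\Proj(\Hilbert)$. Indeed, if $U[e_1]=[\psi]$, invariance forces $g_{[\psi]}(U_*X,U_*Y)=g_{[e_1]}(X,Y)$.

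\textbf{Isotropy action.} The stabilizer of $[e_1]$ is $\mathcal{U}(1)\times\mathcal{U}(d-1)$, consisting of the block-diagonal unitaries $\mathrm{diag}(e^{i\alpha},V)$. I will identify $T_{[e_1]}\Proj(\Hilbert)$ with the orthogonal complement $e_1^\perp\cong\mathbb{C}^{d-1}$, using horizontal vectors $\delta\psi\perp e_1$ exactly as in Definition~\ref{def:fubini_study}. Under this identification, $\mathrm{diag}(e^{i\alpha},V)$ acts by $\delta\psi\mapsto e^{-i\alpha}V\delta\psi$. So $g_{[e_1]}$ is a real inner product on $\mathbb{R}^{2(d-1)}\cong\mathbb{C}^{d-1}$ that is invariant under all of $\mathcal{U}(d-1)$, including multiplication by scalars $e^{i\beta}$.

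\textbf{Key step: uniqueness of the invariant inner product.} Write $g_{[e_1]}(X,Y)=\langle X, AY\rangle_{\mathbb{R}}$, where $\langle X,Y\rangle_{\mathbb{R}}=\mathrm{Re}\langle X|Y\rangle$ and $A$ is a positive, $\mathbb{R}$-linear, self-adjoint operator. Invariance means $A$ commutes with every element of $\mathcal{U}(d-1)$.
\begin{enumerate}[label=(\alph*)]
\item Commuting with the scalars $i\mathbb{I}$ makes $A$ complex-linear.
\item $\mathcal{U}(d-1)$ acts irreducibly on $\mathbb{C}^{d-1}$, so Schur's lemma gives $A=\lambda\mathbb{I}$ with $\lambda\in\mathbb{C}$.
\item Self-adjointness and positivity then force $\lambda=c>0$.
\end{enumerate}
Thus $g_{[e_1]}=c\,\mathrm{Re}\langle\delta\psi|\delta\psi'\rangle$, which is $c$ times $g_{\mathrm{FS},[e_1]}$.

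The case $d=2$ deserves a separate check. There the isotropy group $\mathcal{U}(1)$ acts on $\mathbb{C}\cong\mathbb{R}^2$ by rotations. Any rotation-invariant inner product on $\mathbb{R}^2$ is a multiple of the standard one, so the conclusion still holds.

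\textbf{Globalization.} Both $g$ and $c\,g_{\mathrm{FS}}$ are $\mathcal{U}(\Hilbert)$-invariant and agree at $[e_1]$. By the transitivity step they agree everywhere. Well-definedness is automatic, since $g_{\mathrm{FS}}$ is invariant and the isotropy group preserves $g_{[e_1]}$.

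\textbf{Main obstacle.} The delicate point is the key step: making Schur's lemma rigorous for a \emph{real} inner product on a complex representation. The risk is that $A$ could a priori mix in a complex-antilinear part. I will rule this out through the commutation with $i\mathbb{I}$, which relies on the phase subgroup $\mathcal{U}(1)$ lying inside the isotropy group. The same commutation also excludes an antisymmetric, symplectic-type component, because $A$ is self-adjoint for $\mathrm{Re}\langle\cdot|\cdot\rangle$.
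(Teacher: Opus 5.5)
Your proposal is correct and follows the same homogeneous-space route as the paper: $\mathcal{U}(\Hilbert)$ acts transitively on $\mathbb{CP}^{d-1}=U(d)/(U(1)\times U(d-1))$, and the isotropy representation on the tangent space is irreducible, which forces any invariant metric to be $c\,g_{\mathrm{FS}}$. The paper simply cites the fact that an isotropy-irreducible space carries a unique invariant metric up to scale; you prove that step directly (commuting with $i\mathbb{I}$ makes $A$ complex-linear, complex Schur gives $A=\lambda\mathbb{I}$, and self-adjointness plus positivity force $\lambda\in\mathbb{R}_{>0}$), which is a valid, more self-contained version of the same argument.
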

\begin{proof}
For $\dim\Hilbert = d < \infty$,
\[
\Proj(\Hilbert)
\simeq
\mathbb{CP}^{d-1}
=
U(d)/(U(1)\times U(d-1)).
\]
Thus $\Proj(\Hilbert)$ is a compact homogeneous
space under the unitary group. The isotropy representation of
$U(1)\times U(d-1)$ on the tangent space
is irreducible.
On an isotropy-irreducible homogeneous space,
every invariant Riemannian metric is unique
up to a multiplicative constant. Since the Fubini--Study metric is
unitarily invariant, any other unitarily
invariant Riemannian metric must be
\[
g = c\, g_{\mathrm{FS}},
\]
for some $c>0$.
\end{proof}
\begin{corollary}[Uniqueness of the Fubini--Study Metric]
\label{cor:fs_uniqueness}
The Fubini--Study metric is uniquely characterized, up to a positive
multiplicative constant, by either of the following conditions:

\begin{enumerate}[label=(\roman*)]
\item It is a metric on $\Proj(\Hilbert)$ satisfying
Axiom~\ref{ax:ray}(ray well-definedness), Axiom~\ref{ax:unitary_invariance} (unitary invariance),
the triangle inequality, and Axiom~\ref{ax:geodesic} (two-level geodesic additivity).

\item It is a Riemannian metric on $\Proj(\Hilbert)$
invariant under the unitary group.
\end{enumerate}
\end{corollary}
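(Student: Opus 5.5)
The corollary packages the two preceding rigidity theorems, so the plan is to prove each characterization separately and, for each, check both directions: that $d_{\mathrm{FS}}$ has the stated properties (existence), and that any other object with those properties is a positive multiple of it (uniqueness).

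For (i), existence means verifying the axioms for $d_{\mathrm{FS}}=\arccos|\braket{\psi}{\varphi}|$.
\begin{itemize}
\item Ray well-definedness and unitary invariance are immediate, since $|\braket{\psi}{\varphi}|$ is invariant under phases and under simultaneous unitaries.
\item The triangle inequality follows because $d_{\mathrm{FS}}$ is the geodesic distance of a Riemannian metric (Definition~\ref{def:fubini_study}). Alternatively, apply the Corollary on canonical examples, whose $g_{\mathrm{FS}}(\theta)=\theta$ is trivially subadditive.
\item For geodesic additivity, restrict to a two-dimensional subspace. A geodesic can be parametrized, after a unitary, as $\cos\theta\ket{0}+\sin\theta\ket{1}$. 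For three points at angles $0\le\theta_a\le\theta_b\le\theta_c\le\pi/2$, the overlaps are $\cos(\theta_b-\theta_a)$, and so on, and the FS distances add.
\end{itemize}
Uniqueness for (i) is then exactly Theorem~\ref{thm:additivity_rigidity}. Any such $d$ reduces via Theorem~\ref{thm:unitary_reduction} to $g\circ d_{\mathrm{FS}}$, the function $g$ is additive on $[0,\pi/2]$, and so $g(\theta)=c\theta$ with $c>0$ by non-degeneracy.

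For (ii), existence is that $g_{\mathrm{FS}}$ is a unitarily invariant Riemannian metric. This holds because $\langle\delta\psi|\delta\psi\rangle-|\langle\psi|\delta\psi\rangle|^2$ is preserved under $\psi\mapsto U\psi$, $\delta\psi\mapsto U\delta\psi$. Uniqueness is Theorem~\ref{thm:riemannian_rigidity}. I would also record that the induced geodesic distance of $c\,g_{\mathrm{FS}}$ is $\sqrt{c}\,d_{\mathrm{FS}}$, so the two characterizations single out the same distance up to a constant.

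The main obstacle is the regularity step in (i): passing from an additive $g$ to a linear one. Theorem~\ref{thm:additivity_rigidity} relies on monotonicity from Theorem~\ref{thm:monotonicity}. I would make this airtight directly: additivity together with $g\ge0$ gives $g(\theta_1+\theta_2)\ge g(\theta_1)$, so $g$ is nondecreasing. A nondecreasing solution of Cauchy's equation on an interval is linear. To see this, additivity gives $g(q\theta)=qg(\theta)$ for rational $q$ with $q\theta\le\pi/2$, and monotonicity squeezes irrational arguments between rational ones.

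A secondary subtlety is that Axiom~\ref{ax:geodesic} must supply enough triples to realize every pair $(\theta_1,\theta_2)$ with $\theta_1+\theta_2\le\pi/2$. The explicit real-amplitude parametrization above shows that it does, since for such angles the overlap condition $\cos(\theta_1+\theta_2)=\cos\theta_1\cos\theta_2$ is not literally the hypothesis. So I would interpret the axiom's ``lie along a geodesic'' clause as the operative condition and note this reading explicitly.
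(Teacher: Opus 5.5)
Your proposal is correct and follows the paper's route---(i) from Theorem~\ref{thm:additivity_rigidity} applied to the reduction $d=g\circ d_{\mathrm{FS}}$, (ii) from Theorem~\ref{thm:riemannian_rigidity}---and your refinements are sound: the existence checks, the direct proof that $g$ is nondecreasing from additivity and $g\ge 0$ followed by the rational squeeze (which repairs the paper's unjustified step ``monotone, hence continuous''), and the reading of Axiom~\ref{ax:geodesic} as additivity along minimizing geodesic segments, which is what the paper tacitly uses since the literal overlap hypothesis $\cos(\theta_1+\theta_2)=\cos\theta_1\cos\theta_2$ forces one of the angles to vanish. The one slip is your fallback for the triangle inequality via the canonical-examples corollary, which is circular because that corollary presupposes the triangle inequality for $d_{\mathrm{FS}}$, and subadditivity of $g(\theta)=\theta$ says nothing about triples off a common geodesic; your primary argument, that $d_{\mathrm{FS}}$ is the geodesic distance of a Riemannian metric, already suffices.
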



\begin{corollary}[Bures as a Radial Reparameterization]
On pure states, the Bures distance
\[
d_{\mathrm{B}}([\psi],[\varphi])
=
\sqrt{2(1-|\braket{\psi}{\varphi}|)},
\]
is a strictly increasing function of the Fubini--Study distance:
\[
d_{\mathrm{B}}
=
2\sin\!\left(\frac{d_{\mathrm{FS}}}{2}\right).
\]
Consequently, $d_{\mathrm{B}}$ and $d_{\mathrm{FS}}$
induce the same topology on $\Proj(\Hilbert)$
and have the same minimizing geodesics as point sets,
differing only by arc-length reparameterization.
\end{corollary}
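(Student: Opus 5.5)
The plan has three parts: derive the closed-form identity, deduce strict monotonicity, and then obtain the topological and geodesic consequences.

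\textbf{The identity.} Write $r=|\braket{\psi}{\varphi}|$ for normalized representatives. By Definition~\ref{def:fubini_study}, $d_{\mathrm{FS}}=\theta:=\arccos r\in[0,\pi/2]$, so $r=\cos\theta$. The half-angle identity $1-\cos\theta=2\sin^2(\theta/2)$ gives
\[
d_{\mathrm{B}}=\sqrt{2(1-\cos\theta)}=\sqrt{4\sin^2(\theta/2)}=2\bigl|\sin(\theta/2)\bigr|=2\sin(\theta/2).
\]
The absolute value can be dropped because $\theta/2\in[0,\pi/4]$. This is exactly the representation $d_{\mathrm B}=g(d_{\mathrm{FS}})$ promised by Theorem~\ref{thm:unitary_reduction}, with $g(\theta)=2\sin(\theta/2)$.

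\textbf{Strict monotonicity.} Here $g'(\theta)=\cos(\theta/2)\ge\cos(\pi/4)>0$ on $[0,\pi/2]$. Hence $g$ is a strictly increasing homeomorphism of $[0,\pi/2]$ onto $[0,\sqrt2]$, with $g(0)=0$.

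\textbf{Same topology.} By concavity, $\sin x\ge (2\sqrt2/\pi)\,x$ on $[0,\pi/4]$, and also $\sin x\le x$. These give the bi-Lipschitz bounds
\[
\tfrac{2\sqrt2}{\pi}\,d_{\mathrm{FS}}\le d_{\mathrm B}\le d_{\mathrm{FS}}.
\]
So the two metrics have the same open balls up to radius rescaling, and therefore the same topology and the same Cauchy sequences. Alternatively, one can argue directly: since $g$ is a strictly increasing continuous bijection with $g(0)=0$, each ball satisfies $B_{\mathrm B}(x,\varepsilon)=B_{\mathrm{FS}}(x,g^{-1}(\varepsilon))$ for $\varepsilon\le\sqrt2$.

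\textbf{Geodesics: the main obstacle.} This step needs care, because $d_{\mathrm B}$ is a chordal-type metric and is not a length metric. A naive claim such as ``same minimizing geodesics'' would fail if it meant $d_{\mathrm B}$-length-minimizing curves with additivity, since $g$ is strictly concave and $g(\theta_1+\theta_2)<g(\theta_1)+g(\theta_2)$. My plan is to interpret the claim through induced length. For a curve $\gamma$, compute its $d_{\mathrm B}$-length as the supremum over partitions of $\sum d_{\mathrm B}(\gamma(t_i),\gamma(t_{i+1}))$. Because $g(\theta)=\theta+O(\theta^3)$ with $g'(0)=1$, and $d_{\mathrm{FS}}$ of a fine partition's steps tends to zero uniformly, the two sums differ by a factor tending to $1$. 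Hence the length functionals of $d_{\mathrm B}$ and $d_{\mathrm{FS}}$ coincide on rectifiable curves. The minimizing curves for the two length structures are then the same point sets, namely the Fubini--Study great-circle arcs (cf.\ Theorem~\ref{thm:additivity_rigidity} and Corollary~\ref{cor:fs_uniqueness}). Along such an arc parametrized by $d_{\mathrm{FS}}$-arclength $\theta$, the Bures chord between the endpoints is $g(\theta)$, so the two distances differ only by the monotone reparameterization $\theta\mapsto 2\sin(\theta/2)$.
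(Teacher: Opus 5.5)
Your proposal is correct, and the identity step matches the paper's: substitute $|\braket{\psi}{\varphi}|=\cos d_{\mathrm{FS}}$ and apply the half-angle formula. You differ in how the consequences are established.

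The paper handles both consequences in one line. It argues that $g(\theta)=2\sin(\theta/2)$ is strictly increasing and so ``preserves ordering and topology,'' and that since $d_{\mathrm{B}}=g(d_{\mathrm{FS}})$, ``minimizing curves coincide as point sets.''

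For the topology, your ball identity $B_{\mathrm B}(x,\varepsilon)=B_{\mathrm{FS}}(x,g^{-1}(\varepsilon))$ is the precise version of the paper's claim. Your bi-Lipschitz constant $2\sqrt2/\pi$ sharpens the $2/\pi$ of Theorem~\ref{thm:geometric-comparison}. Note that bi-Lipschitz bounds only give nested balls, not literally the same balls up to rescaling; the ball identity is what gives equality.

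The geodesic clause is the genuine difference, and here monotonicity of $g$ alone does not suffice. Since $\theta\mapsto 2\sin(\theta/2)$ is strictly concave on $[0,\pi]$ and vanishes at $0$, one has $d_{\mathrm B}(x,y)+d_{\mathrm B}(y,z)>d_{\mathrm B}(x,z)$ for every $y\notin\{x,z\}$. So $d_{\mathrm B}$ has no nontrivial metric segments at all. Likewise, the snowflake metric $\sqrt{d_{\mathrm{FS}}}$, an equally monotone function of $d_{\mathrm{FS}}$, assigns infinite length to every nonconstant curve.

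Your argument isolates the property that actually does the work, namely $g(\theta)/\theta\to 1$ as $\theta\to 0$. This forces the induced length functionals of $d_{\mathrm B}$ and $d_{\mathrm{FS}}$ to agree. Hence the length metric induced by $d_{\mathrm B}$ is $d_{\mathrm{FS}}$, and the minimizers coincide. The paper's route buys brevity; yours proves the geodesic statement under its only coherent reading.

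One small correction: Theorem~\ref{thm:additivity_rigidity} and Corollary~\ref{cor:fs_uniqueness} concern uniqueness of the metric, not the identification of its minimizers as great-circle arcs, so that citation is off target. Your argument does not need it, since equal length functionals already force equal minimizers.
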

\begin{proof}
For normalized states,
\[
d_{\mathrm{FS}}
=
\arccos |\braket{\psi}{\varphi}|.
\]
Hence
\(
|\braket{\psi}{\varphi}|
=
\cos(d_{\mathrm{FS}}).
\)
Substituting into the Bures formula gives
\[
d_{\mathrm{B}}
=
\sqrt{2(1-\cos d_{\mathrm{FS}})}
=
2\sin\!\left(\frac{d_{\mathrm{FS}}}{2}\right).
\]
The function $g(\theta)=2\sin(\theta/2)$ is strictly
increasing on $[0,\pi/2]$, hence it preserves ordering
and topology. Since $d_{\mathrm{B}}=g(d_{\mathrm{FS}})$,
minimizing curves coincide as point sets.
\end{proof}




\begin{theorem}[Superposition Sensitivity]
\label{thm:superposition-char}
Let $d([\psi],[\varphi]) = f(|\braket{\psi}{\varphi}|)$, where 
$f:[0,1] \to \mathbb{R}_{\ge 0}$ is strictly decreasing. 
Then $d$ satisfies Axiom~\ref{ax:superposition}: coherent superpositions
of orthogonal rays are always distinguishable, even if their classical probabilities coincide.
\end{theorem}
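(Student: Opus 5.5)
The plan is to reduce the claim to a single fact: two superpositions that are distinct as rays have overlap modulus strictly less than $1$. Strict monotonicity of $f$ then makes the distance strictly positive.

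\emph{Reduction to an overlap computation.} Fix orthonormal representatives $\psi_1,\psi_2$ of the orthogonal rays, and set
\[
\chi=\alpha\psi_1+\beta\psi_2,\qquad \chi'=\alpha'\psi_1+\beta'\psi_2,
\]
with $|\alpha|^2+|\beta|^2=|\alpha'|^2+|\beta'|^2=1$. Both vectors are then normalized. Orthonormality gives
\[
\langle\chi|\chi'\rangle=\bar\alpha\alpha'+\bar\beta\beta'.
\]
This is the standard inner product of $(\alpha,\beta)$ and $(\alpha',\beta')$ in $\mathbb{C}^2$.

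\emph{Cauchy--Schwarz with its equality case.} Cauchy--Schwarz gives $|\langle\chi|\chi'\rangle|\le 1$. Equality holds if and only if $(\alpha',\beta')=e^{i\theta}(\alpha,\beta)$ for some real $\theta$, i.e.\ if and only if $[\chi]=[\chi']$. The axiom's hypothesis is that the coefficient pairs differ ``up to global phase''. Hence $r:=|\langle\chi|\chi'\rangle|<1$. This covers in particular the case $|\alpha|=|\alpha'|$, $|\beta|=|\beta'|$, where only the relative phase differs. Writing $\alpha'=e^{ia}\alpha$ and $\beta'=e^{ib}\beta$ with $\alpha\beta\neq0$, we get $r=\big||\alpha|^2+|\beta|^2e^{i(b-a)}\big|$, which is strictly less than $1$ whenever $b-a\notin 2\pi\mathbb{Z}$. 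I would record this explicitly, since it is the ``even if classical probabilities coincide'' clause.

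\emph{Conclusion via $f$.} Since $f$ is strictly decreasing, $f(r)>f(1)\ge 0$, so
\[
d([\chi],[\chi'])=f(r)>0.
\]
Note that $f(1)=0$ is not needed, only $f(r)>f(1)$ together with $f\ge0$. I would also point out that Theorem~\ref{thm:overlap} ensures ray well-definedness, so the argument does not depend on the chosen representatives.

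\emph{Main obstacle.} The only delicate point is interpreting the axiom's ``$\alpha\neq\alpha'$ or $\beta\neq\beta'$, up to global phase'' correctly. Read literally, coefficient pairs that differ by a global phase give the same ray, where $d=f(1)$ need not be positive. I would state explicitly that the hypothesis means $(\alpha',\beta')\notin U(1)\cdot(\alpha,\beta)$. The Cauchy--Schwarz equality case then matches this condition exactly.
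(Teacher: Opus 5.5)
Your proof is correct and follows the same route as the paper: distinct superposition rays have overlap modulus $r<1$, so strict monotonicity of $f$ gives $d=f(r)>0$. You also spell out the Cauchy--Schwarz equality case and the step $f(r)>f(1)\ge 0$, both of which the paper leaves implicit. One small correction: independence of representatives follows directly from the phase invariance of $|\langle\psi|\varphi\rangle|$, not from Theorem~\ref{thm:overlap}, which assumes ray well-definedness rather than establishing it.
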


\begin{proof}
Let $[\psi_1],[\psi_2]$ be orthogonal rays and consider
normalized superpositions
\[
\ket{\varphi} = \alpha \ket{\psi_1} + \beta \ket{\psi_2},
\qquad
\ket{\varphi'} = \alpha' \ket{\psi_1} + \beta' \ket{\psi_2},
\]
with $|\alpha|^2 + |\beta|^2 = |\alpha'|^2 + |\beta'|^2 = 1$.\\
If $(\alpha,\beta) \neq (\alpha',\beta')$ up to a global phase,
then
\[
|\braket{\varphi}{\varphi'}| < 1.
\]
Since $f$ is strictly decreasing, this implies
\[
d([\varphi],[\varphi']) = f(|\braket{\varphi}{\varphi'}|) > 0.
\]
Hence the distance detects differences between coherent superpositions
of orthogonal rays, even when their classical probabilities coincide.
\end{proof}

\subsection{Relationships Between Distance Measures}
We now establish inequalities relating the distances introduced in Section~\ref{subsec:ax}.
\begin{theorem}[Comparison of Geometric Distances]
\label{thm:geometric-comparison}
For pure states $\ket{\psi},\ket{\varphi}$ with overlap 
$r = |\braket{\psi}{\varphi}| \in [0,1]$:
\begin{align*}
&d_{\mathrm{FS}}(\ket{\psi},\ket{\varphi}) = \arccos(r),\\
&d_{\mathrm{B}}(\ket{\psi},\ket{\varphi}) = \sqrt{2(1-r)},\\
&\sqrt{2(1-r)} \le d_{\mathrm{FS}}(\ket{\psi},\ket{\varphi}) 
\le \frac{\pi}{2}\sqrt{1-r},\\
&\frac{2}{\pi} d_{\mathrm{FS}}(\ket{\psi},\ket{\varphi}) 
\le d_{\mathrm{B}}(\ket{\psi},\ket{\varphi}) 
\le d_{\mathrm{FS}}(\ket{\psi},\ket{\varphi}).
\end{align*}
These bounds are tight in the limits $r \to 0$ and $r \to 1$.
\end{theorem}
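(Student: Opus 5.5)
The plan is to reparametrize everything by the Fubini--Study angle. Set $\theta := d_{\mathrm{FS}}(\ket{\psi},\ket{\varphi}) = \arccos r \in [0,\pi/2]$, so that $r=\cos\theta$. The first two identities are then just Definitions~\ref{def:fubini_study} and~\ref{def:bures}. Using $1-\cos\theta = 2\sin^2(\theta/2)$, we get
\[
\sqrt{2(1-r)} = 2\sin(\theta/2) = d_{\mathrm{B}},
\qquad
\sqrt{1-r} = \sqrt{2}\,\sin(\theta/2),
\]
exactly as in the proof of the Bures reparameterization corollary. In this way every inequality becomes an elementary inequality for $\sin x$ with $x=\theta/2\in[0,\pi/4]$.

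\textbf{Lower bounds.} The inequality $\sqrt{2(1-r)}\le d_{\mathrm{FS}}$ and the upper bound $d_{\mathrm{B}}\le d_{\mathrm{FS}}$ are the same statement, namely $2\sin(\theta/2)\le\theta$. This follows from $\sin x\le x$ for $x\ge 0$.

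\textbf{Upper bounds.} Both remaining inequalities use concavity of $\sin$ on $[0,\pi/2]$, so that $x\mapsto \sin x/x$ is decreasing there.
\begin{itemize}
\item For $d_{\mathrm{FS}}\le \frac{\pi}{2}\sqrt{1-r}$ we need $\theta\le \frac{\pi}{\sqrt2}\sin(\theta/2)$. On $[0,\pi/4]$ we have $\sin x\ge \frac{\sin(\pi/4)}{\pi/4}\,x = \frac{2\sqrt2}{\pi}x$. Substituting $x=\theta/2$ gives $\frac{\pi}{\sqrt2}\sin(\theta/2)\ge \frac{\pi}{\sqrt2}\cdot\frac{2\sqrt2}{\pi}\cdot\frac{\theta}{2}=\theta$.
\item For $\frac{2}{\pi}d_{\mathrm{FS}}\le d_{\mathrm{B}}$ we need $\sin(\theta/2)\ge \theta/\pi$. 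This is Jordan's inequality $\sin x\ge \frac{2}{\pi}x$ on $[0,\pi/2]$ at $x=\theta/2$. It is in fact weaker than the chord bound just used, since $\frac{2\sqrt2}{\pi}>\frac{2}{\pi}$.
\end{itemize}

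\textbf{Tightness.} This is the step requiring the most care, and I expect it to be the main obstacle, because not every bound is sharp at both endpoints.
\begin{itemize}
\item As $r\to1$ (i.e.\ $\theta\to0$), $2\sin(\theta/2)/\theta\to1$. Hence $\sqrt{2(1-r)}\le d_{\mathrm{FS}}$ and $d_{\mathrm{B}}\le d_{\mathrm{FS}}$ are asymptotically tight there.
\item At $r=0$ (i.e.\ $\theta=\pi/2$), $\frac{\pi}{2}\sqrt{1-r}=\frac{\pi}{2}=d_{\mathrm{FS}}$, so that bound is attained with equality.
\item The constant $\frac{2}{\pi}$ is not sharp at $r=0$, where $d_{\mathrm{B}}=\sqrt2$ exceeds $\frac{2}{\pi}\cdot\frac{\pi}{2}=1$. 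Nor is it sharp at $r\to 1$, where $d_{\mathrm{B}}/d_{\mathrm{FS}}\to1$.
\end{itemize}
I would therefore state tightness precisely as follows: each two-sided pair is tight at one endpoint. The optimal constant in the last lower bound is $\min_{\theta}\frac{2\sin(\theta/2)}{\theta}=\frac{2\sqrt2}{\pi}$, attained at $\theta=\pi/2$; the stated $\frac{2}{\pi}$ is a valid but weaker choice. The tightness claim should be read as endpoint sharpness of the respective bounds, rather than sharpness of every bound at both endpoints.
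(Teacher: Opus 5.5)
Your argument is correct. For the second chain, $\frac{2}{\pi}d_{\mathrm{FS}}\le d_{\mathrm{B}}\le d_{\mathrm{FS}}$, it is exactly the paper's: the half-angle identity $d_{\mathrm{B}}=2\sin(d_{\mathrm{FS}}/2)$ plus $\frac{2}{\pi}x\le\sin x\le x$.

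For the first chain you take a different and more rigorous route. The paper only cites ``standard bounds on the concave function $\arccos r$'' and ``its Taylor expansion near $r=1$.'' Concavity of $\arccos$ alone gives chord or tangent bounds that are linear in $r$, and a Taylor expansion is only local, so neither actually delivers the global square-root inequalities. You observe that $\sqrt{2(1-r)}$ is literally $d_{\mathrm{B}}$, which makes the lower bound identical to $d_{\mathrm{B}}\le d_{\mathrm{FS}}$. The upper bound then becomes the chord inequality $\sin x\ge\frac{2\sqrt2}{\pi}x$ on $[0,\pi/4]$. So the whole theorem reduces to two elementary sine inequalities. Equivalently, $\arccos r=2\arcsin(\sqrt{(1-r)/2})$ is convex in $\sqrt{1-r}$, and that is the convexity really at work.

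Your tightness discussion is also sharper than the paper's. The paper merely lists the endpoint values: $d_{\mathrm{FS}}\to\pi/2$ and $d_{\mathrm{B}}\to\sqrt2$ at $r=0$, both $\to0$ at $r=1$. Those same values show $\frac{2}{\pi}d_{\mathrm{FS}}=1<\sqrt2=d_{\mathrm{B}}$ at $r=0$, so the blanket tightness claim only holds in your qualified sense. Your optimal constant $\frac{2\sqrt2}{\pi}$ is right.

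One wording nit: ``each two-sided pair is tight at one endpoint'' is slightly loose. State it bound by bound instead: $\sqrt{2(1-r)}\le d_{\mathrm{FS}}$ and $d_{\mathrm{B}}\le d_{\mathrm{FS}}$ are sharp (in ratio) as $r\to1$. The bound $d_{\mathrm{FS}}\le\frac{\pi}{2}\sqrt{1-r}$ is attained at $r=0$. The $\frac{2}{\pi}$ bound is sharp at neither endpoint.
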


\begin{proof}
Let $r = |\braket{\psi}{\varphi}|$. By definition,
\[
d_{\mathrm{FS}} = \arccos r,
\qquad
d_{\mathrm{B}} = \sqrt{2(1-r)}.
\]
The inequality
\(
\sqrt{2(1-r)} \le \arccos r 
\le \frac{\pi}{2}\sqrt{1-r}
\)
follows from standard bounds on the concave function $\arccos r$ 
on $[0,1]$ and its Taylor expansion near $r=1$.\\
Using the identity
\[
d_{\mathrm{B}} = 2\sin\left(\frac{d_{\mathrm{FS}}}{2}\right),
\]
and the inequality
\(
2\theta/\pi  \le \sin\theta \le \theta
\quad \text{for } \theta\in[0,\pi/2],
\)
we obtain
\(
\frac{2}{\pi} d_{\mathrm{FS}} 
\le d_{\mathrm{B}} 
\le d_{\mathrm{FS}}.
\) Tightness follows from the limits:
\begin{align*}
r\to1 &\Rightarrow d_{\mathrm{FS}}\to 0, \,d_{\mathrm{B}}\to0,\\
r\to0 &\Rightarrow d_{\mathrm{FS}}\to \frac{\pi}{2},\ 
d_{\mathrm{B}}\to \sqrt{2}.  
\end{align*}
\end{proof}

\begin{theorem}[Measurement Distance Bound]
\label{thm:measurement-bound}
Let $\mathcal{M} = \{\ketbra{i}{i}\}_{i=1}^d$ be a complete orthonormal measurement and define
\[
d_{\mathcal{M}}(\ket{\psi},\ket{\varphi})
=
\sqrt{\sum_{i=1}^d 
\left(|\braket{i}{\psi}| - |\braket{i}{\varphi}|\right)^2 }.
\]
Then for all pure states $\ket{\psi},\ket{\varphi}$,
\[
d_{\mathcal{M}}(\ket{\psi},\ket{\varphi})
\le
\|\ket{\psi}-\ket{\varphi}\|.
\]
\end{theorem}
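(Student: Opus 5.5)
The inequality follows coordinatewise from the reverse triangle inequality for complex moduli. Expand both vectors in the measurement basis, writing $a_i := \braket{i}{\psi}$ and $b_i := \braket{i}{\varphi}$ for $i=1,\dots,d$. Since $\{\ket{i}\}$ is an orthonormal basis, Parseval's identity gives
\[
\|\ket{\psi}-\ket{\varphi}\|^2 = \sum_{i=1}^d |a_i - b_i|^2 .
\]
By definition, the left-hand side of the claimed inequality is
\[
d_{\mathcal{M}}(\ket{\psi},\ket{\varphi})^2 = \sum_{i=1}^d \bigl(|a_i|-|b_i|\bigr)^2 .
\]
It therefore suffices to compare the two sums term by term.

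For each $i$, the reverse triangle inequality in $\mathbb{C}$ gives
\[
\bigl||a_i|-|b_i|\bigr| \le |a_i-b_i| .
\]
Squaring and summing over $i$ yields $d_{\mathcal{M}}^2 \le \|\ket{\psi}-\ket{\varphi}\|^2$. Taking square roots, both sides being nonnegative, gives the statement.

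There is no genuine obstacle; the only point needing care is interpretation. The right-hand side depends on the chosen vector representatives, not just on the rays. The left-hand side, however, is phase-invariant, since $|\braket{i}{e^{i\theta}\psi}| = |\braket{i}{\psi}|$. Consequently the inequality holds for every choice of representatives, and so also with the right-hand side replaced by $\min_{\theta}\|e^{i\theta}\ket{\psi}-\ket{\varphi}\|$. That minimum equals $\sqrt{2(1-|\braket{\psi}{\varphi}|)} = d_{\mathrm{B}}$, which gives the sharpened corollary $d_{\mathcal{M}} \le d_{\mathrm{B}} \le d_{\mathrm{FS}}$, the last step by Theorem~\ref{thm:geometric-comparison}. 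I would note this as a remark after the main argument. Equality in the basic bound holds exactly when, for every $i$ with $a_i, b_i \neq 0$, the phases of $a_i$ and $b_i$ coincide.
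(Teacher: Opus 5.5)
Your proof is correct and matches the paper's own argument: expand in the measurement basis, apply the reverse triangle inequality $\bigl||a_i|-|b_i|\bigr|\le|a_i-b_i|$ term by term, then square, sum (Parseval), and take square roots. Your extra remark is also correct and sharpens what the paper states. Because the left side is phase-invariant, you may minimize the right side over global phases, which gives $d_{\mathcal{M}}\le d_{\mathrm{B}}\le d_{\mathrm{FS}}$.
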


\begin{proof}
Let $\alpha_i = \braket{i}{\psi}$ and $\beta_i = \braket{i}{\varphi}$.
By the reverse triangle inequality,
\[
\big||\alpha_i| - |\beta_i|\big|
\le
|\alpha_i - \beta_i|.
\]
Squaring and summing over $i$ yields
\[
d_{\mathcal{M}}^2
=
\sum_i (|\alpha_i| - |\beta_i|)^2
\le
\sum_i |\alpha_i - \beta_i|^2
=
\|\ket{\psi}-\ket{\varphi}\|^2.
\]
Taking square roots completes the proof.
\end{proof}

\begin{theorem}[Measurement-Distinguishability Bound]
\label{thm:measurement-fs}
Let $\mathcal{M} = \{M_i\}$ be any POVM and define the measurement distance
\[
d_{\mathcal{M}}(\ket{\psi},\ket{\varphi})
=
\sum_i \left| p_i - q_i \right|,
\]
where
$p_i = \bra{\psi}M_i^\dagger M_i\ket{\psi}$ and
$q_i = \bra{\varphi}M_i^\dagger M_i\ket{\varphi}$.
Then for pure states,
\[
d_{\mathcal{M}}(\ket{\psi},\ket{\varphi})
\le
2 \sqrt{1 - |\braket{\psi}{\varphi}|^2}
=
2 \sin(d_{\mathrm{FS}}(\ket{\psi},\ket{\varphi})).
\]
Equality is achieved for the optimal Helstrom measurement distinguishing
$\ket{\psi}$ and $\ket{\varphi}$.
\end{theorem}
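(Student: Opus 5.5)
The plan is to reduce the inequality to the trace distance between the two pure-state density operators, and then apply the pure-state formula from the Remark on the Helstrom bound. Write $\rho=\ketbra{\psi}{\psi}$, $\sigma=\ketbra{\varphi}{\varphi}$ and $E_i := M_i^\dagger M_i$. The operators $E_i$ are positive and satisfy $\sum_i E_i=\mathbb{I}$. Then $p_i-q_i=\operatorname{Tr}(E_i(\rho-\sigma))$, so the problem becomes bounding $\sum_i|\operatorname{Tr}(E_i\Delta)|$ for the traceless Hermitian operator $\Delta:=\rho-\sigma$.

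\textbf{Step 1: the trace-norm bound.} Take the Jordan decomposition $\Delta=P-Q$, where $P,Q\ge 0$ have orthogonal supports. Since $\operatorname{Tr}\Delta=0$, we get $\operatorname{Tr}P=\operatorname{Tr}Q=\tfrac12\|\Delta\|_1$. Because each $E_i\ge0$, the triangle inequality gives
\[
|\operatorname{Tr}(E_i\Delta)|\le \operatorname{Tr}(E_iP)+\operatorname{Tr}(E_iQ).
\]
Summing over $i$ and using $\sum_iE_i=\mathbb{I}$ yields
\[
d_{\mathcal M}\le \operatorname{Tr}P+\operatorname{Tr}Q=\|\rho-\sigma\|_1=2D_{\mathrm{tr}}(\rho,\sigma).
\]

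\textbf{Step 2: the pure-state formula.} The Remark on trace distance for pure states gives $D_{\mathrm{tr}}=\sqrt{1-|\braket{\psi}{\varphi}|^2}$. To keep the argument self-contained, I would verify this directly. $\Delta$ has rank at most two and is supported on $\mathrm{span}\{\psi,\varphi\}$. It has zero trace, and $\operatorname{Tr}\Delta^2=2-2|\braket{\psi}{\varphi}|^2$. Hence its eigenvalues are $\pm\sqrt{1-|\braket{\psi}{\varphi}|^2}$, and $\|\Delta\|_1=2\sqrt{1-r^2}$ with $r=|\braket{\psi}{\varphi}|$. Writing $r=\cos d_{\mathrm{FS}}$ with $d_{\mathrm{FS}}\in[0,\pi/2]$ converts this to $2\sin d_{\mathrm{FS}}$, using Definition~\ref{def:fubini_study}.

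\textbf{Step 3: equality.} The Helstrom measurement is the two-outcome projective measurement onto the positive and negative eigenspaces of $\Delta$: $E_+=\Pi_+$ and $E_-=\mathbb{I}-\Pi_+$. With it, $\operatorname{Tr}(E_+\Delta)=\operatorname{Tr}P$ and $\operatorname{Tr}(E_-\Delta)=-\operatorname{Tr}Q$, so the bound in Step 1 is saturated. This matches the Helstrom success probability $\tfrac12(1+D_{\mathrm{tr}})$.

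I expect the main obstacle to be the equality case, since everything else is a standard computation. The statement is ambiguous there, and I would read "achieved" as "there exists a POVM attaining the bound," namely the Helstrom projector.
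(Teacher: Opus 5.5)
Your proposal is correct and follows the paper's route. Both bound $\sum_i|p_i-q_i|$ by $\|\rho-\sigma\|_1$, which is monotonicity of trace distance under measurement, then evaluate $\|\rho-\sigma\|_1=2\sqrt{1-|\langle\psi|\varphi\rangle|^2}$ for pure states and saturate with the Helstrom measurement. The difference is that you prove these cited facts inline (Jordan decomposition, the rank-two eigenvalue computation, the explicit projector $\Pi_+$), and you use the effects $M_i^\dagger M_i$ consistently, whereas the paper's proof switches to $\operatorname{Tr}(M_i\rho)$ without comment.
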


\begin{proof}
Let $\rho = \ket{\psi}\bra{\psi}$ and
$\sigma = \ket{\varphi}\bra{\varphi}$.
For any POVM $\mathcal M = \{M_i\}$, define
$p_i = \operatorname{Tr}(M_i \rho)$ and
$q_i = \operatorname{Tr}(M_i \sigma)$.
The total variation distance between the induced
probability distributions satisfies the general bound
\[
\mathrm{TV}(p,q)
\le
\frac12 \|\rho - \sigma\|_1,
\]
where $\|\cdot\|_1$ denotes the trace norm.
This follows from the monotonicity of trace distance
under quantum measurements.
For pure states one computes explicitly
\[
\|\rho - \sigma\|_1
=
2\sqrt{1 - |\braket{\psi}{\varphi}|^2}.
\]
Hence
\[
\mathrm{TV}(p,q)
\le
\sqrt{1 - |\braket{\psi}{\varphi}|^2}.
\]
Multiplying by $2$ yields
\[
d_{\mathcal M}
=
\sum_i |p_i - q_i|
\le
2\sqrt{1 - |\braket{\psi}{\varphi}|^2}.
\]
Equality is achieved by the Helstrom measurement \cite{helstrom1969quantum},
which maximizes the trace distance between the induced
classical distributions.
\end{proof}

\begin{remark}[Conceptual Role Within the Axiomatic Framework]
The measurement-distinguishability bound is well known in quantum information theory, 
particularly in the context of quantum hypothesis testing. 
Within the present axiomatic framework, however, it acquires a structural interpretation: 
measurement-based distances are necessarily bounded by geometric distances such as the Fubini--Study metric because operational distinguishability is constrained by the underlying projective geometry of pure states.

In this sense, the axioms—especially superposition sensitivity and measurement contextuality—clarify why operational distances cannot exceed intrinsic geometric separation. 
Thus, the framework provides a unifying perspective linking geometric, operational, and measurement-based notions of distance.
\end{remark}



\subsection{Entanglement-Aware Distances}

\begin{theorem}[Entanglement Distance Bounds]
\label{thm:entanglement-bounds}
Let 
\[
d_E(\ket{\psi},\ket{\varphi}) = \sqrt{d_{\text{FS}}^2(\ket{\psi},\ket{\varphi}) + |E(\ket{\psi}) - E(\ket{\varphi})|^2},
\] 
where $E$ denotes the entanglement entropy. Then:
\begin{enumerate}[label=(\roman*)]
    \item $d_{\text{FS}} \le d_E \le d_{\text{FS}} + |E(\ket{\psi}) - E(\ket{\varphi})|$,
    \item $d_E$ satisfies the triangle inequality,
    \item $d_E$ is entanglement-aware: states with identical reduced density matrices can have $d_E>0$.
\end{enumerate}
\end{theorem}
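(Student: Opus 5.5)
The three parts are independent, so I will prove them separately. Throughout, write $a = d_{\mathrm{FS}}([\psi],[\varphi])$ and $b = |E(\psi)-E(\varphi)|$. Both are nonnegative and both depend only on rays: $d_{\mathrm{FS}}$ by Theorem~\ref{thm:overlap}, and $E$ because the reduced state $\tr_B\ketbra{\psi}{\psi}$ is phase invariant.

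\textbf{Part (i).} This is elementary. Since $b^2 \ge 0$, we have $a^2 \le a^2+b^2$, and taking square roots gives $d_{\mathrm{FS}} \le d_E$. Since $2ab \ge 0$, we have $a^2+b^2 \le (a+b)^2$, and taking nonnegative square roots gives $d_E \le a+b$.

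\textbf{Part (ii).} I will view $d_E$ as a product-type metric. Define the map
\[
\Phi:[\psi] \mapsto \big([\psi], E(\psi)\big) \in \Proj(\Hilbert)\times\mathbb{R}.
\]
Then $d_E([\psi],[\varphi])$ is the $\ell^2$-combination of the Fubini--Study distance on the first factor and the Euclidean distance on $\mathbb{R}$. Given three rays $[\psi],[\chi],[\varphi]$, set
\[
a_1 = d_{\mathrm{FS}}([\psi],[\chi]),\quad a_2 = d_{\mathrm{FS}}([\chi],[\varphi]),\quad b_1 = |E(\psi)-E(\chi)|,\quad b_2 = |E(\chi)-E(\varphi)|.
\]
Two triangle inequalities hold on the components. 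For Fubini--Study, $a \le a_1+a_2$; this holds because $d_{\mathrm{FS}}$ is the geodesic distance of a Riemannian metric (Definition~\ref{def:fubini_study}), so it is a genuine metric. For the entropy term, $b \le b_1+b_2$ by the triangle inequality for absolute values in $\mathbb{R}$. Then
\[
d_E([\psi],[\varphi]) = \|(a,b)\|_2 \le \|(a_1+a_2,\, b_1+b_2)\|_2 \le \|(a_1,b_1)\|_2 + \|(a_2,b_2)\|_2 .
\]
The first inequality uses that the Euclidean norm is monotone in each nonnegative coordinate. The second is Minkowski's inequality in $\mathbb{R}^2$. The right-hand side equals $d_E([\psi],[\chi]) + d_E([\chi],[\varphi])$.

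The one point needing care is the triangle inequality for $d_{\mathrm{FS}}$ itself. If I want it without appealing to Riemannian geometry, I will cite the standard spherical triangle inequality $\arccos|\langle\psi|\varphi\rangle| \le \arccos|\langle\psi|\chi\rangle| + \arccos|\langle\chi|\varphi\rangle|$. It can be proved by choosing phases so that the overlaps with $\chi$ are real and nonnegative, then reducing to angles between unit vectors in a real inner product space. Symmetry and nonnegativity of $d_E$ are immediate. Non-degeneracy follows from part (i), since $d_E \ge d_{\mathrm{FS}}$, so $d_E$ is a metric.

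\textbf{Part (iii).} I will reuse the construction from the earlier theorem on entanglement-aware distances. Take
\[
\ket{\psi}=\sum_i\sqrt{\lambda_i}\ket{i}_A\ket{i}_B, \qquad \ket{\varphi}=\sum_i\sqrt{\lambda_i}e^{i\theta_i}\ket{i}_A\ket{i}_B,
\]
with at least two nonzero $\lambda_i$ and relative phases that are not all equal. For example, take the Bell-type pair $\lambda_1=\lambda_2=1/2$ and $\theta=(0,\pi)$.

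First I check that the two states have the same marginals. The reduced states are $\sum_i\lambda_i\ketbra{i}{i}$ on both sides, because the cross terms carry a factor $\langle j|i\rangle_B$ or $\langle j|i\rangle_A$, which vanishes for $i\ne j$. Hence $E(\psi)=E(\varphi)$ and $b=0$.

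Next I check that the rays differ. We have $|\langle\psi|\varphi\rangle| = \big|\sum_i\lambda_ie^{i\theta_i}\big| < 1$ whenever the phases are not all equal on the support; in the example the overlap is $0$. Therefore
\[
d_E([\psi],[\varphi]) = d_{\mathrm{FS}}([\psi],[\varphi]) > 0,
\]
which gives entanglement awareness.
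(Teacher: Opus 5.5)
Your proposal is correct and follows essentially the same route as the paper. Part (i) uses the same elementary bound $a\le\sqrt{a^2+b^2}\le a+b$, part (ii) uses the same componentwise triangle inequalities followed by Minkowski in $\mathbb{R}^2$, and part (iii) uses the Schmidt-phase construction, whose concrete instance is exactly the paper's Bell pair $\Phi^\pm$ with $d_E=\pi/2$. Your version is simply more explicit: it states the coordinatewise monotonicity of the Euclidean norm and independently justifies the Fubini--Study triangle inequality, both of which the paper uses without comment.
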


\begin{proof}
{(i) Bounds:} Apply the Pythagorean inequality
\[
a \le \sqrt{a^2 + b^2} \le a + b, \quad \forall a,b \ge 0,
\]
with $a = d_{\text{FS}}(\ket{\psi},\ket{\varphi})$ and $b = |E(\ket{\psi}) - E(\ket{\varphi})|$.\\
{(ii) Triangle inequality:}
Define
\[
\mathbf{v}_{\psi\varphi}
=
\big(
d_{\mathrm{FS}}(\ket{\psi},\ket{\varphi}),
\, |E(\ket{\psi}) - E(\ket{\varphi})|
\big)
\in \mathbb{R}^2.
\]
Then
\[
d_E(\ket{\psi},\ket{\varphi})
=
\|\mathbf{v}_{\psi\varphi}\|_2.
\]
Using the triangle inequality for $d_{\mathrm{FS}}$ and for absolute values,
\[
\mathbf{v}_{\psi\chi}
\le
\mathbf{v}_{\psi\varphi}
+
\mathbf{v}_{\varphi\chi},
\]
componentwise. The Minkowski inequality in $\mathbb{R}^2$ then implies
\[
d_E(\ket{\psi},\ket{\chi})
\le
d_E(\ket{\psi},\ket{\varphi})
+
d_E(\ket{\varphi},\ket{\chi}).
\]
{(iii) Entanglement awareness:} Consider the Bell states
\[
\ket{\Phi^\pm} = \frac{1}{\sqrt{2}}(\ket{00} \pm \ket{11}).
\]
Both have identical reduced density matrices $\rho_A = \rho_B = \frac{1}{2} \mathbb{I}_2$, yet
\begin{multline}\notag 
d_E(\ket{\Phi^+},\ket{\Phi^-}) \\= \sqrt{(\pi/2)^2 + |E(\ket{\Phi^+}) - E(\ket{\Phi^-})|^2} = \pi/2 > 0,
\end{multline}
so $d_E$ distinguishes global states even when local reductions are identical.
\end{proof}


\begin{theorem}[Continuity of Entanglement Distance]
\label{thm:entanglement-continuity}
For any $\ket{\varphi}$ and states $\ket{\psi}, \ket{\chi}$,
\begin{multline}\notag 
|d_E(\ket{\psi},\ket{\varphi}) - d_E(\ket{\chi},\ket{\varphi})| \\\le d_{\text{FS}}(\ket{\psi},\ket{\chi}) + C \, d_{\text{FS}}(\ket{\psi},\ket{\chi}) \log d,
\end{multline}
where $C$ is a universal constant and $d = \min(\dim \Hilbert_A, \dim \Hilbert_B)$.
\end{theorem}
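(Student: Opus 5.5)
The plan is to reduce the statement to a continuity estimate for the entanglement entropy alone, using the metric structure of $d_E$ from Theorem~\ref{thm:entanglement-bounds}. By part (ii) of that theorem, $d_E$ satisfies the triangle inequality. Together with symmetry, this gives the reverse triangle inequality
\[
|d_E(\ket{\psi},\ket{\varphi}) - d_E(\ket{\chi},\ket{\varphi})| \le d_E(\ket{\psi},\ket{\chi}).
\]
Part (i) of the same theorem then yields
\[
d_E(\ket{\psi},\ket{\chi}) \le d_{\mathrm{FS}}(\ket{\psi},\ket{\chi}) + |E(\ket{\psi}) - E(\ket{\chi})|.
\]
The first term is already the $d_{\mathrm{FS}}$ term of the claim. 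It therefore suffices to show
\[
|E(\ket{\psi})-E(\ket{\chi})| \le C\, d_{\mathrm{FS}}(\ket{\psi},\ket{\chi}) \log d .
\]

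For this, pass from global to local states. Let $\rho_A = \tr_B\ketbra{\psi}{\psi}$ and $\sigma_A = \tr_B\ketbra{\chi}{\chi}$. The partial trace is a CPTP map, so by the contractivity of trace distance,
\[
\tfrac12\|\rho_A-\sigma_A\|_1 \le D_{\mathrm{tr}}(\ketbra{\psi}{\psi},\ketbra{\chi}{\chi}) = \sin d_{\mathrm{FS}}(\ket{\psi},\ket{\chi}) \le d_{\mathrm{FS}}(\ket{\psi},\ket{\chi}).
\]
The equality is the pure-state formula from the Helstrom remark. Since $E$ may be computed on either side by the Schmidt decomposition, we may choose the smaller factor and work in dimension $d=\min(\dim\Hilbert_A,\dim\Hilbert_B)$. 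Apply the Fannes--Audenaert inequality with $T=\tfrac12\|\rho_A-\sigma_A\|_1$:
\[
|S(\rho_A)-S(\sigma_A)| \le T\log(d-1) + h(T),
\]
where $h$ is the binary entropy. The first term is at most $d_{\mathrm{FS}}\log d$.

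The main obstacle is the term $h(T)$, which is not linearly bounded by $T$ as $T\to 0$. I would handle it by invoking the Lipschitz continuity of the entropy with respect to trace distance asserted in the Remark on metric properties of $d_E$. Concretely, I would split into two regimes. For $T$ bounded below, say $T \ge 1/d$, the bound $h(T)\le T\log(e/T) \le T(1+\log d)$ holds. Since $d\ge 2$ gives $\log d\ge \log 2$, this is at most a constant times $T\log d$. In the complementary regime I would rely on the paper's Lipschitz assumption to get $h(T) \le C' T\log d$. Collecting the constants into a single universal $C$ then gives
\[
|E(\ket{\psi})-E(\ket{\chi})| \le C\, d_{\mathrm{FS}}\log d,
\]
which, substituted into the reduction above, proves the stated inequality. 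The step where the $h(T)$ contribution is absorbed for very small $T$ is where the argument is most delicate. It is the point I would check most carefully, since it leans entirely on the paper's Lipschitz assumption.
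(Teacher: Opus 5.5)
Your reduction is correct and matches the paper's. The reverse triangle inequality for $d_E$ together with part (i) of Theorem~\ref{thm:entanglement-bounds} leaves exactly $|E(\ket{\psi})-E(\ket{\chi})|\le C\,d_{\mathrm{FS}}(\ket{\psi},\ket{\chi})\log d$ to prove; the paper reaches the same point via $|\sqrt{a^2+b^2}-\sqrt{c^2+d^2}|\le|a-c|+|b-d|$. Your regime $T\ge 1/d$ is also fine. The gap is the step you flagged, and appealing to the Remark cannot close it.

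\textbf{Why the small-$T$ step fails.} The inequality you need for $T<1/d$, namely $h(T)\le C'T\log d$, involves only the binary entropy, and it is false for fixed $d$ as $T\to 0$, because $h(T)/T\ge\log(1/T)$. The Remark's claim that entropy is Lipschitz in trace distance is false, since the Audenaert bound is tight. So is the linear bound $|E(\ket{\psi})-E(\ket{\chi})|\le C\|\rho_\psi-\rho_\chi\|_1\log d$ that the paper's own proof attributes to Fannes--Audenaert; the paper's argument has the same hole. Concretely, take $\ket{\psi}=\ket{00}$ and $\ket{\chi}=\sqrt{1-\epsilon}\,\ket{00}+\sqrt{\epsilon}\,\ket{11}$. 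The marginals have $T=\epsilon$, while $|E(\ket{\psi})-E(\ket{\chi})|=h(\epsilon)\ge\epsilon\log(1/\epsilon)$. Hence any bound $|E(\ket{\psi})-E(\ket{\chi})|\le\Phi(T)$, even restricted to marginals of pure states, needs $\Phi(T)\ge T\log(1/T)$. Combined with the only input you have, $T\le\sin d_{\mathrm{FS}}$, it yields nothing better than order $d_{\mathrm{FS}}\log(1/d_{\mathrm{FS}})$. Passing to the local trace distance discards exactly the information you need.

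\textbf{How to repair it.} The target inequality is nevertheless true. In the example, $d_{\mathrm{FS}}=\arcsin\sqrt{\epsilon}\ge\sqrt{\epsilon}\gg T$ and $h(\epsilon)\le\epsilon\log(e/\epsilon)=O(\sqrt{\epsilon})$. The missing idea is to track the marginal spectrum through the Schmidt coefficients $\sqrt{\lambda_i}$ rather than through $T$. These coefficients are Lipschitz in the global state, and in these variables the entropy has no singularity.
\begin{itemize}
\item Let $s,t\in[0,1]^d$ be the sorted singular values of the $d_A\times d_B$ coefficient matrices of $\ket{\psi}$ and $\ket{\chi}$. Mirsky's singular-value inequality gives $\|s-t\|_2\le\min_\theta\|\ket{\psi}-e^{i\theta}\ket{\chi}\|=2\sin(d_{\mathrm{FS}}/2)\le d_{\mathrm{FS}}$.
\item Write $E=\Lambda(s):=-\sum_i s_i^2\log s_i^2$. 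Then $\Lambda$ is $C^1$ on $[0,1]^d$ with $\partial_i\Lambda=-2s_i(1+\log s_i^2)$, which \emph{vanishes} at $s_i=0$. This is what removes the $h(T)$ singularity.
\item At any point $u$ of the segment from $s$ to $t$ one has $\sum_i u_i^2\le 1$ and $\|\nabla\Lambda(u)\|_2^2=4\sum_i x_i(1+\log x_i)^2$ with $x_i=u_i^2$. Splitting the sum at $x_i=d^{-2}$ shows this is $O(\log^2 d)$.
\end{itemize}
The mean value theorem then gives $|E(\ket{\psi})-E(\ket{\chi})|\le C\,d_{\mathrm{FS}}(\ket{\psi},\ket{\chi})\log d$ for $d\ge 2$, which completes your reduction.
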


\begin{proof}
{Geometric component:} By the Fubini–Study triangle inequality,
\[
|d_{\text{FS}}(\ket{\psi},\ket{\varphi}) - d_{\text{FS}}(\ket{\chi},\ket{\varphi})| \le d_{\text{FS}}(\ket{\psi},\ket{\chi}).
\]
Entanglement component: Let $\rho_\psi = \Tr_B \ketbra{\psi}{\psi}$, etc. By the Fannes–Audenaert inequality \cite{audenaert2007sharp}:
\[
|E(\ket{\psi}) - E(\ket{\chi})| \le C \, \|\rho_\psi - \rho_\chi\|_1 \log d,
\]
and for pure states,
\[
\|\rho_\psi - \rho_\chi\|_1 \le 2 \, d_{\text{FS}}(\ket{\psi},\ket{\chi}).
\]
Applying
\[
\big|\sqrt{a^2 + b^2} - \sqrt{c^2 + d^2}\big| \le |a-c| + |b-d|,
\]
with $a = d_{\text{FS}}(\ket{\psi},\ket{\varphi})$, $b = |E(\ket{\psi}) - E(\ket{\varphi})|$, $c = d_{\text{FS}}(\ket{\chi},\ket{\varphi})$, $d = |E(\ket{\chi}) - E(\ket{\varphi})|$, gives
\begin{multline}\notag 
|d_E(\ket{\psi},\ket{\varphi}) - d_E(\ket{\chi},\ket{\varphi})| \\\le d_{\text{FS}}(\ket{\psi},\ket{\chi}) + C \, d_{\text{FS}}(\ket{\psi},\ket{\chi}) \log d.
\end{multline}

Thus $d_E$ is Lipschitz continuous with respect to the Fubini–Study metric.
\end{proof}

\begin{remark}[Axiomatic perspective]
Under the entanglement-aware axioms:

\begin{enumerate}[label=(\roman*)]
    \item $d_E$ respects ray well-definedness (Axiom~\ref{ax:ray}) and is covariant under unitary transformations (Axiom~\ref{ax:unitary_invariance}) that preserve the tensor-product structure, through the Fubini–Study component $d_{\text{FS}}$.
    \item It explicitly captures entanglement differences via the entropy term, satisfying entanglement awareness (Axiom~\ref{ax:entanglement_awareness}). Standard distances like Fubini–Study or Bures do not include entanglement contributions.
    \item $d_E$ is Lipschitz continuous with respect to $d_{\text{FS}}$, ensuring stability under small perturbations, which is crucial in high-dimensional or noisy quantum systems.
\end{enumerate}
These results demonstrate that the axiomatic framework naturally accommodates entanglement-sensitive extensions of geometric quantum distances while preserving metric and continuity properties.
\end{remark}




\subsection{Operational Interpretations}

\begin{theorem}[State Discrimination]
\label{thm:discrimination}
Let $\ket{\psi},\ket{\varphi}$ be pure states with equal prior probabilities. Let the fidelity be
\[
F(\ket{\psi},\ket{\varphi}) := |\braket{\psi}{\varphi}|^2.
\]
Then the optimal success probability for distinguishing the states is
\begin{eqnarray}
\notag 
P_{\text{opt}} &=& \frac{1}{2}\Big(1 + \sqrt{1 - F(\ket{\psi},\ket{\varphi})}\Big)\\
&=& \frac{1}{2}\Big(1 + \sin(d_{\text{FS}}(\ket{\psi},\ket{\varphi}))\Big),\notag
\end{eqnarray}
so equivalently,
\[
d_{\text{FS}}(\ket{\psi},\ket{\varphi}) = \arcsin(2 P_{\text{opt}} - 1).
\]
\end{theorem}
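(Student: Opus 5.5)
The plan is to reduce the problem to Helstrom's theorem and then compute the trace norm of the difference of two rank-one projectors. Write $\rho=\ketbra{\psi}{\psi}$ and $\sigma=\ketbra{\varphi}{\varphi}$ with equal priors. For any two-outcome POVM $\{\Pi_\psi,\Pi_\varphi=\mathbb{I}-\Pi_\psi\}$ with $0\le\Pi_\psi\le\mathbb{I}$, the success probability is
\[
P=\tfrac12\operatorname{Tr}(\Pi_\psi\rho)+\tfrac12\operatorname{Tr}(\Pi_\varphi\sigma)=\tfrac12\bigl(1+\operatorname{Tr}(\Pi_\psi(\rho-\sigma))\bigr).
\]
The first step is to maximize $\operatorname{Tr}(\Pi(\rho-\sigma))$ over $0\le\Pi\le\mathbb{I}$. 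Since $\rho-\sigma$ is Hermitian and traceless, the maximum is the sum of its positive eigenvalues, which equals $\tfrac12\|\rho-\sigma\|_1$. It is attained by the projector onto the positive eigenspace; this is the Helstrom measurement already used in Theorem~\ref{thm:measurement-fs}. Hence $P_{\text{opt}}=\tfrac12(1+D_{\mathrm{tr}}(\rho,\sigma))$. POVMs with more outcomes reduce to this case by grouping outcomes according to the guess they lead to.

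The second step, which carries the actual content, is computing $\|\rho-\sigma\|_1$ for pure states. I restrict to the two-dimensional span of $\ket{\psi},\ket{\varphi}$; if they are parallel, both sides equal $\tfrac12$ trivially. Choose an orthonormal basis $\ket{0},\ket{1}$ of this span with $\ket{\psi}=\ket{0}$ and $\ket{\varphi}=r\ket{0}+\sqrt{1-r^2}\,\ket{1}$, where $r=|\braket{\psi}{\varphi}|$. Ray well-definedness (Axiom~\ref{ax:ray}) allows removing the phase of the overlap. In this basis $\rho-\sigma$ is a traceless real symmetric $2\times2$ matrix, so its eigenvalues are $\pm\lambda$. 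Its determinant is
\[
-(1-r^2)^2-r^2(1-r^2)=-(1-r^2),
\]
so $\lambda=\sqrt{1-r^2}$ and $\tfrac12\|\rho-\sigma\|_1=\sqrt{1-F}$. On the orthogonal complement $\rho-\sigma$ vanishes, so nothing is lost there.

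Finally, I convert to Fubini--Study form. With $r=\cos d_{\mathrm{FS}}$ and $d_{\mathrm{FS}}\in[0,\pi/2]$, we get $\sqrt{1-r^2}=\sin d_{\mathrm{FS}}$, which gives the second expression. Since $\sin$ is a bijection from $[0,\pi/2]$ onto $[0,1]$ and $2P_{\text{opt}}-1\in[0,1]$, inverting yields $d_{\mathrm{FS}}=\arcsin(2P_{\text{opt}}-1)$.

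The main obstacle is the optimization step: one must justify rigorously that the supremum over all POVMs equals the positive part of $\rho-\sigma$. I would handle it with the bound $\operatorname{Tr}(\Pi A)\le\operatorname{Tr}(\Pi A_+)\le\operatorname{Tr}A_+$, where $A=A_+-A_-$ is the Jordan decomposition of $A=\rho-\sigma$.
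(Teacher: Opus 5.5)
Your proposal is correct and follows the same route as the paper. Both use Helstrom's formula $P_{\text{opt}}=\tfrac12\bigl(1+\tfrac12\|\rho-\sigma\|_1\bigr)$, the pure-state evaluation $\|\rho-\sigma\|_1=2\sqrt{1-|\braket{\psi}{\varphi}|^2}$, and the substitution $|\braket{\psi}{\varphi}|=\cos d_{\mathrm{FS}}$. The difference is that you prove the two ingredients the paper only cites or asserts: the optimization over $0\le\Pi\le\mathbb{I}$ via the Jordan decomposition, and the trace norm via the $2\times 2$ determinant. You also justify the $\arcsin$ inversion using $d_{\mathrm{FS}}\in[0,\pi/2]$.
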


\begin{proof}
The Helstrom bound \cite{helstrom1969quantum} gives
\[
P_{\text{opt}} = \frac{1}{2} \Big(1 + \frac{1}{2}\|\ketbra{\psi}{\psi} - \ketbra{\varphi}{\varphi}\|_1\Big).
\]

For pure states, the trace norm evaluates to
\[
\|\ketbra{\psi}{\psi} - \ketbra{\varphi}{\varphi}\|_1 = 2 \sqrt{1 - |\braket{\psi}{\varphi}|^2}.
\]

Using the Fubini–Study distance $d_{\text{FS}}(\ket{\psi},\ket{\varphi}) = \arccos|\braket{\psi}{\varphi}|$, we have
\[
\sqrt{1 - |\braket{\psi}{\varphi}|^2} = \sin(d_{\text{FS}}(\ket{\psi},\ket{\varphi})),
\]
and substitution yields the stated result.
\end{proof}

\begin{theorem}[Quantum Metrology]
\label{thm:metrology}
Let $\{\ket{\psi_\theta}\}_{\theta \in \Theta}$ be a differentiable family of pure states. Then, for infinitesimal $d\theta$,
\[
d_{\text{B}}^2(\ket{\psi_\theta},\ket{\psi_{\theta + d\theta}}) = \frac{1}{4} F_Q(\theta) \, d\theta^2 + O(d\theta^3),
\]
where the quantum Fisher information is defined by
\[
F_Q(\theta) := 4 \lim_{d\theta \to 0} \frac{1 - |\braket{\psi_\theta}{\psi_{\theta+d\theta}}|}{d\theta^2}.
\]
Consequently, the Bures distance induces the quantum Fisher information metric on the parameter space.
\end{theorem}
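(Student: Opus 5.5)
The plan is to reduce everything to a second-order Taylor expansion of the overlap $r(\delta) := |\braket{\psi_\theta}{\psi_{\theta+\delta}}|$ around $\delta=0$. By Definition~\ref{def:bures}, $d_{\mathrm{B}}^2 = 2(1-r(\delta))$ exactly. So the whole statement is a claim about the leading coefficient of $1-r(\delta)$ and the size of the remainder. I will assume the family is at least $C^3$ in $\theta$, which is what the $O(d\theta^3)$ remainder needs, and that each $\psi_\theta$ is normalized.

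\textbf{Step 1 (gauge fixing).} First I will use Axiom~\ref{ax:ray} to choose the phase of $\psi_{\theta+\delta}$ so that the overlap is real and nonnegative, or equivalently work in a local horizontal gauge with $\langle\psi_\theta|\partial_\theta\psi_\theta\rangle = 0$. This is possible because normalization forces $\mathrm{Re}\langle\psi|\dot\psi\rangle=0$, so only an imaginary part has to be removed, and that is done by a $\theta$-dependent phase.

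\textbf{Step 2 (expansion of the overlap).} Next I will write
\[
\psi_{\theta+\delta}=\psi+\delta\dot\psi+\tfrac{\delta^2}{2}\ddot\psi+O(\delta^3).
\]
Differentiating $\|\psi_\theta\|=1$ twice gives $\mathrm{Re}\langle\psi|\ddot\psi\rangle=-\langle\dot\psi|\dot\psi\rangle$. Substituting these into the overlap and taking the modulus should give
\[
1-r(\delta)=\tfrac12\big(\langle\dot\psi|\dot\psi\rangle-|\langle\psi|\dot\psi\rangle|^2\big)\delta^2+O(\delta^3)
=\tfrac12\,g_{\mathrm{FS},\psi}(\dot\psi,\dot\psi)\,\delta^2+O(\delta^3).
\]
The gauge-invariant combination $\langle\dot\psi|\dot\psi\rangle-|\langle\psi|\dot\psi\rangle|^2$ is exactly the Fubini--Study tensor of Definition~\ref{def:fubini_study}. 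As a cross-check, this is consistent with $d_{\mathrm{FS}}=\arccos r$ and $d_{\mathrm{FS}}^2\approx 2(1-r)$.

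\textbf{Step 3 (identifying the coefficient).} From Step 2, the limit in the definition of $F_Q$ exists and equals $2g_{\mathrm{FS}}(\dot\psi,\dot\psi)$. Then $d_{\mathrm{B}}^2=2(1-r)=g_{\mathrm{FS}}(\dot\psi,\dot\psi)\,d\theta^2+O(d\theta^3)$. Finally I rewrite $g_{\mathrm{FS}}$ in terms of $F_Q$, and this yields the last sentence of the statement: the pullback of $d_{\mathrm{B}}^2$ to parameter space is the quantum Fisher information metric.

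\textbf{Main obstacle: the constant.} The step I expect to be delicate is Step 3. The usual quantum Fisher information is
\[
F_Q^{\mathrm{std}}=4\big(\langle\dot\psi|\dot\psi\rangle-|\langle\psi|\dot\psi\rangle|^2\big)=8\lim_{\delta\to0}\frac{1-r(\delta)}{\delta^2},
\]
and with it one gets exactly $d_{\mathrm{B}}^2=\tfrac14F_Q^{\mathrm{std}}\,d\theta^2+O(d\theta^3)$, which is the stated form. With the prefactor $4$ as written in the statement, the same computation gives $d_{\mathrm{B}}^2=\tfrac12F_Q\,d\theta^2+O(d\theta^3)$ instead. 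I would therefore carry out Steps 1--2 carefully, identify $F_Q$ with the standard expression $4\,g_{\mathrm{FS}}(\dot\psi,\dot\psi)$, and state explicitly that the normalization of the limit must be read as $8\lim$ for the constant $\tfrac14$ to hold. The qualitative conclusion, that $d_{\mathrm{B}}$ induces the Fisher metric up to a fixed constant, holds under either normalization.

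A secondary technical point is that $r=|\cdot|$ is not smooth where the overlap vanishes. This is irrelevant for small $\delta$, because $r(0)=1$, so the $O(\delta^3)$ bound follows from Taylor's theorem applied to $|z|^2$ together with $\sqrt{1-x}=1-x/2+O(x^2)$.
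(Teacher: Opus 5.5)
Your computation is correct and your diagnosis of the constant is right. Your route differs from the paper's, and comparing them shows that the fault lies in the paper's proof. The paper never expands the states. It reads off $|\langle\psi_\theta|\psi_{\theta+d\theta}\rangle| = 1-\tfrac14F_Q\,d\theta^2+O(d\theta^3)$ from the definition of $F_Q$, then substitutes $d_{\mathrm{B}}^2=2\bigl(1-\sqrt{|\langle\psi_\theta|\psi_{\theta+d\theta}\rangle|}\bigr)$. That formula contradicts Definition~\ref{def:bures}, which gives $d_{\mathrm{B}}^2=2(1-r)$ exactly as you use it. The paper's version is $2(1-\sqrt{F})$ with $r$ inserted in place of the fidelity $F=r^2$, and the spurious square root supplies precisely the factor $\tfrac12$ that turns your $\tfrac12F_Q$ into the printed $\tfrac14F_Q$. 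So the printed constant rests on two compensating slips: the missing square in the limit defining $F_Q$, and the extra root in the Bures formula. The theorem as stated is false whenever $F_Q(\theta)\neq0$. Your repair, $F_Q=4\bigl(\langle\dot\psi|\dot\psi\rangle-|\langle\psi|\dot\psi\rangle|^2\bigr)=8\lim_{d\theta\to0}(1-r)/d\theta^2$, is the consistent one. It is also the normalization the paper tacitly uses in its own qubit example. For $\cos\frac\theta2|0\rangle+\sin\frac\theta2|1\rangle$ the overlap is $\cos\frac{d\theta}2$, so the literal definition gives $F_Q=\tfrac12$ rather than the quoted $F_Q=1$, while the quoted $d_{\mathrm{B}}^2\approx\tfrac14d\theta^2$ is correct.

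As for what each route buys, the paper's is a two-line substitution, but a limit definition only yields $1-r=\tfrac14F_Q\,d\theta^2+o(d\theta^2)$. For a merely differentiable family the $O(d\theta^3)$ remainder can genuinely fail. For example, the normalized family $\propto|0\rangle+(\theta+|\theta|^{3/2})|1\rangle$ has a correction of order $|d\theta|^{5/2}$ at $\theta=0$, so your extra regularity is not cosmetic. Under your $C^3$ hypothesis, your gauge-fixed Taylor expansion justifies the remainder and identifies the coefficient as $g_{\mathrm{FS},\psi}(\dot\psi,\dot\psi)$. That identification is what actually substantiates the closing claim that $d_{\mathrm{B}}$ induces the Fisher metric on parameter space. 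When you write it up, present it as a corrected theorem rather than as a proof of the statement verbatim: either $d_{\mathrm{B}}^2=\tfrac12F_Q\,d\theta^2+O(d\theta^3)$ with the paper's definition, or the $\tfrac14$ version with the standard $F_Q$.
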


\begin{proof}
For infinitesimal $d\theta$, expand the overlap:
\[
|\braket{\psi_\theta}{\psi_{\theta+d\theta}}| = 1 - \frac{1}{4} F_Q(\theta) \, d\theta^2 + O(d\theta^3),
\]
by the definition of $F_Q(\theta)$. Using the Bures distance formula for pure states,
\[
d_{\text{B}}^2(\ket{\psi_\theta},\ket{\psi_{\theta+d\theta}}) = 2\big(1 - \sqrt{|\braket{\psi_\theta}{\psi_{\theta+d\theta}}|}\big),
\]
and applying the expansion $\sqrt{1 - \epsilon} = 1 - \frac{\epsilon}{2} + O(\epsilon^2)$ with $\epsilon = \frac{1}{4} F_Q(\theta) d\theta^2$ yields
\[
d_{\text{B}}^2(\ket{\psi_\theta},\ket{\psi_{\theta+d\theta}}) = \frac{1}{4} F_Q(\theta) \, d\theta^2 + O(d\theta^3).
\]
\end{proof}
\begin{theorem}[Fidelity and Channel Discrimination]
\label{thm:channel}
For quantum channels $\mathcal{E},\mathcal{F}$, let $J_\mathcal{E}, J_\mathcal{F}$ be their Choi states. Then
\[
\max_{\ket{\psi}} F\big(\mathcal{E}(\ketbra{\psi}{\psi}), \mathcal{F}(\ketbra{\psi}{\psi})\big) 
\ge \big(1 - \tfrac{1}{2} d_{\text{B}}^2(J_\mathcal{E}, J_\mathcal{F})\big)^2,
\]
where $d_{\text{B}}$ denotes the Bures distance between Choi states.
\end{theorem}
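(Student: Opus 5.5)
The plan is to reduce the right-hand side to a fidelity between Choi states, and then compare that fidelity with output fidelities on pure inputs. The comparison uses monotonicity of fidelity under a measurement of the reference system.

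\emph{Step 1 (unpack the right-hand side).} For mixed states, the Bures distance extending Definition~\ref{def:bures} is
\[
d_{\mathrm B}^2(\rho,\sigma)=2\bigl(1-\sqrt{F(\rho,\sigma)}\bigr),\qquad \sqrt{F(\rho,\sigma)}=\operatorname{Tr}\bigl|\sqrt{\rho}\sqrt{\sigma}\bigr|.
\]
On pure states it reduces to $\sqrt{2(1-|\braket{\psi}{\varphi}|)}$, as in Definition~\ref{def:bures}. Hence $1-\tfrac12 d_{\mathrm B}^2(J_\mathcal E,J_\mathcal F)=\sqrt{F(J_\mathcal E,J_\mathcal F)}$, and the claim is equivalent to
\[
\max_{\ket\psi} F\bigl(\mathcal E(\ketbra{\psi}{\psi}),\mathcal F(\ketbra{\psi}{\psi})\bigr)\;\ge\; F(J_\mathcal E,J_\mathcal F).
\]

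\emph{Step 2 (measure the reference).} I write $J_\mathcal E=(\mathrm{id}\otimes\mathcal E)(\ketbra{\Phi}{\Phi})$ with $\ket\Phi=d^{-1/2}\sum_i\ket{i}\ket{i}$, and similarly for $J_\mathcal F$. I then apply to both Choi states the same CPTP map $\Lambda$: a complete projective measurement of the reference in the basis $\{\ket i\}$, with the outcome recorded classically. The fidelity is monotone under CPTP maps, which is the fidelity analogue of the contractivity principle in Section~\ref{sec:found} and follows from Uhlmann's theorem. Therefore
\[
F(J_\mathcal E,J_\mathcal F)\le F\bigl(\Lambda\otimes\mathrm{id}(J_\mathcal E),\Lambda\otimes\mathrm{id}(J_\mathcal F)\bigr).
\]
Since $\mathcal E$ acts only on the second factor, the right-hand states are block diagonal:
\[
\Lambda\otimes\mathrm{id}(J_\mathcal E)=\sum_i \tfrac1d\,\ketbra{i}{i}\otimes\mathcal E(\ketbra{i}{i}),
\]
and the analogous expression holds for $\mathcal F$ with the same weights $1/d$.

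\emph{Step 3 (block-diagonal fidelity and averaging).} For block-diagonal states with common classical weights $p_i$, the root fidelity splits as
\[
\sqrt F=\sum_i p_i\sqrt{F(\rho_i,\sigma_i)},
\]
because $\sqrt{\rho}\sqrt{\sigma}$ is block diagonal and $\operatorname{Tr}|\cdot|$ adds over blocks. Thus
\[
\sqrt{F(J_\mathcal E,J_\mathcal F)}\le \frac1d\sum_i\sqrt{F\bigl(\mathcal E(\ketbra ii),\mathcal F(\ketbra ii)\bigr)}\le \max_i \sqrt{F\bigl(\mathcal E(\ketbra ii),\mathcal F(\ketbra ii)\bigr)}.
\]
Squaring and bounding the maximum over basis states by the maximum over all pure $\ket\psi$ gives the claim.

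The main obstacle is Step 1. The paper defines $d_{\mathrm B}$ only for pure states, so I must fix the mixed-state Bures convention explicitly; with the alternative convention using $F$ instead of $\sqrt F$, the constant in the statement changes. Once the convention is fixed, the remaining ingredients are standard: monotonicity of fidelity, and additivity of the trace norm over orthogonal blocks.
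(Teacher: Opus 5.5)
Your proof is correct, and it departs from the paper's at the one step that carries the content.

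Your Step~1 is the same as the paper's. The paper also uses $d_{\mathrm B}^2=2(1-\sqrt F)$ to reduce the claim to $\max_{\ket{\psi}}F\ge F(J_\mathcal E,J_\mathcal F)$, so your convention is the one intended here.

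The paper then finishes in one line. It asserts $F(\mathcal E(\ketbra{\psi}{\psi}),\mathcal F(\ketbra{\psi}{\psi}))\ge F(J_\mathcal E,J_\mathcal F)$ for \emph{every} input ``by monotonicity of fidelity under partial trace.'' That step does not go through. Tracing out the reference of $J_\mathcal E$ gives $\mathcal E(\mathbb{I}/d)$, not $\mathcal E(\ketbra{\psi}{\psi})$. The per-input inequality is in fact false: for $\mathcal E=\mathrm{id}$ and $\mathcal F(\rho)=\operatorname{Tr}(\rho)\ketbra{0}{0}$ on a qubit, $F(J_\mathcal E,J_\mathcal F)=1/4$, while the input $\ket{1}$ gives output fidelity $0$. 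Nor can one repair it by passing from the maximally mixed input to a pure one. For $\mathcal E=\mathrm{id}$ and $\mathcal F$ completely depolarizing, the fidelity at $\mathbb{I}/d$ is $1$, but every pure input gives only $1/d$.

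Dephasing the reference, as you do, is what extracts pure-input information from the Choi states. Block additivity of root fidelity then yields
$\sqrt{F(J_\mathcal E,J_\mathcal F)}\le\frac1d\sum_i\sqrt{F(\mathcal E(\ketbra{i}{i}),\mathcal F(\ketbra{i}{i}))}$.
This explains why a maximum rather than a minimum over inputs appears in the statement.

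So the paper's route, had it worked, would give a stronger per-input bound, but that bound is false. Yours proves the theorem as stated and also an averaged version. The averaged version holds for every orthonormal input basis, by measuring the reference in the conjugate basis. In the qubit example above your averaged inequality is even tight: $\tfrac12(1+0)=\sqrt{1/4}$.
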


\begin{proof}
Let $\ket{\psi}$ be any input state. By monotonicity of fidelity under partial trace,
\[
F\big(\mathcal{E}(\ketbra{\psi}{\psi}), \mathcal{F}(\ketbra{\psi}{\psi})\big)
\ge F(J_\mathcal{E}, J_\mathcal{F}).
\]

For states, the Bures distance is related to fidelity via
\[
d_{\text{B}}^2(J_\mathcal{E}, J_\mathcal{F}) = 2\big(1 - \sqrt{F(J_\mathcal{E}, J_\mathcal{F})}\big),
\]
so rearranging gives
\[
F(J_\mathcal{E}, J_\mathcal{F}) = \big(1 - \tfrac{1}{2} d_{\text{B}}^2(J_\mathcal{E}, J_\mathcal{F})\big)^2.
\]

Combining the two inequalities yields the claimed bound.
\end{proof}

\subsection{Inequalities and Comparison Results}

\begin{theorem}[Fidelity-Distance Inequalities]
\label{thm:fidelity-inequalities}
For pure states $\ket{\psi},\ket{\varphi},\ket{\chi}$:
\begin{enumerate}[label=(\roman*)]
    \item Fubini--Study triangle inequality
    \begin{multline}\notag 
        |d_{\text{FS}}(\ket{\psi},\ket{\varphi}) - d_{\text{FS}}(\ket{\chi},\ket{\psi})| \\
        \le d_{\text{FS}}(\ket{\chi},\ket{\varphi}) \le d_{\text{FS}}(\ket{\chi},\ket{\psi}) + d_{\text{FS}}(\ket{\psi},\ket{\varphi}),
    \end{multline}
    \item Multiplicative fidelity inequality:
    \begin{multline}\notag 
    \sqrt{F(\ket{\psi},\ket{\chi})} \ge \sqrt{F(\ket{\psi},\ket{\varphi}) F(\ket{\varphi},\ket{\chi})} \\
    - \sqrt{(1-F(\ket{\psi},\ket{\varphi}))(1-F(\ket{\varphi},\ket{\chi}))},
    \end{multline}
    \item Convexity of squared Bures distance:
    \begin{multline}\notag 
    d_{\text{B}}^2\big(\sum_i p_i \ketbra{\psi_i}{\psi_i}, \sum_i p_i \ketbra{\varphi_i}{\varphi_i}\big) \le \sum_i p_i d_{\text{B}}^2(\ket{\psi_i},\ket{\varphi_i}).
    \end{multline}
\end{enumerate}
\end{theorem}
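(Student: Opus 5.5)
I would prove the three items in the order (ii), (i), (iii). Part (ii) is an elementary overlap estimate, part (i) follows from it by taking arccos, and part (iii) is a separate argument via Uhlmann's theorem.

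\textbf{Step 1 (the overlap estimate, giving (ii)).} Write $a=d_{\mathrm{FS}}(\ket{\psi},\ket{\varphi})$ and $b=d_{\mathrm{FS}}(\ket{\varphi},\ket{\chi})$, both in $[0,\pi/2]$. Using Axiom~\ref{ax:ray}, choose phases of the representatives so that $\braket{\varphi}{\psi}=\cos a\ge 0$ and $\braket{\varphi}{\chi}=\cos b\ge 0$. Then decompose
\[
\ket{\psi}=\cos a\,\ket{\varphi}+\sin a\,\ket{\psi^\perp},\qquad
\ket{\chi}=\cos b\,\ket{\varphi}+\sin b\,\ket{\chi^\perp},
\]
where $\ket{\psi^\perp},\ket{\chi^\perp}$ are unit vectors orthogonal to $\ket{\varphi}$. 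This gives
\[
\braket{\psi}{\chi}=\cos a\cos b+\sin a\sin b\,\braket{\psi^\perp}{\chi^\perp}.
\]
Since $|\braket{\psi^\perp}{\chi^\perp}|\le 1$ by Cauchy--Schwarz, the reverse triangle inequality in $\mathbb{C}$ yields
\[
|\braket{\psi}{\chi}|\ge \cos a\cos b-\sin a\sin b=\cos(a+b).
\]
Substituting $\sqrt{F}=\cos(\cdot)$ and $\sqrt{1-F}=\sin(\cdot)$ turns this into exactly (ii).

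\textbf{Step 2 (the triangle inequality (i)).} If $a+b\ge\pi/2$, the upper inequality in (i) is trivial, because $d_{\mathrm{FS}}\le\pi/2$ always. Otherwise $\cos(a+b)\ge 0$. Since $\arccos$ is decreasing on $[0,1]$, applying it to the Step 1 bound gives
\[
d_{\mathrm{FS}}(\ket{\psi},\ket{\chi})\le a+b.
\]
Symmetry of $|\braket{\cdot}{\cdot}|$ gives symmetry of $d_{\mathrm{FS}}$. The lower bound $|d_{\mathrm{FS}}(\psi,\varphi)-d_{\mathrm{FS}}(\chi,\psi)|\le d_{\mathrm{FS}}(\chi,\varphi)$ then follows by applying the upper bound twice with the roles of the three states permuted. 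As a cross-check, Corollary~\ref{cor:fs_uniqueness} identifies $d_{\mathrm{FS}}$ as a geodesic distance, which would give the triangle inequality independently. I prefer the direct computation because it does not depend on the earlier arguments being fully rigorous.

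\textbf{Step 3 (convexity (iii)).} For mixed states I interpret $d_{\mathrm{B}}^2(\rho,\sigma)=2\bigl(1-\sqrt{F(\rho,\sigma)}\bigr)$, with Uhlmann fidelity $\sqrt{F(\rho,\sigma)}=\max|\braket{\Psi}{\Phi}|$ over purifications. For pure states this agrees with Definition~\ref{def:bures}. The claim is then equivalent to joint concavity of root fidelity:
\[
\sqrt{F}\Bigl(\sum_i p_i\rho_i,\ \sum_i p_i\sigma_i\Bigr)\ge\sum_i p_i\,|\braket{\psi_i}{\varphi_i}|.
\]
To prove it, I would rephase each $\ket{\varphi_i}$ so that $\braket{\psi_i}{\varphi_i}\ge 0$. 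I would then use the explicit purifications
\[
\ket{\Psi}=\sum_i\sqrt{p_i}\,\ket{\psi_i}\ket{i},\qquad \ket{\Phi}=\sum_i\sqrt{p_i}\,\ket{\varphi_i}\ket{i},
\]
with $\{\ket{i}\}$ an orthonormal ancilla basis. Their overlap is $\braket{\Psi}{\Phi}=\sum_i p_i|\braket{\psi_i}{\varphi_i}|$, and Uhlmann's maximization bounds $\sqrt F$ of the mixtures from below by this quantity. Rearranging gives (iii).

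The main obstacle is Step 3. It needs the mixed-state extension of $d_{\mathrm{B}}$, which the paper has not stated, and it relies on Uhlmann's theorem as an outside input. Steps 1 and 2 are routine once the phases are fixed.
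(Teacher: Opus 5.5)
Your proof is correct, but it differs from the paper's in two places.

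For (i) and (ii) you reverse the paper's logical order. The paper takes the triangle inequality for $d_{\mathrm{FS}}$ from Corollary~\ref{cor:fs_uniqueness}. It then obtains (ii) by taking cosines of $\arccos|\braket{\psi}{\chi}|\le a+b$ (in your notation). You instead prove $|\braket{\psi}{\chi}|\ge\cos(a+b)$ directly, by decomposing $\ket{\psi},\ket{\chi}$ relative to $\ket{\varphi}$, and then deduce the triangle inequality by applying $\arccos$. Your version is self-contained. The paper's relies on a corollary that establishes uniqueness; it never actually proves that $\arccos|\braket{\cdot}{\cdot}|$ satisfies the triangle inequality.

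For (iii) the paper invokes joint concavity of the squared fidelity $F$ and composes it with the convex decreasing map $x\mapsto 2(1-\sqrt{x})$. That lemma is false for squared fidelity. Take $\rho_1=\sigma_1=\ketbra{0}{0}$, $\rho_2=\ketbra{1}{1}$, $\sigma_2=\ketbra{2}{2}$, weights $\lambda,1-\lambda$, and mixtures $\bar\rho=\sum_i p_i\rho_i$, $\bar\sigma=\sum_i p_i\sigma_i$. Then $F(\bar\rho,\bar\sigma)=\lambda^2<\lambda=\sum_i p_iF(\rho_i,\sigma_i)$. So the paper's intermediate bound $d_{\mathrm{B}}^2(\bar\rho,\bar\sigma)\le 2\bigl(1-\sqrt{\sum_i p_iF(\rho_i,\sigma_i)}\bigr)$ fails here, even though (iii) itself holds with equality. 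What (iii) really needs is joint concavity of $\sqrt{F}$. Your explicit-purification argument establishes exactly this for pure components, so your route repairs the paper's step rather than merely offering an alternative.

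A minor correction to your closing remark: the paper does use $d_{\mathrm{B}}^2(\rho,\sigma)=2\bigl(1-\sqrt{F(\rho,\sigma)}\bigr)$ for mixed states in its proof. Your interpretation therefore matches its intent.
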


\begin{proof}
(i) {Fubini--Study triangle inequality:}  
By Corollary~\ref{cor:fs_uniqueness}, $d_{\mathrm{FS}}$ is a metric on $\Proj(\Hilbert)$, so it satisfies the triangle inequality.  
Therefore, for any pure states $\ket{\psi},\ket{\varphi},\ket{\chi}$,
\begin{multline}\notag
|d_{\mathrm{FS}}(\ket{\psi},\ket{\varphi}) - d_{\mathrm{FS}}(\ket{\chi},\ket{\psi})|
\\\le d_{\mathrm{FS}}(\ket{\chi},\ket{\varphi})
\le d_{\mathrm{FS}}(\ket{\chi},\ket{\psi}) + d_{\mathrm{FS}}(\ket{\psi},\ket{\varphi}).
\end{multline}
(ii) {Multiplicative fidelity inequality:}  
For pure states, $F(\ket{\psi},\ket{\varphi}) = |\braket{\psi}{\varphi}|^2$.  
Using the triangle inequality for the Fubini--Study distance from part (i), we can write
\[
\arccos|\braket{\psi}{\chi}| \le \arccos|\braket{\psi}{\varphi}| + \arccos|\braket{\varphi}{\chi}|.
\]
Applying trigonometric identities then gives
\begin{multline}\notag 
\sqrt{F(\ket{\psi},\ket{\chi})} \ge 
\sqrt{F(\ket{\psi},\ket{\varphi}) F(\ket{\varphi},\ket{\chi})} 
\\- \sqrt{(1-F(\ket{\psi},\ket{\varphi}))(1-F(\ket{\varphi},\ket{\chi}))},
\end{multline}
which is the stated multiplicative fidelity inequality.\\
(iii) {Convexity of squared Bures distance:}  
Recall that the squared Bures distance is
\[
d_{\mathrm{B}}^2(\rho,\sigma) = 2\big(1 - \sqrt{F(\rho,\sigma)}\big),
\]
and that fidelity $F(\rho,\sigma)$ is jointly concave \cite{uhlmann1976transition}:
\[
F\Big(\sum_i p_i \rho_i, \sum_i p_i \sigma_i\Big) \ge \sum_i p_i F(\rho_i,\sigma_i).
\]
Since the function $x \mapsto 2(1-\sqrt{x})$ is convex and decreasing on $[0,1]$, combining these facts gives the joint convexity of $d_{\mathrm{B}}^2$:
\[
d_{\mathrm{B}}^2\Big(\sum_i p_i \ketbra{\psi_i}{\psi_i}, \sum_i p_i \ketbra{\varphi_i}{\varphi_i}\Big) 
\le \sum_i p_i d_{\mathrm{B}}^2(\ket{\psi_i},\ket{\varphi_i}).
\]
\end{proof}



\begin{theorem}[Measurement vs. Bures Distance]
\label{thm:universal-corrected}
Let $\ket{\psi}, \ket{\varphi} \in \Hilbert$ be pure states, and let $\mathcal{M} = \{M_m\}$ be a POVM. Define the measurement-induced distance
\[
d_{\mathcal{M}}([\psi],[\varphi]) = \sqrt{\sum_m \big(\bra{\psi} M_m^\dagger M_m \ket{\psi} - \bra{\varphi} M_m^\dagger M_m \ket{\varphi}\big)^2 }.
\]
Then:
\begin{enumerate}[label=(\roman*)]
    \item For any POVM,
    \[
    d_{\mathcal{M}}([\psi],[\varphi]) \le 2 \sqrt{1 - |\braket{\psi}{\varphi}|^2} = 2 \sin(d_{\text{FS}}([\psi],[\varphi])).
    \]
    \item Equality is achieved for the optimal two-outcome POVM that maximally distinguishes $\ket{\psi}$ and $\ket{\varphi}$ (the Helstrom measurement).
    \item Consequently, the Bures distance satisfies
  \[
d_{\text{B}}([\psi],[\varphi]) \le \sqrt{2} \, d_{\mathcal{M}}([\psi],[\varphi])^{1/2},  
\]
for the optimal POVM.
\end{enumerate}
\end{theorem}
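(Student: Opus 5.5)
The plan is to reduce everything to the overlap $r=|\braket{\psi}{\varphi}|$ and then reuse Theorem~\ref{thm:measurement-fs}. Write $p_m=\bra{\psi}M_m^\dagger M_m\ket{\psi}$ and $q_m=\bra{\varphi}M_m^\dagger M_m\ket{\varphi}$, so that $d_{\mathcal M}=\|p-q\|_2$.

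\textbf{Part (i).} I will use the elementary norm comparison $\|p-q\|_2\le\|p-q\|_1$. Theorem~\ref{thm:measurement-fs} bounds the $\ell_1$ distance $\sum_m|p_m-q_m|$ by $2\sqrt{1-r^2}=2\sin d_{\mathrm{FS}}$ for every POVM. Chaining the two inequalities gives (i) with no further work. If a sharper constant is wanted, a better route exists. Since $\sum_m(p_m-q_m)=0$, the positive and negative parts of $p-q$ each have $\ell_1$ mass at most $\sqrt{1-r^2}$. Hence $\|p-q\|_2^2\le 2(1-r^2)$, so in fact $d_{\mathcal M}\le\sqrt2\sin d_{\mathrm{FS}}$.

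\textbf{Part (ii).} This is where I expect the main obstacle. The Helstrom measurement projects onto the positive and negative eigenspaces of $\rho-\sigma$, whose eigenvalues are $\pm\sqrt{1-r^2}$. It therefore gives $p-q=(\sqrt{1-r^2},-\sqrt{1-r^2})$. Its $\ell_1$ norm is $2\sqrt{1-r^2}$, saturating Theorem~\ref{thm:measurement-fs}. Its $\ell_2$ norm, however, is $\sqrt2\sqrt{1-r^2}$, not $2\sqrt{1-r^2}$. So I would first carry out this explicit two-outcome computation. The honest conclusion I expect is that "equality" in (ii) holds for the $\ell_1$ version of the distance. For the $\ell_2$ distance as defined, the Helstrom measurement attains the sharpened bound $\sqrt2\sin d_{\mathrm{FS}}$ from Part (i), and that bound is optimal. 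Nothing larger is possible, because Part (i) gives $\sqrt2\sin d_{\mathrm{FS}}$ for every POVM. I would state (ii) in this corrected form, as equality in the sharpened constant.

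\textbf{Part (iii).} For the Helstrom POVM, the computation in Part (ii) gives $d_{\mathcal M}=\sqrt2\sqrt{1-r^2}$. Squaring the claimed inequality, it suffices to show
\[
2(1-r)\le 2\,d_{\mathcal M}=2\sqrt2\sqrt{1-r^2}.
\]
For $r\in[0,1]$ this follows from the chain
\[
1-r\le 1-r^2\le\sqrt{1-r^2}\le\sqrt2\sqrt{1-r^2}.
\]
The first step holds because $1-r^2=(1-r)(1+r)$, and the second because $0\le 1-r^2\le1$. Taking square roots gives $d_{\mathrm B}\le\sqrt2\,d_{\mathcal M}^{1/2}$. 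Because of the slack in this chain, the bound holds whichever normalization of $d_{\mathcal M}$ at the Helstrom measurement is adopted, whether $\sqrt2\sin d_{\mathrm{FS}}$ or $2\sin d_{\mathrm{FS}}$.
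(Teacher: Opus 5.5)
Your proposal is sound, and on part (ii) it is right where the paper is not. For (i) the paper uses a different chain. It applies classical Fuchs--van de Graaf, $\|p-q\|_2\le\|p-q\|_1\le 2\sqrt{1-F(p,q)}$ with $F(p,q)=(\sum_m\sqrt{p_mq_m})^2$, and then monotonicity of fidelity under measurement, $F(p,q)\ge|\braket{\psi}{\varphi}|^2$. You instead reuse the trace-distance contractivity already proved in Theorem~\ref{thm:measurement-fs}, which needs fewer outside ingredients and gives the same constant $2$.

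Both chains waste the slack in $\|p-q\|_2\le\|p-q\|_1$. Your zero-sum splitting of $p-q$ recovers it and yields $d_{\mathcal M}\le\sqrt2\sin d_{\mathrm{FS}}$ for every POVM, and that sharpening is what settles (ii). The paper's proof only asserts that the Helstrom measurement saturates $2\sin d_{\mathrm{FS}}$. Your explicit computation $p-q=(\sqrt{1-r^2},-\sqrt{1-r^2})$ gives $\sqrt2\sin d_{\mathrm{FS}}$ instead. Your uniform bound then shows that no POVM can attain $2\sin d_{\mathrm{FS}}$ unless $r=1$. So (ii) is false as stated for the $\ell_2$ distance. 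Your corrected reading (saturation of the $\ell_1$ bound, or of the sharp constant $\sqrt2$ for $\ell_2$) is the right one.

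For (iii) the paper's argument also differs from yours, and it inherits the error. It writes $d_{\mathrm B}=\sqrt{2(1-\sqrt r)}$, which contradicts Definition~\ref{def:bures}. It then uses the intermediate inequality $1-\sqrt r\le d_{\mathcal M}/2$, which holds only under the false value $d_{\mathcal M}=2\sqrt{1-r^2}$; with the true Helstrom value it already fails at $r=0$. Your derivation uses the definition's formula and the correct Helstrom value. The chain $1-r\le 1-r^2\le\sqrt{1-r^2}\le\sqrt2\sqrt{1-r^2}$ proves the stated bound independently of (ii). Since $1-\sqrt r\le 1-r$ on $[0,1]$, your argument would also cover the paper's alternative Bures formula.
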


\begin{proof}
(i) Let $p_m = \bra{\psi} M_m^\dagger M_m \ket{\psi}$ and $q_m = \bra{\varphi} M_m^\dagger M_m \ket{\varphi}$ be the measurement probability vectors. Then $d_{\mathcal{M}} = \|p-q\|_2$. Using the inequality $\|p-q\|_2 \le \|p-q\|_1 \le 2 \sqrt{1 - F(p,q)}$, with $F(p,q) = (\sum_m \sqrt{p_m q_m})^2$, and applying the fidelity bound $F(p,q) \ge |\braket{\psi}{\varphi}|^2$, we get
\[
d_{\mathcal{M}} \le 2 \sqrt{1 - |\braket{\psi}{\varphi}|^2}.
\]

(ii) The upper bound is saturated for the Helstrom measurement projecting onto the span of $\ket{\psi}$ and $\ket{\varphi}$.

(iii) Using $d_{\text{B}}([\psi],[\varphi]) = \sqrt{2(1 - \sqrt{|\braket{\psi}{\varphi}|})}$ and the inequality $\sqrt{1 - \sqrt{r}} \le \sqrt{d_{\mathcal{M}}/2}$ for $r = |\braket{\psi}{\varphi}|$, we obtain
\[
d_{\text{B}}([\psi],[\varphi]) \le \sqrt{2} \, d_{\mathcal{M}}^{1/2},
\]
for the optimal measurement.
\end{proof}

\begin{theorem}[Entanglement–Geometry Complementarity]
\label{thm:uncertainty}
Let $\ket{\psi}, \ket{\varphi} \in \Hilbert_A \otimes \Hilbert_B$ be bipartite pure states, and let 
\[
E(\ket{\psi}) := S(\mathrm{Tr}_B \ketbra{\psi}{\psi}),
\] 
be the von Neumann entropy of the reduced state. Then
\[
d_{\text{FS}}^2(\ket{\psi},\ket{\varphi}) + \Big(\frac{|E(\ket{\psi}) - E(\ket{\varphi})|}{\log d}\Big)^2 \ge C(\ket{\varphi}) > 0,
\]
where $d = \min(\dim \Hilbert_A, \dim \Hilbert_B)$ and $C(\ket{\varphi})$ is a constant depending only on $\ket{\varphi}$.
\end{theorem}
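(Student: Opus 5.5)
The plan is to work directly with the left-hand side
\[
\Lambda(\psi,\varphi) := d_{\mathrm{FS}}^2(\ket{\psi},\ket{\varphi}) + \Big(\frac{|E(\ket{\psi}) - E(\ket{\varphi})|}{\log d}\Big)^2 .
\]
The first step is to bound it below by its geometric component. Both summands are nonnegative, so $\Lambda(\psi,\varphi) \ge d_{\mathrm{FS}}^2(\ket{\psi},\ket{\varphi})$. This is the same Pythagorean comparison as in part (i) of Theorem~\ref{thm:entanglement-bounds}; the normalization by $\log d$ only rescales the entropy axis.

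The second step uses non-degeneracy of the Fubini--Study metric, which follows from Corollary~\ref{cor:fs_uniqueness} and Theorem~\ref{thm:unitary_reduction}. It gives $d_{\mathrm{FS}}(\ket{\psi},\ket{\varphi}) = \arccos|\braket{\psi}{\varphi}| > 0$ whenever $[\psi]\neq[\varphi]$. Hence $\Lambda(\psi,\varphi) > 0$ for every such $\psi$. I will also record the entropy contribution. By the Fannes--Audenaert step in the proof of Theorem~\ref{thm:entanglement-continuity}, the normalized entropy difference is at most a multiple of $d_{\mathrm{FS}}$. So, up to constants, $\Lambda$ is comparable to $d_{\mathrm{FS}}^2$, and strict positivity genuinely comes from geometry. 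This is the complementarity: whenever two states have equal entanglement, all of their separation is carried by the Fubini--Study angle, and conversely.

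The third step is to produce the constant $C(\ket{\varphi})$, and this is where I expect the main obstacle. The natural candidate is $C(\ket{\varphi}) := \inf_{[\psi]\neq[\varphi]} \Lambda(\psi,\varphi)$. However, taking $\psi$ along a geodesic toward $\varphi$ forces both $d_{\mathrm{FS}}\to 0$ and, by the continuity estimate above, the entropy term $\to 0$. This infimum is therefore $0$, not a positive number. At $\psi=\varphi$ itself, $\Lambda = 0$.

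So the only way to obtain a strictly positive constant from the stated ingredients is to read the inequality as a statement about the fixed state $\varphi$ together with a given distinct $\psi$. In that reading one sets $C := d_{\mathrm{FS}}^2(\ket{\psi},\ket{\varphi})$, which is positive by step two and satisfies $\Lambda \ge C$ by step one. I would present the proof in this form and state explicitly that a positive constant uniform in $\psi$ cannot exist, by the geodesic-approach example. That example shows that the dependence of the constant on $\psi$ through its separation from $\varphi$ is unavoidable for the stated bound.
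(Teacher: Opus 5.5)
Your diagnosis is correct: the statement as written is false, so no proof of it can exist. Already $\ket{\psi}=\ket{\varphi}$ makes the left-hand side $0$. Even restricting to $[\psi]\neq[\varphi]$, the infimum is $0$. Along $\ket{\psi_\epsilon}=\cos\epsilon\,\ket{\varphi}+\sin\epsilon\,\ket{\varphi^\perp}$ one has $d_{\mathrm{FS}}=\epsilon$. Meanwhile $\tfrac12\|\rho_A-\sigma_A\|_1\le\sin\epsilon$, together with Fannes--Audenaert, forces the entropy term to $0$. Your fallback $C:=d_{\mathrm{FS}}^2(\ket{\psi},\ket{\varphi})$ is true, but it is only non-degeneracy of $d_{\mathrm{FS}}$ and is not a constant depending only on $\ket{\varphi}$. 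You are right to call it a reinterpretation rather than a proof. A nontrivial correct version would need a separation hypothesis such as $d_{\mathrm{FS}}(\ket{\psi},\ket{\varphi})\ge\delta$, where $C=\delta^2$ works and depends on $\delta$ rather than on $\ket{\varphi}$. Alternatively, it should be recast as an upper bound.

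The paper's proof does not get around this; the gap is in the paper, not in your reasoning. It uses exactly your step-two ingredients: the Fannes--Audenaert inequality and $\|\rho_A-\sigma_A\|_1\le 2\sin d_{\mathrm{FS}}$. Both are \emph{upper} bounds on $|E(\ket{\psi})-E(\ket{\varphi})|$, yet step (iii) asserts that ``combining'' them yields a \emph{lower} bound on the sum. No such inference is possible, and those same two inequalities are what show the left-hand side tends to $0$ as $\psi\to\varphi$.

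One small correction to your own step two. The continuity theorem you cite drops the binary-entropy term $h$ from Fannes--Audenaert. With $h$ restored, the inequality only gives a bound of the form $c_d\,d_{\mathrm{FS}}+h(\sin d_{\mathrm{FS}})=O\big(d_{\mathrm{FS}}\log(1/d_{\mathrm{FS}})\big)$ for small angles, not a constant multiple of $d_{\mathrm{FS}}$. This is harmless for your counterexample, which only needs the entropy term to vanish. The comparability $\Lambda\asymp d_{\mathrm{FS}}^2$ you mention does hold for pure states, but by a different route. Schmidt coefficients are $1$-Lipschitz in the state vector by Mirsky's inequality, and $s\mapsto -s^2\log s^2$ is $C^1$ on $[0,1]$; Fannes--Audenaert alone does not give it.
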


\begin{proof}
(i) Let $\rho_A = \mathrm{Tr}_B \ketbra{\psi}{\psi}$ and $\sigma_A = \mathrm{Tr}_B \ketbra{\varphi}{\varphi}$. By Fannes–Audenaert inequality,
\[
|E(\ket{\psi}) - E(\ket{\varphi})| \le \|\rho_A - \sigma_A\|_1 \, \log d + h(\|\rho_A - \sigma_A\|_1),
\]
where $h(\cdot)$ is the binary entropy.\\
(ii) For pure states, the trace distance is bounded by the Fubini–Study distance:
\[
\|\rho_A - \sigma_A\|_1 \le 2 \sqrt{1 - |\braket{\psi}{\varphi}|^2} = 2 \sin(d_{\text{FS}}(\ket{\psi},\ket{\varphi})).
\]
(iii) Combining these inequalities gives a lower bound of the form
\[
d_{\text{FS}}^2(\ket{\psi},\ket{\varphi}) + \Big(\frac{|E(\ket{\psi}) - E(\ket{\varphi})|}{\log d}\Big)^2 \ge C(\ket{\varphi}) > 0,
\]
for some positive constant \(C(\ket{\varphi})\) depending on $\ket{\varphi}$.
\end{proof}

\subsection{Dimensional Asymptotics and Concentration}

\begin{theorem}[Measure Concentration in Projective Space]
\label{thm:concentration}
Let $\ket{\psi},\ket{\varphi}$ be independent Haar-random pure states in a Hilbert space $\Hilbert$ of dimension $d$, and let $r = |\braket{\psi}{\varphi}|$. Then:
\begin{enumerate}[label=(\roman*)]
    \item The expected overlap satisfies
    \[
    \mathbb{E}[r] = \frac{\pi}{4\sqrt{d}} + O(d^{-3/2}).
    \]
    \item Concentration around orthogonality:
    \[
    \Pr\Big(|d_{\text{FS}}(\ket{\psi},\ket{\varphi}) - \pi/2| > \epsilon\Big) 
    \le 2 \exp\Big(- c\, d\, \epsilon^2 \Big),
    \]
    for some constant $c>0$ depending on the Lipschitz constant of $d_{\text{FS}}$.
    \item In the large-$d$ limit, $r = O(d^{-1/2})$ and $d_{\text{FS}}(\ket{\psi},\ket{\varphi}) \to \pi/2$ with high probability (almost orthogonality).
\end{enumerate}
\end{theorem}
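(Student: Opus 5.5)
By unitary invariance of the Haar measure together with Theorem~\ref{thm:overlap}, I will first reduce to a single random vector. Conditioning on $\ket{\psi}$ and applying a unitary $U$ with $U\ket{\psi}=\ket{1}$ leaves the law of $U\ket{\varphi}$ Haar-distributed. Hence $r \stackrel{d}{=} |\varphi_1|$, where $\varphi$ is uniform on the unit sphere $S^{2d-1}\subset\mathbb{C}^d$.

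The key distributional fact I will use is that $X:=|\varphi_1|^2\sim\mathrm{Beta}(1,d-1)$, with density $(d-1)(1-x)^{d-2}$ on $[0,1]$. To see this, write $\varphi=g/\|g\|$ with $g$ a standard complex Gaussian vector. Then $|g_1|^2$ and $\sum_{k\ge2}|g_k|^2$ are independent $\Gamma(1)$ and $\Gamma(d-1)$ variables. Everything below follows from this density.

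\textbf{Part (i).} I will compute exactly
\[
\mathbb{E}[r]=\mathbb{E}[X^{1/2}]=\frac{\Gamma(3/2)\,\Gamma(d)}{\Gamma(d+1/2)}=\frac{\sqrt{\pi}}{2}\,\frac{\Gamma(d)}{\Gamma(d+1/2)}.
\]
I will then apply the standard expansion $\Gamma(d)/\Gamma(d+1/2)=d^{-1/2}\bigl(1+\tfrac{1}{8d}+O(d^{-2})\bigr)$, which gives
\[
\mathbb{E}[r]=\frac{\sqrt{\pi}}{2\sqrt d}+O(d^{-3/2}).
\]
This is the step I expect to be the main obstacle, and the difficulty is the constant rather than the analysis. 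The exact computation yields $\sqrt{\pi}/2\approx0.886$, not $\pi/4\approx0.785$. The real-sphere analogue does not produce $\pi/4$ either: it gives $\sqrt{2/\pi}\,d^{-1/2}$. I will therefore check the normalization carefully. If the discrepancy persists, I will prove the statement with the constant $\sqrt{\pi}/2$, and note that only the $d^{-1/2}$ scaling is used in parts (ii)--(iii).

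\textbf{Part (ii).} Since $d_{\mathrm{FS}}\le\pi/2$ always, the event $|d_{\mathrm{FS}}-\pi/2|>\epsilon$ is the same as $\arccos r<\pi/2-\epsilon$, that is, $r>\sin\epsilon$. From the Beta density the tail is explicit:
\[
\Pr(r>t)=\Pr(X>t^2)=(1-t^2)^{d-1}\le e^{-(d-1)t^2}.
\]
Taking $t=\sin\epsilon\ge 2\epsilon/\pi$ for $\epsilon\in[0,\pi/2]$ gives the bound $\exp\bigl(-\tfrac{4}{\pi^2}(d-1)\epsilon^2\bigr)$. Using $d-1\ge d/2$ for $d\ge2$, this is at most $2\exp(-c\,d\,\epsilon^2)$ with $c=2/\pi^2$.

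As a cross-check I will also mention the alternative route via Lévy's lemma on $S^{2d-1}$. The map $\varphi\mapsto\arccos|\varphi_1|$ is $1$-Lipschitz for the Fubini--Study distance, and the Fubini--Study distance is dominated by the chordal distance, which connects this to the Lipschitz-constant phrasing in the statement. The explicit Beta computation is sharper and self-contained, however, so it will be the main argument.

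\textbf{Part (iii).} This follows directly. Part (i), or Markov's inequality, gives $r=O_P(d^{-1/2})$. Taking $t=\lambda d^{-1/2}$ in the tail bound gives $\Pr(r>\lambda d^{-1/2})\le e^{-\lambda^2(d-1)/d}$, which is uniform in $d$. Part (ii) with any fixed $\epsilon$, or with $\epsilon_d\to0$ such that $d\,\epsilon_d^2\to\infty$, gives $d_{\mathrm{FS}}\to\pi/2$ with probability tending to one.
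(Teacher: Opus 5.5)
The paper states this theorem \emph{without any proof}; it is followed immediately by the machine-learning corollary. There is therefore no argument of the paper's to compare yours against. The phrase ``depending on the Lipschitz constant of $d_{\mathrm{FS}}$'' suggests the intended route was L\'evy's lemma on $S^{2d-1}$.

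Your route through the exact law $|\langle\psi|\varphi\rangle|^2\sim\mathrm{Beta}(1,d-1)$ is correct, and it is stronger than that intended route. It gives the exact tail $\Pr(|d_{\mathrm{FS}}-\pi/2|>\epsilon)=\cos^{2(d-1)}\epsilon$ for $0\le\epsilon<\pi/2$. From this you get (ii) with the explicit universal constant $c=2/\pi^2$; the prefactor $2$ even covers the trivial case $d=1$. A L\'evy-type argument would only concentrate $d_{\mathrm{FS}}$ around its median or mean, and would then need (i) to shift the center to $\pi/2$; your argument needs none of that. Part (iii) follows exactly as you describe.

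On (i), do not look for a normalization that rescues $\pi/4$: your computation is right, and the statement as written is false. Indeed
\[
\sqrt{d}\,\mathbb{E}[r]=\frac{\sqrt{\pi}}{2}\,\frac{\sqrt{d}\,\Gamma(d)}{\Gamma(d+1/2)}\longrightarrow\frac{\sqrt{\pi}}{2}\approx 0.886 .
\]
Equivalently, $d\,r^2\Rightarrow\mathrm{Exp}(1)$ and $\mathbb{E}\sqrt{\mathrm{Exp}(1)}=\Gamma(3/2)$. By contrast, $\mathbb{E}[r]=\frac{\pi}{4\sqrt d}+O(d^{-3/2})$ would force this limit to equal $\pi/4=\Gamma(3/2)^2\approx 0.785$. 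So state (i) with the constant $\sqrt{\pi}/2$, and say explicitly that the given constant is incorrect. The same error propagates to the paper's $d=1000$ example, where the value $0.025$ should be about $0.028$; the qualitative conclusion of near-orthogonality is unaffected.

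There is one small slip in your L\'evy cross-check. The Fubini--Study distance is \emph{not} dominated by the chordal distance. For example, if $\langle\varphi|\varphi'\rangle=\cos\alpha>0$, then $d_{\mathrm{FS}}=\alpha>2\sin(\alpha/2)=\|\varphi-\varphi'\|$. What is true is
\[
d_{\mathrm{FS}}([\varphi],[\varphi'])\le\arccos\operatorname{Re}\langle\varphi|\varphi'\rangle\le\frac{\pi}{2}\,\|\varphi-\varphi'\| .
\]
So $\varphi\mapsto\arccos|\varphi_1|$ is $1$-Lipschitz for the great-circle metric on $S^{2d-1}$ and $\frac{\pi}{2}$-Lipschitz for the Euclidean one. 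This only affects the constant in the cross-check, not your main proof.
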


\begin{corollary}[Implications for Quantum Machine Learning]
Let $\phi: \mathcal{X} \to \Proj(\Hilbert)$ be a quantum feature map with large $\dim(\Hilbert)$. Then, for most pairs $(\mathbf{x},\mathbf{y})$, the fidelity kernel
\[
K(\mathbf{x},\mathbf{y}) = |\braket{\phi(\mathbf{x})}{\phi(\mathbf{y})}|^2,
\]
is concentrated near zero. Entanglement-aware distances $d_E$ provide richer structure and can help mitigate such concentration effects.
\end{corollary}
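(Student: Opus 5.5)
The plan is to reduce the corollary to Theorem~\ref{thm:concentration}. The reduction needs one modelling assumption, which the statement leaves implicit, and we make it explicit. ``Most pairs $(\mathbf{x},\mathbf{y})$'' is read with respect to a distribution on $\mathcal{X}\times\mathcal{X}$ whose push-forward under $\phi\times\phi$ is (approximately) the product Haar measure on $\Proj(\Hilbert)\times\Proj(\Hilbert)$. It suffices that the ensemble $\{\phi(\mathbf{x})\}$ forms a state $2$-design, which is the usual ``highly expressive feature map'' regime. Without such an assumption the claim is false: a constant feature map gives $K\equiv 1$. The first step is therefore to fix this hypothesis.

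Under this assumption we compute the first moment of the kernel directly, rather than going through $d_{\mathrm{FS}}$. Since $K=r^2$ and $\ket{\varphi}$ is Haar-distributed independently of $\ket{\psi}$, unitary invariance gives
\[
\mathbb{E}[K]=\mathbb{E}\,\bra{\psi}\Big(\mathbb{E}_\varphi\ketbra{\varphi}{\varphi}\Big)\ket{\psi}=\bra{\psi}\tfrac{\mathbb{I}}{d}\ket{\psi}=\tfrac1d .
\]
This uses only first moments, so it already holds for a $1$-design in the second argument. Markov's inequality then gives $\Pr(K>\delta)\le 1/(d\delta)$ for every $\delta>0$, so $K\to 0$ in probability as $d=\dim\Hilbert\to\infty$. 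For an exponential rather than polynomial tail, we instead invoke part (ii) of Theorem~\ref{thm:concentration}. If $|d_{\mathrm{FS}}-\pi/2|\le\epsilon$, then
\[
K=\cos^2 d_{\mathrm{FS}}\le \sin^2\epsilon\le\epsilon^2 ,
\]
and hence $\Pr(K>\epsilon^2)\le 2e^{-c\,d\,\epsilon^2}$. This is the concentration-near-zero statement, and it is consistent with part (iii) of that theorem, $r=O(d^{-1/2})$.

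The second sentence of the corollary is the main obstacle, because as written it is qualitative and cannot be true in full generality. For genuinely Haar-random bipartite states, the entanglement entropy itself concentrates near its maximum (Page's theorem), so $|E(\psi)-E(\varphi)|$ also becomes small. In that regime $d_E$ concentrates near $\pi/2$ just as $d_{\mathrm{FS}}$ does. The plan is therefore to prove only the structural part rigorously. By Theorem~\ref{thm:entanglement-bounds}(i),
\[
d_{\mathrm{FS}}\le d_E\le d_{\mathrm{FS}}+|E(\psi)-E(\varphi)| ,
\]
so $d_E$ contains an additional coordinate that is invariant under local unitaries and need not concentrate when the feature map is structured, i.e.\ not Haar-like. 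We would exhibit a simple family illustrating this. Take Schmidt-diagonal encodings $\phi(x)=\sum_i\sqrt{\lambda_i(x)}\,\ket{ii}$ with $x$-dependent spectra. Then $E(\phi(x))$ ranges over $[0,\log d]$ and stays spread out while the overlaps remain small. The ``mitigation'' claim would then be stated in this restricted sense, flagging that a quantitative version requires assumptions on the feature map beyond those of the corollary.
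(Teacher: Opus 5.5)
\textbf{Comparison with the paper.} The paper gives no proof of this corollary. It is meant to follow from Theorem~\ref{thm:concentration} via $K=r^2=\cos^2 d_{\mathrm{FS}}$, which is exactly your exponential-tail step, so for the first sentence you are on the intended route. Your additions improve on it:
\begin{enumerate}[label=(\roman*)]
\item The explicit distributional hypothesis. Your constant-map example shows the corollary is false without one.
\item The direct computation $\mathbb{E}[K]=1/d$ followed by Markov. This needs only independence and a $1$-design in one argument, and never uses part~(i) of Theorem~\ref{thm:concentration}.
\end{enumerate}
Avoiding part~(i) is an advantage. For Haar states $K\sim\mathrm{Beta}(1,d-1)$, so $\mathbb{E}[r]=\Gamma(3/2)\Gamma(d)/\Gamma(d+\tfrac12)\approx\sqrt{\pi}/(2\sqrt{d})$, not $\pi/(4\sqrt{d})$. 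The same law gives $\Pr(K>t)=(1-t)^{d-1}\le e^{-(d-1)t}$. That is an exponential tail obtained directly, without relying on part~(ii), which the paper states without proof.

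One imprecision: a $2$-design does not license the exponential tail. It gives $\mathbb{E}[K^2]=2/(d(d+1))$ and hence only $\Pr(K>\delta)\le 2/(d(d+1)\delta^2)$. Your step through part~(ii) therefore needs (approximately) Haar statistics, not merely $2$-design expressivity.

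\textbf{The second sentence.} Your diagnosis via Page's theorem is right, and the paper offers no argument at all. However, the family you propose does not do what you claim. With nonnegative amplitudes in a fixed Schmidt basis, $\braket{\phi(x)}{\phi(y)}=\sum_i\sqrt{\lambda_i(x)\lambda_i(y)}$ is the Bhattacharyya coefficient of the two spectra. It is close to $1$ for any two spectra of entropy close to $\log n$, since both are then close to uniform. More generally it is of order one whenever the supports overlap substantially. So small overlaps are an extra constraint on $\lambda(x)$ that you have not imposed, and it is in tension with spreading $E$ over $[0,\log n]$.

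A clean fix is to dress the encoding with independent Haar local unitaries,
\[
\phi(x)=(U_x\otimes V_x)\sum_i\sqrt{\lambda_i(x)}\ket{ii},
\qquad n=\dim\Hilbert_A=\dim\Hilbert_B .
\]
The local twirl sends every operator $X$ to $\operatorname{Tr}(X)\,\mathbb{I}/n^2$, so your own first-moment argument gives $\mathbb{E}[K]=1/n^2=1/\dim\Hilbert$. Meanwhile $E(\phi(x))$ is the Shannon entropy of $\lambda(x)$, which local unitaries leave untouched, so it can be distributed freely over $[0,\log n]$. This gives a family in which $d_{\mathrm{FS}}$ concentrates near $\pi/2$ while the entropy coordinate of $d_E$ stays spread, which is the restricted sense of mitigation you intended.
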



\begin{theorem}[Summary Characterization]
\label{thm:summary}
The main quantum-inspired distances satisfy:
\begin{enumerate}[label=(\roman*)]
    \item {Fubini--Study distance $d_{\text{FS}}$:} the unique (up to scale) unitarily invariant metric on $\Proj(\Hilbert)$, depending only on overlaps (Axioms~\ref{ax:ray} and~\ref{ax:unitary_invariance}).
    \item {Bures distance $d_{\text{B}}$:} a Riemannian metric induced by the quantum Fisher information; for pure states, it is a monotone function of $d_{\text{FS}}$.
    \item {Entanglement-aware distance $d_E$:} a Euclidean combination of geometric (Fubini--Study) and entanglement information; satisfies entanglement-awareness (Axiom~\ref{ax:entanglement_awareness}) while preserving the metric structure.
    \item {Measurement-induced distance $d_{\mathcal{M}}$:} a measurement-dependent pseudometric capturing operational distinguishability (Axiom~\ref{ax:measurement_contextuality}).
\end{enumerate}
\end{theorem}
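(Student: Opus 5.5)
The statement to prove is the Summary Characterization, Theorem~\ref{thm:summary}. It collects four claims, each of which reduces to results already established. I will prove it item by item.

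\textbf{Item (i).} I will combine Theorem~\ref{thm:overlap} with Corollary~\ref{cor:fs_uniqueness}.
\begin{enumerate}[label=(\alph*)]
\item By Theorem~\ref{thm:overlap}, any distance satisfying Axioms~\ref{ax:ray} and~\ref{ax:unitary_invariance} has the form $d=f(|\braket{\psi}{\varphi}|)$. This gives the ``depending only on overlaps'' part.
\item Theorem~\ref{thm:unitary_reduction} rewrites this as $d=g(d_{\mathrm{FS}})$.
\item The uniqueness ``up to scale'' is where care is needed. Unitary invariance alone does \emph{not} force linearity of $g$: Bures is also unitarily invariant. So uniqueness must be read in the sense of Corollary~\ref{cor:fs_uniqueness}.
\item Under the geodesic additivity of Axiom~\ref{ax:geodesic}, Theorem~\ref{thm:additivity_rigidity} gives $g(\theta)=c\theta$.
\item Alternatively, among Riemannian (length) metrics, Theorem~\ref{thm:riemannian_rigidity} gives $g=c\,g_{\mathrm{FS}}$.
\end{enumerate}
I will state explicitly that (i) is understood with one of these two rigidity hypotheses. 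This is the main obstacle: making the word ``unique'' precise without overclaiming. I would phrase it as the unique unitarily invariant metric that is either geodesically additive or Riemannian.

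\textbf{Item (ii).} The Bures distance is a monotone function of $d_{\mathrm{FS}}$ by the Corollary on Bures as a radial reparameterization, namely $d_{\mathrm{B}}=2\sin(d_{\mathrm{FS}}/2)$.

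For the ``induced by quantum Fisher information'' claim, I will invoke Theorem~\ref{thm:metrology}: the infinitesimal expansion $d_{\mathrm B}^2=\tfrac14F_Q\,d\theta^2+O(d\theta^3)$ identifies the line element of $d_{\mathrm B}$ with the quantum Fisher metric.

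By Theorem~\ref{thm:riemannian_rigidity}, this Riemannian line element must be a multiple of $g_{\mathrm{FS}}$. That is consistent, because $2\sin(\theta/2)\sim\theta$ as $\theta\to0$.

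\textbf{Item (iii).} Everything needed for $d_E$ is in Theorem~\ref{thm:entanglement-bounds}.
\begin{enumerate}[label=(\alph*)]
\item Part (ii) of that theorem gives the triangle inequality, via Minkowski's inequality.
\item Nonnegativity and symmetry are immediate from the definition.
\item Non-degeneracy follows from $d_E\ge d_{\mathrm{FS}}$ together with non-degeneracy of $d_{\mathrm{FS}}$.
\item Entanglement awareness comes from part (iii) of that theorem (the Bell-state example), or from the earlier Entanglement-Aware Distances theorem.
\end{enumerate}

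\textbf{Item (iv).} This is exactly the Measurement Pseudometrics theorem. It shows $d_{\mathcal M}=\|p_{\mathcal M}(\psi)-p_{\mathcal M}(\varphi)\|_2$ satisfies Axiom~\ref{ax:measurement_contextuality} and is a pseudometric.

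Theorems~\ref{thm:measurement-fs} and~\ref{thm:universal-corrected} add the operational bound $d_{\mathcal M}\le 2\sin d_{\mathrm{FS}}$, which supports the ``operational distinguishability'' interpretation.

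The whole proof is a bookkeeping assembly of these citations. The only substantive point is the qualification of uniqueness in (i).
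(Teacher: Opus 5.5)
Your proposal follows the paper's own strategy, an item-by-item assembly of earlier results. For (ii) and (iv) it rests on the same ingredients, Theorem~\ref{thm:metrology} and the Measurement Pseudometrics theorem. You also add the radial-reparameterization corollary for the monotonicity clause, which the paper leaves implicit.

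The real difference is in (i). The paper derives uniqueness from Theorem~\ref{thm:overlap} and Theorem~\ref{thm:classification}. But the latter describes a whole family of admissible $f$: its Canonical Examples corollary places $f_{\mathrm{Bures}}$ there beside $f_{\mathrm{FS}}$. So that chain only delivers the ``depends only on overlaps'' half. Your route through Theorem~\ref{thm:additivity_rigidity} or Theorem~\ref{thm:riemannian_rigidity}, i.e.\ Corollary~\ref{cor:fs_uniqueness}, with the rigidity hypothesis made explicit, is what actually supports uniqueness. The qualification is unavoidable: read as a statement about distance functions, (i) is refuted by $d_{\mathrm B}$.

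If you use the additivity branch, state explicitly that you read Axiom~\ref{ax:geodesic} the way Theorem~\ref{thm:additivity_rigidity} does, namely that Fubini--Study angles add along a geodesic segment. For consecutive angles $\theta_1,\theta_2$, the product condition printed in the axiom becomes $\cos(\theta_1+\theta_2)=\cos\theta_1\cos\theta_2$, which forces $\theta_1\theta_2=0$.

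In (iii) the paper cites Theorem~\ref{thm:entanglement-bounds} together with Theorem~\ref{thm:uncertainty}. Your direct check of the metric axioms, with non-degeneracy from $d_E\ge d_{\mathrm{FS}}$, is cleaner. The complementarity theorem is not about metric axioms, and as stated it fails at $\psi=\varphi$, where its left side vanishes.

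Likewise, reading ``Riemannian'' in (ii) at the level of the line element, as you do, is the only tenable reading. The distance $2\sin(d_{\mathrm{FS}}/2)$ is chordal, and its induced length metric is $d_{\mathrm{FS}}$.
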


\begin{proof}
All items follow directly from the preceding results in this work:

\begin{itemize}
    \item[(i)] The uniqueness of $d_{\text{FS}}$ under projective invariance and unitary covariance follows from Theorems~\ref{thm:overlap} and~\ref{thm:classification}.
    \item[(ii)] The Bures distance and its relation to the quantum Fisher information are established in Theorem~\ref{thm:metrology}.
    \item[(iii)] The entanglement-aware distance $d_E$, constructed as $d_E^2 = d_{\text{FS}}^2 + |E(\psi)-E(\varphi)|^2$, satisfies Axiom~\ref{ax:entanglement_awareness} and retains metric properties, as shown in Theorems~\ref{thm:entanglement-bounds} and~\ref{thm:uncertainty}.
    \item[(iv)] Measurement-dependent pseudometrics $d_{\mathcal{M}}$ are defined in Section~4; they satisfy Axiom~\ref{ax:measurement_contextuality}, but are generally only pseudometrics because distinct states can yield identical measurement statistics.
\end{itemize}

Hence, the overall characterization is an immediate consequence of these results.
\end{proof}

\subsection{Examples}
\begin{example}[Bell States and Entanglement-Sensitive Distance]
See Theorem~\ref{thm:summary}: Consider the two-qubit Bell states
\[
\ket{\Phi^+} = \frac{1}{\sqrt{2}}(\ket{00} + \ket{11}), \quad 
\ket{\Phi^-} = \frac{1}{\sqrt{2}}(\ket{00} - \ket{11}).
\]
Their reduced density matrices are identical: 
$\rho_A = \rho_B = \frac{1}{2}\mathbb{I}_2$. The Fubini--Study distance between them is
\[
d_{\text{FS}}(\ket{\Phi^+},\ket{\Phi^-}) = \arccos|\braket{\Phi^+}{\Phi^-}| = \arccos 0 = \frac{\pi}{2}.
\]
Since both states have maximal entanglement entropy $E(\ket{\Phi^\pm}) = \log 2$, the entanglement term vanishes:
\[
|E(\ket{\Phi^+}) - E(\ket{\Phi^-})| = 0.
\]
Hence, the entanglement-sensitive distance is
\[
d_E(\ket{\Phi^+},\ket{\Phi^-}) = \sqrt{d_{\text{FS}}^2 + 0^2} = \frac{\pi}{2},
\]
demonstrating that $d_E$ correctly captures the global difference even when reduced states are identical.
\end{example}
\begin{example}[Distinguishing Non-Orthogonal Qubit States]
See Theorem~\ref{thm:discrimination}:
Let
\[
\ket{\psi} = \ket{0}, \quad \ket{\varphi} = \cos\theta \ket{0} + \sin\theta \ket{1},
\]
with $\theta \in (0, \pi/2)$. The Fubini--Study distance is
\[
d_{\text{FS}}(\ket{\psi},\ket{\varphi}) = \arccos|\braket{\psi}{\varphi}| = \arccos(\cos\theta) = \theta.
\]
The optimal success probability for distinguishing these states is
\[
P_{\text{opt}} = \frac{1}{2}\left(1 + \sin(d_{\text{FS}})\right) = \frac{1}{2}(1 + \sin\theta).
\]
For example, if $\theta = \pi/6$, then $P_{\text{opt}} = \frac{1}{2}(1 + 1/2) = 0.75$.
\end{example}
\begin{example}[Two-Qubit Reference State]
See Theorem~\ref{thm:uncertainty}:
Let the reference state be the Bell state
\[
\ket{\varphi} = \frac{1}{\sqrt{2}}(\ket{00} + \ket{11}),
\]
and let
\(
\ket{\psi} = \ket{01}.
\)
We have
\[
d_{\text{FS}}(\ket{\psi},\ket{\varphi}) = \arccos|\braket{\psi}{\varphi}| = \arccos 0 = \frac{\pi}{2},
\]
and
\[
|E(\ket{\psi}) - E(\ket{\varphi})| = |0 - \log 2| = \log 2.
\]
The normalized entanglement-geometric combination is
\[
d_{\text{FS}}^2 + \left(\frac{|E(\ket{\psi}) - E(\ket{\varphi})|}{\log 2}\right)^2 
= \left(\frac{\pi}{2}\right)^2 + 1 > 0,
\]
illustrating the complementarity: the state is far in both Hilbert-space distance and entanglement difference.  
Even when one component (entanglement) is normalized, the total distance captures both geometry and entanglement.
\end{example}
\begin{example}[High-Dimensional Random States]
See Theorem~\ref{thm:concentration}:
Consider $\Hilbert \cong \mathbb{C}^{1000}$ and two independent Haar-random states $\ket{\psi}, \ket{\varphi}$. The expected overlap is
\[
\mathbb{E}[|\braket{\psi}{\varphi}|] \sim \frac{\pi}{4\sqrt{1000}} \approx 0.025,
\]
so the Fubini--Study distance is concentrated near
\[
d_{\text{FS}} \sim \arccos 0.025 \approx 1.545 \approx \frac{\pi}{2}.
\]
This demonstrates that in high dimensions, almost all random states are nearly orthogonal, highlighting the “concentration of measure” phenomenon relevant for quantum machine learning embeddings.
\end{example}
\begin{example}[Parameter Estimation in a Qubit]
See Theorem~\ref{thm:metrology}:
Consider a qubit family
\[
\ket{\psi_\theta} = \cos\frac{\theta}{2}\ket{0} + \sin\frac{\theta}{2}\ket{1}.
\]
The Bures distance for an infinitesimal change $d\theta$ is
\[
d_{\text{B}}(\ket{\psi_\theta}, \ket{\psi_{\theta + d\theta}})^2 \approx \frac{1}{4} F_Q(\theta) d\theta^2.
\]
Here, the quantum Fisher information is $F_Q(\theta) = 1$, so
\(
d_{\text{B}}^2 \approx \frac{1}{4} d\theta^2,
\)
illustrating how the Bures metric encodes metrological sensitivity for parameter estimation.
\end{example}
\section{Relationships to Existing Mathematical Frameworks}\label{sec:rel}
Having developed an axiomatic characterization of quantum-inspired distance metric functions, we now situate our framework within broader mathematical contexts, highlighting connections to information geometry, differential geometry of quantum state spaces, and classical axiomatic distance theory.
\subsection{Connection to Information Geometry}
The geometric distances identified in Section 4 naturally correspond to structures in information geometry. The Fubini--Study metric on projective Hilbert space serves as a quantum analogue of the Fisher--Rao metric on probability simplices \cite{amari2000methods}. Similarly, the Bures distance, a monotone Riemannian metric on density matrices \cite{uhlmann1976transition}, can be interpreted as the quantum Fisher information metric, governing infinitesimal distinguishability of states \cite{braunstein1994statistical}.

Our axiomatic framework formalizes these connections: Ray well-definedness (Axiom~\ref{ax:ray}) and unitary invariance (Axiom~\ref{ax:unitary_invariance}) ensure that distances respect the symmetries underlying the differential geometry of quantum states, while non-degeneracy and superposition sensitivity (Axiom~\ref{ax:nondegeneracy}) capture phase-dependent distinctions that are inherently quantum.

\subsection{Relation to Metric Structures on Quantum State Spaces}
The uniqueness of the Fubini--Study metric (Corollary~
\ref{cor:fs_uniqueness}) aligns with classical results in complex projective geometry \cite{kobayashi1996foundations}: it is the unique unitarily invariant Kähler metric up to scale, providing canonical geodesic structure on $\Proj(\Hilbert)$. Likewise, the Bures and trace distances correspond to monotone Riemannian metrics \cite{petz1996monotone}, showing that our axioms recover and organize known classifications of contractive metrics on quantum state spaces under a unified operationally motivated framework.

\subsection{Unification of Existing Distance Measures}
Within the axiomatic framework, previously disparate distance measures acquire a coherent interpretation. 
Bures distance preserves projective and unitary invariance and superposition sensitivity, being a monotone function of $d_{\text{FS}}$. Measurement-induced distances capture operational distinguishability through specific POVMs but may lack superposition sensitivity. Entanglement-sensitive distances satisfy the entanglement-awareness axiom, revealing correlations beyond single-system geometry.\\
This unification demonstrates that the axioms are not arbitrary but provide a natural framework subsuming and clarifying relationships among known quantum distance measures.
\subsection{Comparison with Classical Axiomatic Frameworks}
Classical metric theories—such as abstract metric spaces \cite{hausdorff1918dimension} and probability distances \cite{cover2006elements}—share structural features with our framework, including symmetry, positivity, and the triangle inequality. However, they lack the quantum-specific principles encoded in Axioms~\ref{ax:ray}--\ref{ax:measurement_contextuality}.

In particular, classical metrics depend only on probability magnitudes and are insensitive to relative phase, contrasting with superposition sensitivity (Axiom~\ref{ax:superposition}), which detects differences between coherent superpositions even when classical distributions coincide. Similarly, classical metrics cannot capture non-classical correlations; entanglement awareness (Axiom~\ref{ax:entanglement_awareness}) formalizes sensitivity to global correlations invisible at the subsystem level. Measurement contextuality (Axiom~\ref{ax:measurement_contextuality}) further distinguishes quantum distances, allowing operational definitions to depend on the chosen POVM, unlike classical distance metric functions.

Finally, ray well-definedness (Axiom~\ref{ax:ray}) and unitary invariance (Axiom~\ref{ax:unitary_invariance}) encode physically meaningful symmetries—global phase irrelevance and basis-change invariance—absent in classical metric theory. Together, these distinctions clarify the uniquely quantum features captured by our axiomatic framework, providing a systematic bridge between classical metric theory and quantum information geometry.

\section{Future Directions}\label{sec:open}
\noindent In summary, we have established a rigorous axiomatic foundation for quantum-inspired distances on projective Hilbert spaces. Our framework identifies five key quantum principles, characterizes the uniqueness of the Fubini--Study metric, classifies existing distances according to axiomatic compliance, and provides operational interpretations linking abstract metrics to tasks such as state discrimination and quantum metrology. This unified perspective clarifies the structural consequences of quantum principles, reveals phenomena like entanglement-geometry complementarity, and lays the groundwork for mixed-state generalizations, computational analysis, and new distance measures tailored to quantum applications.\\

\noindent  While comprehensive, the current axioms have limitations. They are restricted to pure states; extending to mixed states requires reformulating superposition sensitivity and projective invariance. Infinite-dimensional Hilbert spaces introduce convergence and completeness challenges. The entanglement axiom could also be refined to quantify the magnitude of entanglement, not just its presence.

\medskip
\noindent Natural directions for future work include the following open problems:

\begin{enumerate}[label=(\roman*)]
\item {Characterization of admissible functions:} Determine all overlap functions $f: [0,1] \to \mathbb{R}_{\ge 0}$ satisfying the triangle inequality and axioms.
\item {Extension to mixed states:} Reformulate axioms for density matrices, replacing projective invariance with unitary invariance and redefining superposition sensitivity.
\item {Entanglement-based distances:} Determine under what conditions the function
\[
d_E(\ket{\psi},\ket{\varphi}) = |E(\ket{\psi}) - E(\ket{\varphi})|,
\]
satisfies all metric axioms, positivity, symmetry, and the triangle inequality, so that it constitutes a valid distance measure.

\item {Computational complexity:} Determine the query complexity for verifying which axioms a candidate distance satisfies.
\item {Unique geodesic characterization:} Explore whether the Fubini--Study metric is the only Riemannian metric satisfying Axioms~\ref{ax:ray} and~\ref{ax:unitary_invariance}, and characterize the full moduli space otherwise.

\end{enumerate}

\section{Acknowledgment}
The author acknowledges the start-up grant (ASE$016~110000~340516$) provided by the College of Science and Engineering (CoSE), Idaho State University.

 \bibliographystyle{ieeetr}
\bibliography{refs}


\end{document}